\newcommand{\ar}{\renewcommand{\arraystretch}{1}} 
\DeclareMathAlphabet{\bb}{U}{msb}{m}{n} \gdef\C{\bb C} \gdef\dZ{\bb
Z}   \gdef\dS{\bb S} \gdef\R{\bb R}
\gdef\K{\bb K} \gdef\BH{\bb H} \gdef\F{\bb F} 
\DeclareMathOperator{\End}{End} \DeclareMathOperator{\spin}{{\bf
Spin}}
\DeclareMathOperator{\Ker}{Ker} 
\DeclareMathOperator{\Sym}{Sym} 
 \DeclareMathOperator{\Mat}{Mat}
\DeclareMathOperator{\Tr}{Tr} \DeclareMathOperator{\SL}{SL}
\DeclareMathOperator{\SO}{SO}\DeclareMathOperator{\SU}{SU}
\newcommand{\bcirc}{\raisebox{0.5mm}{$\scriptstyle\bigcirc$}}
\newcommand{\re}{\mbox{\rm Re}\,}
\newcommand{\im}{\mbox{\rm Im}\,}
\newcommand{\bi}{{\bf i}}
\newcommand{\cA}{\mathcal{A}}
\newcommand{\bsH}{{\boldsymbol{\sf H}}}
\newcommand{\bsZ}{{\boldsymbol{\sf Z}}}
\newcommand{\bZ}{{\bf Z}}
\newcommand{\cl}{C\kern -0.2em \ell}
\newcommand{\p}{\prime}
\newcommand{\e}{\mbox{\bf e}}
\newtheorem{thm}{Theorem}
\begin{document}
\title{Spinor Structure and Modulo 8 Periodicity}
\author{V.~V. Varlamov\thanks{Siberian State Industrial University,
Kirova 42, Novokuznetsk 654007, Russia, e-mail:
vadim.varlamov@mail.ru}}
\date{}
\maketitle
\begin{abstract}
Spinor structure is understood as a totality of tensor products of biquaternion algebras, and the each tensor product is associated with an irreducible representation of the Lorentz group. A so-defined algebraic structure allows one to apply modulo 8 periodicity of Clifford algebras on the system of real and quaternionic representations of the Lorentz group. It is shown that modulo 8 periodic action of the Brauer-Wall group generates modulo 2 periodic relations on the system of representations, and all the totality of representations under this action forms a self-similar fractal structure. Some relations between spinors, twistors and qubits are discussed in the context of quantum information and decoherence theory.
\end{abstract}
{\bf Keywords}: spinor structure, Clifford algebras, spinors, twistors, qubits
\section{Introduction}
A geometric description of space-time continuum via the representation of time by an imaginary coordinate of a four-dimensional pseudo-Euclidean space, which was first given by Minkowski, can be considered as an original point for subsequent programme of \emph{geometrization} of physics initiated in the works of Poincar\'{e}, Einstein, Klein (theories of special and general relativity), Weyl, Kaluza, Eddington (multidimensional generalizations of relativistic theories, unification of gravitation and electromagnetism), Wheeler, Connes and other (geometrodynamics, noncommutative geometry and so on). On the other hand, along with the geometry algebraic methods, revealing the unity of mathematics, were penetrated into physics, first of all, methods of group theory (Lorentz and Poincar\'{e} groups, Erlangen programme, group-theoretical description of quantum mechanics and so on). In such a way, the programme of geometrization of physics involves a programme of \emph{algebraization} of physics. In this programme Clifford algebras and theory of hypercomplex structures as a whole play an essential role. A wide application of Clifford algebras and spinors in physics began with the famous Dirac's work on electron theory \cite{Dir28}, where well-known $\gamma$-matrices form a basis of the complex Clifford algebra $\C_4$. As is known, a fundamental notion of \textit{antimatter} follows directly from the Dirac equation which presents by itself a \textit{spinor equation}. Later on, Laport and Uhlenbeck \cite{LU31}, using van der Waerden 2-spinor formalism \cite{Wae29}, wrote Maxwell equations in a \textit{spinor form}. It allows one to consider all the relativistic wave equations on an equal footing, from the one algebraic (group-theoretical) viewpoint \cite{Var022}. Clifford algebras form the foundation of twistor approach. In accordance with Penrose twistor programme \cite{Pen77,PM72}, space-time continuum is a derivative construction with respect to \textit{underlying spinor structure}. Spinor structure contains in itself pre-images of all basic properties of classical space-time, such as dimension, signature, metrics and many other. In parallel with twistor approach, decoherence theory \cite{JZKGKS} claims that in the background of reality we have a \textit{nonlocal quantum substrate} (quantum domain), and all visible world (classical domain) arises from quantum domain in the result of decoherence process. In this context Clifford algebras should be understood as a mathematical tool working on the level of quantum domain.

The present paper is a continuation of the previous work \cite{Var12}. In this paper we study applications of modulo 8 periodicity of Clifford algebras to particle representations of the group $\spin_+(1,3)$ in more details. The underlying spinor structure is understood here as an algebraic structure (tensor products of Clifford algebras) associated with the each finite-dimensional representation of $\spin_+(1,3)$. In accordance with the Wigner interpretation of elementary particles \cite{Wig39}, a so-defined spinor structure contains pre-images of all basic properties of elementary particles, such as spin, mass \cite{Var1402}, charge (a pseudoautomorphism $\cA\rightarrow\overline{\cA}$, \cite{Var04,Var1401}), space inversion $P$, time reversal $T$ and their combination $PT$ (fundamental automorphisms $\cA\rightarrow\cA^\star$, $\cA\rightarrow\widetilde{\cA}$, $\cA\rightarrow\widetilde{\cA^\star}$ of Clifford algebras \cite{Var01,Var11}). It is shown that modulo 8 periodicity of underlying spinor structure generates modulo 2 periodic relations on the system of representations (real and quaternionic) of the group $\spin_+(1,3)$). In consequence of the action of the Brauer-Wall group $BW_{\R}$ we have a self-similar fractal structure on the system of representations of $\spin_+(1,3)$, where the each period of this fractal structure is generated by the cycle of $BW_{\R}$. It is shown also that modulo 8 action of $BW_{\R}$ induces modulo 4 periodic relations on the idempotents groups of the Clifford algebras. Some relations between spinors, twistors and qubits are discussed with respect to theory of quantum information.
\section{Periodicity of Clifford algebras}
As is known, for the Clifford algebra $\cl_{p,q}$ over
the field $\F=\R$ there are isomorphisms
$\cl_{p,q}\simeq\End_{\K}(I_{p,q})\simeq\Mat_{2^m}(\K)$, where
$m=(p+q)/2$, $I_{p,q}=\cl_{p,q}f$ is a minimal left ideal of
$\cl_{p,q}$, and $\K=f\cl_{p,q}f$ is a division ring of $\cl_{p,q}$.
A primitive idempotent of the algebra $\cl_{p,q}$ has the form
\[
f=\frac{1}{2}(1\pm\e_{\alpha_1})\frac{1}{2}(1\pm\e_{\alpha_2})\cdots\frac{1}{2}
(1\pm\e_{\alpha_k}),
\]
where $\e_{\alpha_1},\e_{\alpha_2},\ldots,\e_{\alpha_k}$ are
commuting elements with square 1 of the canonical basis of
$\cl_{p,q}$ generating a group of order $2^k$, that is,
$(\e_{\alpha_1},\e_{\alpha_2},\ldots,\e_{\alpha_k})\simeq\left(\dZ_2\right)^{\otimes
k}$, where $\left(\dZ_2\right)^{\otimes
k}=\dZ_2\otimes\dZ_2\otimes\cdots\otimes\dZ_2$ ($k$ times) is an
Abelian group. The values of $k$ are defined by the formula
$k=q-r_{q-p}$, where $r_i$ are the Radon-Hurwitz numbers
\cite{Rad22,Hur23}, values of which form a cycle of the period 8:
$r_{i+8}=r_i+4$. The values of all $r_i$ are
\begin{center}
\begin{tabular}{lcccccccc}
$i$  & 0 & 1 & 2 & 3 & 4 & 5 & 6 & 7\\ \hline $r_i$& 0 & 1 & 2 & 2 &
3 & 3 & 3 & 3
\end{tabular}.
\end{center}\begin{sloppypar}\noindent
In terms of finite groups we have here \emph{an idempotent group}
$T_{p,q}(f)\simeq\left(\dZ_2\right)^{\otimes (k+1)}$ of the order
$2^{k+1}=2^{1+q-r_{q-p}}$ \cite{AF02}.\end{sloppypar}

For example, let us consider a minimal left ideal of the de Sitter algebra $\cl_{4,1}$ associated with the space $\R^{4,1}$. This ideal has the form
\begin{equation}\label{IdealS}
I_{4,1}=\cl_{4,1}f_{4,1}=\cl_{4,1}\frac{1}{2}(1+\e_0)\frac{1}{2}
(1+i\e_{12}).
\end{equation}
In its turn, for the space-time algebra $\cl_{1,3}$ we have
\[
I_{1,3}=\cl_{1,3}f_{13}=\cl_{1,3}\frac{1}{2}(1+\e_0).
\]
Further, for the Dirac algebra there are isomorphisms
$\cl_{4,1}\simeq\C_4=\C\otimes\cl_{1,3}\simeq\Mat_2(\C_2)$, $\cl^+_{4,1}\simeq\cl_{1,3}\simeq\Mat^{\BH}_2(\cl_{1,1})$. Using an identity $\cl_{1,3}f_{13}=\cl^+_{1,3}f_{13}$ \cite{FRO90a},
we obtain for the minimal left ideal of $\cl_{4,1}$ the following expression:
\begin{multline}\label{2e1}
I_{4,1}=\cl_{4,1}f_{41}=(\C\otimes\cl_{1,3})f_{41}\simeq\cl^+_{4,1}f_{41}
\simeq\cl_{1,3}f_{41}=\\
\cl_{1,3}f_{13}\frac{1}{2}(1+i\e_{12})=\cl^+_{1,3}f_{13}\frac{1}{2}
(1+i\e_{12}).
\end{multline}
Let $\Phi\in\cl_{4,1}\simeq\Mat_4(\C)$ be a Dirac spinor and let $\phi\in \cl^+_{1,3}\simeq\cl_{3,0}$ be a Dirac-Hestenes spinor. Then from (\ref{2e1}) we have the relation between spinors
$\Phi$ and $\phi$:
\begin{equation}\label{2e2}
\Phi=\phi\frac{1}{2}(1+\e_0)\frac{1}{2}(1+\bi\e_{12}).
\end{equation}
Since $\phi\in\cl^+_{1,3}\simeq\cl_{3,0}$, the Dirac-Hestenes spinor can be represented by a biquaternion number
\begin{equation}\label{2e3}
\phi=a^0+a^{01}\e_{01}+a^{02}\e_{02}+a^{03}\e_{03}+
a^{12}\e_{12}+a^{13}\e_{13}+a^{23}\e_{23}+a^{0123}\e_{0123}.
\end{equation}
It should be noted that Hestenes programme of reinterpretation of quantum mechanics within the field of real numbers gives rise many interesting applications of Dirac-Hestenes fields in geometry and general theory of relativity
\cite{HS84,RF90,RSVL,Keller,Var99a}.

Over the field $\F=\R$ there are eight different types of Clifford algebras $\cl_{p,q}$ with the following division ring structure.\\[0.3cm]
{\bf I}. Central simple algebras.
\begin{description}
\item[1.] Two types $p-q\equiv 0,2\pmod{8}$ with a division ring
$\K\simeq\R$.
\item[2.] Two types $p-q\equiv 3,7\pmod{8}$ with a division ring
$\K\simeq\C$.
\item[3.] Two types $p-q\equiv 4,6\pmod{8}$ with a division ring
$\K\simeq\BH$.
\end{description}
{\bf II}. Semi-simple algebras.
\begin{description}
\item[4.] The type $p-q\equiv 1\pmod{8}$ with a double division ring
$\K\simeq\R\oplus\R$.
\item[5.] The type $p-q\equiv 5\pmod{8}$ with a double quaternionic
division ring $\K\simeq\BH\oplus\BH$.
\end{description}

The \textit{spinorial chessboard} \cite{BT88} (see Fig.\,1) is the set of 64 real algebras
\[
\{\cl_{p,q},\;|\;0\leq\,p,q\,\leq 7\},
\]
where it is understood that $\cl_{0,0}\simeq\R$.
\begin{figure}[ht]
\unitlength=1mm
\begin{center}
\begin{picture}(100,100)
\put(0,0){\circle*{2.5}}
\put(0.5,-4){$\cl_{0,0}$}
\put(1.25,0){\line(1,0){7.5}}
\put(0,1.25){\line(0,1){7.5}}
\put(10,0){\circle{2.5}}
\put(10.5,-4){$\cl_{1,0}$}
\put(11.25,0){\line(1,0){7.5}}
\put(10,1.25){\line(0,1){7.5}}
\put(20,0){\circle*{2.5}}
\put(20.5,-4){$\cl_{2,0}$}
\put(21.25,0){\line(1,0){7.5}}
\put(20,1.25){\line(0,1){7.5}}
\put(30,0){\circle{2.5}}
\put(30.5,-4){$\cl_{3,0}$}
\put(31.25,0){\line(1,0){7.5}}
\put(30,1.25){\line(0,1){7.5}}
\put(40,0){\circle*{2.5}}
\put(40.5,-4){$\cl_{4,0}$}
\put(41.25,0){\line(1,0){7.5}}
\put(40,1.25){\line(0,1){7.5}}
\put(50,0){\circle{2.5}}
\put(50.5,-4){$\cl_{5,0}$}
\put(51.25,0){\line(1,0){7.5}}
\put(50,1.25){\line(0,1){7.5}}
\put(60,0){\circle*{2.5}}
\put(60.5,-4){$\cl_{6,0}$}
\put(61.25,0){\line(1,0){7.5}}
\put(60,1.25){\line(0,1){7.5}}
\put(70,0){\circle{2.5}}
\put(70.5,-4){$\cl_{7,0}$}
\put(70,1.25){\line(0,1){7.5}}
\put(0,10){\circle{2.5}}
\put(0.5,6){$\cl_{0,1}$}
\put(1.25,10){\line(1,0){7.5}}
\put(0,11.25){\line(0,1){7.5}}
\put(10,10){\circle*{2.5}}
\put(10.5,6){$\cl_{1,1}$}
\put(11.25,10){\line(1,0){7.5}}
\put(10,11.25){\line(0,1){7.5}}
\put(20,10){\circle{2.5}}
\put(20.5,6){$\cl_{2,1}$}
\put(21.25,10){\line(1,0){7.5}}
\put(20,11.25){\line(0,1){7.5}}
\put(30,10){\circle*{2.5}}
\put(30.5,6){$\cl_{3,1}$}
\put(31.25,10){\line(1,0){7.5}}
\put(30,11.25){\line(0,1){7.5}}
\put(40,10){\circle{2.5}}
\put(40.5,6){$\cl_{4,1}$}
\put(41.25,10){\line(1,0){7.5}}
\put(40,11.25){\line(0,1){7.5}}
\put(50,10){\circle*{2.5}}
\put(50.5,6){$\cl_{5,1}$}
\put(51.25,10){\line(1,0){7.5}}
\put(50,11.25){\line(0,1){7.5}}
\put(60,10){\circle{2.5}}
\put(60.5,6){$\cl_{6,1}$}
\put(61.25,10){\line(1,0){7.5}}
\put(60,11.25){\line(0,1){7.5}}
\put(70,10){\circle*{2.5}}
\put(70.5,6){$\cl_{7,1}$}
\put(70,11.25){\line(0,1){7.5}}
\put(0,20){\circle*{2.5}}
\put(0.5,16){$\cl_{0,2}$}
\put(1.25,20){\line(1,0){7.5}}
\put(0,21.25){\line(0,1){7.5}}
\put(10,20){\circle{2.5}}
\put(10.5,16){$\cl_{1,2}$}
\put(11.25,20){\line(1,0){7.5}}
\put(10,21.25){\line(0,1){7.5}}
\put(20,20){\circle*{2.5}}
\put(20.5,16){$\cl_{2,2}$}
\put(21.25,20){\line(1,0){7.5}}
\put(20,21.25){\line(0,1){7.5}}
\put(30,20){\circle{2.5}}
\put(30.5,16){$\cl_{3,2}$}
\put(31.25,20){\line(1,0){7.5}}
\put(30,21.25){\line(0,1){7.5}}
\put(40,20){\circle*{2.5}}
\put(40.5,16){$\cl_{4,2}$}
\put(41.25,20){\line(1,0){7.5}}
\put(40,21.25){\line(0,1){7.5}}
\put(50,20){\circle{2.5}}
\put(50.5,16){$\cl_{5,2}$}
\put(51.25,20){\line(1,0){7.5}}
\put(50,21.25){\line(0,1){7.5}}
\put(60,20){\circle*{2.5}}
\put(60.5,16){$\cl_{6,2}$}
\put(61.25,20){\line(1,0){7.5}}
\put(60,21.25){\line(0,1){7.5}}
\put(70,20){\circle{2.5}}
\put(70.5,16){$\cl_{7,2}$}
\put(70,21.25){\line(0,1){7.5}}
\put(0,30){\circle{2.5}}
\put(0.5,26){$\cl_{0,3}$}
\put(1.25,30){\line(1,0){7.5}}
\put(0,31.25){\line(0,1){7.5}}
\put(10,30){\circle*{2.5}}
\put(10.5,26){$\cl_{1,3}$}
\put(11.25,30){\line(1,0){7.5}}
\put(10,31.25){\line(0,1){7.5}}
\put(20,30){\circle{2.5}}
\put(20.5,26){$\cl_{2,3}$}
\put(21.25,30){\line(1,0){7.5}}
\put(20,31.25){\line(0,1){7.5}}
\put(30,30){\circle*{2.5}}
\put(30.5,26){$\cl_{3,3}$}
\put(31.25,30){\line(1,0){7.5}}
\put(30,31.25){\line(0,1){7.5}}
\put(40,30){\circle{2.5}}
\put(40.5,26){$\cl_{4,3}$}
\put(41.25,30){\line(1,0){7.5}}
\put(40,31.25){\line(0,1){7.5}}
\put(50,30){\circle*{2.5}}
\put(50.5,26){$\cl_{5,3}$}
\put(51.25,30){\line(1,0){7.5}}
\put(50,31.25){\line(0,1){7.5}}
\put(60,30){\circle{2.5}}
\put(60.5,26){$\cl_{6,3}$}
\put(61.25,30){\line(1,0){7.5}}
\put(60,31.25){\line(0,1){7.5}}
\put(70,30){\circle*{2.5}}
\put(70.5,26){$\cl_{7,3}$}
\put(70,31.25){\line(0,1){7.5}}
\put(0,40){\circle*{2.5}}
\put(0.5,36){$\cl_{0,4}$}
\put(1.25,40){\line(1,0){7.5}}
\put(0,41.25){\line(0,1){7.5}}
\put(10,40){\circle{2.5}}
\put(10.5,36){$\cl_{1,4}$}
\put(11.25,40){\line(1,0){7.5}}
\put(10,41.25){\line(0,1){7.5}}
\put(20,40){\circle*{2.5}}
\put(20.5,36){$\cl_{2,4}$}
\put(21.25,40){\line(1,0){7.5}}
\put(20,41.25){\line(0,1){7.5}}
\put(30,40){\circle{2.5}}
\put(30.5,36){$\cl_{3,4}$}
\put(31.25,40){\line(1,0){7.5}}
\put(30,41.25){\line(0,1){7.5}}
\put(40,40){\circle*{2.5}}
\put(40.5,36){$\cl_{4,4}$}
\put(41.25,40){\line(1,0){7.5}}
\put(40,41.25){\line(0,1){7.5}}
\put(50,40){\circle{2.5}}
\put(50.5,36){$\cl_{5,4}$}
\put(51.25,40){\line(1,0){7.5}}
\put(50,41.25){\line(0,1){7.5}}
\put(60,40){\circle*{2.5}}
\put(60.5,36){$\cl_{6,4}$}
\put(61.25,40){\line(1,0){7.5}}
\put(60,41.25){\line(0,1){7.5}}
\put(70,40){\circle{2.5}}
\put(70.5,36){$\cl_{7,4}$}
\put(70,41.25){\line(0,1){7.5}}
\put(0,50){\circle{2.5}}
\put(0.5,46){$\cl_{0,5}$}
\put(1.25,50){\line(1,0){7.5}}
\put(0,51.25){\line(0,1){7.5}}
\put(10,50){\circle*{2.5}}
\put(10.5,46){$\cl_{1,5}$}
\put(11.25,50){\line(1,0){7.5}}
\put(10,51.25){\line(0,1){7.5}}
\put(20,50){\circle{2.5}}
\put(20.5,46){$\cl_{2,5}$}
\put(21.25,50){\line(1,0){7.5}}
\put(20,51.25){\line(0,1){7.5}}
\put(30,50){\circle*{2.5}}
\put(30.5,46){$\cl_{3,5}$}
\put(31.25,50){\line(1,0){7.5}}
\put(30,51.25){\line(0,1){7.5}}
\put(40,50){\circle{2.5}}
\put(40.5,46){$\cl_{4,5}$}
\put(41.25,50){\line(1,0){7.5}}
\put(40,51.25){\line(0,1){7.5}}
\put(50,50){\circle*{2.5}}
\put(50.5,46){$\cl_{5,5}$}
\put(51.25,50){\line(1,0){7.5}}
\put(50,51.25){\line(0,1){7.5}}
\put(60,50){\circle{2.5}}
\put(60.5,46){$\cl_{6,5}$}
\put(61.25,50){\line(1,0){7.5}}
\put(60,51.25){\line(0,1){7.5}}
\put(70,50){\circle*{2.5}}
\put(70.5,46){$\cl_{7,5}$}
\put(70,51.25){\line(0,1){7.5}}
\put(0,60){\circle*{2.5}}
\put(0.5,56){$\cl_{0,6}$}
\put(1.25,60){\line(1,0){7.5}}
\put(0,61.25){\line(0,1){7.5}}
\put(10,60){\circle{2.5}}
\put(10.5,56){$\cl_{1,6}$}
\put(11.25,60){\line(1,0){7.5}}
\put(10,61.25){\line(0,1){7.5}}
\put(20,60){\circle*{2.5}}
\put(20.5,56){$\cl_{2,6}$}
\put(21.25,60){\line(1,0){7.5}}
\put(20,61.25){\line(0,1){7.5}}
\put(30,60){\circle{2.5}}
\put(30.5,56){$\cl_{3,6}$}
\put(31.25,60){\line(1,0){7.5}}
\put(30,61.25){\line(0,1){7.5}}
\put(40,60){\circle*{2.5}}
\put(40.5,56){$\cl_{4,6}$}
\put(41.25,60){\line(1,0){7.5}}
\put(40,61.25){\line(0,1){7.5}}
\put(50,60){\circle{2.5}}
\put(50.5,56){$\cl_{5,6}$}
\put(51.25,60){\line(1,0){7.5}}
\put(50,61.25){\line(0,1){7.5}}
\put(60,60){\circle*{2.5}}
\put(60.5,56){$\cl_{6,6}$}
\put(61.25,60){\line(1,0){7.5}}
\put(60,61.25){\line(0,1){7.5}}
\put(70,60){\circle{2.5}}
\put(70.5,56){$\cl_{7,6}$}
\put(70,61.25){\line(0,1){7.5}}
\put(0,70){\circle{2.5}}
\put(0.5,66){$\cl_{0,7}$}
\put(1.25,70){\line(1,0){7.5}}
\put(10,70){\circle*{2.5}}
\put(10.5,66){$\cl_{1,7}$}
\put(11.25,70){\line(1,0){7.5}}
\put(20,70){\circle{2.5}}
\put(20.5,66){$\cl_{2,7}$}
\put(21.25,70){\line(1,0){7.5}}
\put(30,70){\circle*{2.5}}
\put(30.5,66){$\cl_{3,7}$}
\put(31.25,70){\line(1,0){7.5}}
\put(40,70){\circle{2.5}}
\put(40.5,66){$\cl_{4,7}$}
\put(41.25,70){\line(1,0){7.5}}
\put(50,70){\circle*{2.5}}
\put(50.5,66){$\cl_{5,7}$}
\put(51.25,70){\line(1,0){7.5}}
\put(60,70){\circle{2.5}}
\put(60.5,66){$\cl_{6,7}$}
\put(61.25,70){\line(1,0){7.5}}
\put(70,70){\circle*{2.5}}
\put(70.5,66){$\cl_{7,7}$}
\put(80,80){\circle*{2.5}}
\put(80.5,76){$\cl_{8,8}$}
\put(15,23){$\bullet$}
\put(16,24){\vector(1,0){38}}
\put(54,24){\vector(1,0){36.5}}
\put(90,23){$\bullet$}
\put(91,24){\vector(0,1){30}}
\put(91,53){\vector(0,1){3}}
\put(91,56){\vector(0,1){42}}
\put(90,98){$\bullet$}
\end{picture}
\end{center}
\vspace{0.3cm}
\begin{center}\begin{minipage}{32pc}{\small {\bf Fig.\,1:} \textbf{The Spinorial Chessboard}. Even- and odd-dimensional Clifford algebras $\cl_{p,q}$, $0\leq p,q\leq 7$, occupy, respectively, black and white circles (squares of the board). Every real Clifford algebra can be reached from one on the board with rook's moves to the right and upward.}\end{minipage}\end{center}
\end{figure}

The algebra $\cl$ is naturally $\dZ_2$-graded. Let
$\cl^+$ (correspondingly $\cl^-$) be a set consisting of all even
(correspondingly odd) elements of the algebra $\cl$. The set $\cl^+$
is a subalgebra of $\cl$. It is obvious that $\cl=\cl^+\oplus\cl^-$,
and also $\cl^+\cl^+ \subset\cl^+,\,\cl^+\cl^-\subset\cl^-,\,
\cl^-\cl^+\subset\cl^-,\,\cl^-\cl^-\subset \cl^+$. A degree $\deg a$
of the even (correspondingly odd) element $a\in\cl$ is equal to 0
(correspondingly 1). Let $\mathfrak{A}$ and $\mathfrak{B}$ be the
two associative $\dZ_2$-graded algebras over the field $\F$; then a
multiplication of homogeneous elements
$\mathfrak{a}^\prime\in\mathfrak{A}$ and
$\mathfrak{b}\in\mathfrak{B}$ in a graded tensor product
$\mathfrak{A}\hat{\otimes}\mathfrak{B}$ is defined as follows:
$(\mathfrak{a}\otimes \mathfrak{b})(\mathfrak{a}^\prime \otimes
\mathfrak{b}^\prime)=(-1)^{\deg\mathfrak{b}\deg\mathfrak{a}^\prime}
\mathfrak{a}\mathfrak{a}^\prime\otimes\mathfrak{b}\mathfrak{b}^\prime$.
\begin{thm}[{\rm Chevalley \cite{Che55}}]
Let $V$ and $V^\prime$ are vector spaces over the field $\F$ and let
$Q$ and $Q^\prime$ are quadratic forms for $V$ and $V^\prime$. Then
a Clifford algebra $\cl(V\oplus V^\prime,Q\oplus Q^\prime)$ is
naturally isomorphic to
$\cl(V,Q)\hat{\otimes}\cl(V^\prime,Q^\prime)$.
\end{thm}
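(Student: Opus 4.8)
The plan is to run everything off the universal mapping property of Clifford algebras. Recall that $\cl(V,Q)$ is the associative unital $\F$-algebra generated by $V$ subject only to the relations $v^2=Q(v)\cdot 1$; equivalently, any linear map $f\colon V\to A$ into an associative unital $\F$-algebra with $f(v)^2=Q(v)\cdot 1$ for all $v\in V$ extends uniquely to an algebra homomorphism $\cl(V,Q)\to A$. Writing $i\colon V\to\cl(V,Q)$ and $i^\prime\colon V^\prime\to\cl(V^\prime,Q^\prime)$ for the canonical embeddings, I would first define the linear map
\[
\lambda\colon V\oplus V^\prime\longrightarrow\cl(V,Q)\hat{\otimes}\cl(V^\prime,Q^\prime),\qquad
\lambda(v,v^\prime)=i(v)\otimes 1+1\otimes i^\prime(v^\prime),
\]
and verify the Clifford condition for $Q\oplus Q^\prime$. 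This is the crux: in the graded tensor product the two cross terms are $(i(v)\otimes 1)(1\otimes i^\prime(v^\prime))=i(v)\otimes i^\prime(v^\prime)$ and $(1\otimes i^\prime(v^\prime))(i(v)\otimes 1)=(-1)^{\deg i^\prime(v^\prime)\,\deg i(v)}\,i(v)\otimes i^\prime(v^\prime)=-\,i(v)\otimes i^\prime(v^\prime)$, since $i(v)$ and $i^\prime(v^\prime)$ are both odd, so they cancel and
\[
\lambda(v,v^\prime)^2=i(v)^2\otimes 1+1\otimes i^\prime(v^\prime)^2=\bigl(Q(v)+Q^\prime(v^\prime)\bigr)(1\otimes 1)=(Q\oplus Q^\prime)(v,v^\prime)\,(1\otimes 1).
\]
By universality this produces an algebra homomorphism $\varphi\colon\cl(V\oplus V^\prime,Q\oplus Q^\prime)\to\cl(V,Q)\hat{\otimes}\cl(V^\prime,Q^\prime)$.

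Next I would build the inverse. The isometric inclusions $V\hookrightarrow V\oplus V^\prime$ and $V^\prime\hookrightarrow V\oplus V^\prime$ have mutually $Q\oplus Q^\prime$-orthogonal images, so by the universal property they extend to algebra homomorphisms $\alpha\colon\cl(V,Q)\to\cl(V\oplus V^\prime,Q\oplus Q^\prime)$ and $\beta\colon\cl(V^\prime,Q^\prime)\to\cl(V\oplus V^\prime,Q\oplus Q^\prime)$. Polarizing $(v+v^\prime)^2=Q(v)+Q^\prime(v^\prime)$ and using orthogonality of the summands shows that generators coming from $V$ anticommute with those coming from $V^\prime$ inside $\cl(V\oplus V^\prime,Q\oplus Q^\prime)$; this anticommutation is exactly what makes the assignment $\mathfrak{a}\otimes\mathfrak{b}\mapsto\alpha(\mathfrak{a})\beta(\mathfrak{b})$ multiplicative with respect to the sign $(-1)^{\deg\mathfrak{b}\deg\mathfrak{a}^\prime}$ in the definition of $\hat{\otimes}$, hence it defines an algebra homomorphism $\psi\colon\cl(V,Q)\hat{\otimes}\cl(V^\prime,Q^\prime)\to\cl(V\oplus V^\prime,Q\oplus Q^\prime)$. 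One then checks that $\varphi$ and $\psi$ are mutually inverse by evaluating on generators: $\psi\circ\varphi$ fixes every $(v,v^\prime)\in V\oplus V^\prime$ and $\varphi\circ\psi$ fixes every $i(v)\otimes 1$ and $1\otimes i^\prime(v^\prime)$, and since these generate the respective algebras the composites are the identity. In the finite-dimensional setting relevant to the chessboard one may instead simply observe that $\varphi$ is surjective and that both sides have dimension $2^{\dim V+\dim V^\prime}$, forcing $\varphi$ to be an isomorphism.

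Finally, naturality is automatic: the whole construction used only the canonical embeddings and the universal mapping property, so it is functorial in $(V,Q)$ and $(V^\prime,Q^\prime)$. The one genuinely delicate point I expect is the verification that $\psi$ respects the graded tensor multiplication — that the anticommutation of the two families of generators is matched precisely by the Koszul sign in $\hat{\otimes}$ — which I would reduce to homogeneous elements and then to a single generator on each side; everything else is a formal consequence of universality.
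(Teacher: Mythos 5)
The paper states this theorem only as a citation to Chevalley and supplies no proof of its own, so there is no in-paper argument to compare against. Your universal-property proof is the standard and correct one: the Koszul sign in $\hat{\otimes}$ makes the two cross terms in $\bigl(i(v)\otimes 1+1\otimes i^\prime(v^\prime)\bigr)^2$ cancel, which is precisely why the \emph{graded} tensor product appears rather than the ordinary one; your construction of the inverse $\psi$ from the anticommutation of the generators coming from $V$ and from $V^\prime$ (a consequence of polarization and $Q\oplus Q^\prime$-orthogonality of the summands) is sound, and the reduction to homogeneous elements is the right way to verify that $\psi$ respects the signed multiplication; and the dimension count $2^{\dim V+\dim V^\prime}$ gives a clean shortcut to bijectivity in the finite-dimensional setting that this paper actually uses.
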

Let $\cl(V,Q)$ be the Clifford algebra over the field $\F=\R$, where
$V$ is a vector space endowed with quadratic form
$Q=x^2_1+\ldots+x^2_p-\ldots-x^2_{p+q}$. If $p+q$ is even and
$\omega^2=1$, then $\cl(V,Q)$ is called {\it positive} and
correspondingly {\it negative} if $\omega^2=-1$, that is,
$\cl_{p,q}>0$ if $p-q\equiv 0,4\pmod{8}$ and $\cl_{p,q}<0$ if
$p-q\equiv 2,6\pmod{8}$.
\begin{thm}[{\rm Karoubi \cite[Prop.~3.16]{Kar79}}]
1) If $\cl(V,Q)>0$ and $\dim V$ is even, then
\[
\cl(V\oplus V^{\p},Q\oplus
Q^{\p})\simeq\cl(V,Q)\otimes\cl(V^{\p},Q^{\p}).
\]
2) If $\cl(V,Q)<0$ and $\dim V$ is even, then
\[
\cl(V\oplus V^{\p},Q\oplus
Q^{\p})\simeq\cl(V,Q)\otimes\cl(V^{\p},-Q^{\p}).
\]
\end{thm}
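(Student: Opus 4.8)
The plan is to deduce both statements from Chevalley's theorem by converting the graded tensor product $\cl(V,Q)\hat{\otimes}\cl(V^\p,Q^\p)$ into an \emph{ungraded} one, paying for the untwisting with a possible sign change in $Q^\p$. The device is the volume element $\omega=e_1 e_2\cdots e_n$ of $\cl(V,Q)$, where $e_1,\ldots,e_n$ is an orthogonal basis of $V$ and $n=\dim V$. I will use two standard facts about $\omega$: first, $\omega e_k=(-1)^{n-1}e_k\omega$ for every generator, so that, \emph{because $n$ is even}, $\omega$ anticommutes with every vector of $V$ (hence anticommutes with every odd element of $\cl(V,Q)$ and commutes with every even element); second, $\omega$ lies in the even part (again because $n$ is even), and $\omega^2=1$ exactly when $\cl(V,Q)>0$ while $\omega^2=-1$ exactly when $\cl(V,Q)<0$ --- which is the definition of positivity recalled just above.

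First I would put $\varepsilon:=\omega^2\in\{+1,-1\}$, write $\varepsilon Q^\p$ for $Q^\p$ when $\varepsilon=1$ and for $-Q^\p$ when $\varepsilon=-1$, and build an isomorphism $\cl(V,Q)\otimes\cl(V^\p,\varepsilon Q^\p)\xrightarrow{\sim}\cl(V,Q)\hat{\otimes}\cl(V^\p,Q^\p)$. Fix an orthogonal basis $e^\p_1,\ldots,e^\p_m$ of $V^\p$, $m=\dim V^\p$, and send the first factor in by $e_i\mapsto e_i\otimes 1$ and the second by $e^\p_j\mapsto f_j:=\omega\otimes e^\p_j$. A short check using the two properties of $\omega$ gives: (i) each $f_j$ commutes with each $e_i\otimes 1$, since moving $e_i$ past $\omega$ in the first slot contributes $(-1)^{n-1}=-1$, which cancels the sign $(-1)^{\deg e^\p_j\,\deg e_i}=-1$ produced by the graded multiplication rule; (ii) $f_j f_k=-f_k f_j$ for $j\neq k$, and $f_j^2=\omega^2\otimes(e^\p_j)^2=\varepsilon Q^\p(e^\p_j)$, with no stray sign in the second slot because $\omega$ is even. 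Thus the $e_i\otimes 1$ generate a copy of $\cl(V,Q)$, the $f_j$ generate a copy of $\cl(V^\p,\varepsilon Q^\p)$, and the two commute elementwise, so the universal properties yield an algebra homomorphism $\cl(V,Q)\otimes\cl(V^\p,\varepsilon Q^\p)\to\cl(V,Q)\hat{\otimes}\cl(V^\p,Q^\p)$. Its image contains $\cl(V,Q)\otimes 1$ and the unit $\omega\otimes 1$, hence contains $1\otimes e^\p_j=(\omega^{-1}\otimes 1)f_j$ for all $j$, so the map is onto; as both algebras have dimension $2^{n+m}$, it is an isomorphism.

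To finish, I would combine this with Chevalley's theorem, $\cl(V\oplus V^\p,Q\oplus Q^\p)\simeq\cl(V,Q)\hat{\otimes}\cl(V^\p,Q^\p)$, getting $\cl(V\oplus V^\p,Q\oplus Q^\p)\simeq\cl(V,Q)\otimes\cl(V^\p,\varepsilon Q^\p)$, and then read off the two cases: $\cl(V,Q)>0$ forces $\varepsilon=1$ and $\varepsilon Q^\p=Q^\p$ (part 1), while $\cl(V,Q)<0$ forces $\varepsilon=-1$ and $\varepsilon Q^\p=-Q^\p$ (part 2). The step I would treat most carefully is the sign bookkeeping in (i)--(ii): the evenness of $\dim V$ must be invoked in exactly two places --- once to make $\omega$ anticommute with $V$, once to keep $\omega$ in the even part so that conjugation by $\omega$ in the second tensor slot is sign-free --- and the identity $\varepsilon=\omega^2$ must be matched correctly with the positive/negative dichotomy. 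Everything else (the Clifford relations for the $f_j$, the surjectivity, the dimension count) is routine.
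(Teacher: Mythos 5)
The paper does not prove this result; it is quoted from Karoubi's \emph{K-Theory} as Proposition 3.16, so there is no in-paper proof to compare against. Your argument is the standard one (untwisting the graded tensor product by conjugating the second factor through the volume element $\omega$), and it is correct: the sign bookkeeping in (i) and (ii) is right, both evenness hypotheses are invoked in the right places, the Clifford relations for the $f_j$ come out as $f_j^2=\varepsilon Q'(e'_j)$ with $\varepsilon=\omega^2$, and the surjectivity-plus-dimension-count clinches the isomorphism. One point worth making explicit, since you only state it for the generators $e_i\otimes 1$: the universal property of the \emph{ungraded} tensor product requires that the full image of $\cl(V,Q)$ commute with the full image of $\cl(V^{\p},\varepsilon Q^{\p})$, not merely the generators; this does follow, because for a homogeneous $a\in\cl(V,Q)$ of degree $d$ one has $\omega a=(-1)^{d}a\omega$ (using $n$ even), and the graded-product sign contributes another $(-1)^{d}$, so the two cancel for every $a$, not just the $e_i$. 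With that spelled out, the proof is complete and agrees with Karoubi's construction in spirit.
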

\begin{sloppypar}\noindent
Over the field $\F=\C$ the Clifford algebra is always positive (if
$\omega^2=\e^2_{12\ldots p+q}=-1$, then we can suppose $\omega=
i\e_{12\ldots p+q}$). Thus, using Karoubi Theorem, we find that
\end{sloppypar}
\begin{equation}\label{TenProd}
\underbrace{\C_2\otimes\cdots\otimes\C_2}_{m\,\text{times}}\simeq\C_{2m}.
\end{equation}
Therefore, the tensor product in (\ref{TenProd}) is isomorphic to
$\C_{2m}$. For example, there are two different factorizations
$\cl_{1,1}\otimes\cl_{0,2}$ and $\cl_{1,1}\otimes\cl_{2,0}$ for the
spacetime algebra $\cl_{1,3}$ and Majorana algebra $\cl_{3,1}$.

The real Clifford algebra $\cl_{p,q}$ is central
simple if $p-q\not\equiv 1,5\pmod{8}$. The graded tensor product of
the two graded central simple algebras\index{algebra!graded central
simple} is also graded central simple \cite[Theorem 2]{Wal64}. It is
known that for the Clifford algebra with odd dimensionality, the
isomorphisms are as follows: $\cl^+_{p,q+1}\simeq\cl_{p,q}$ and
$\cl^+_{p+1,q}\simeq\cl_{q,p}$ \cite{Rash,Port}. Thus,
$\cl^+_{p,q+1}$ and $\cl^+_{p+1,q}$ are central simple algebras.
Further, in accordance with Chevalley Theorem for the graded tensor
product there is an isomorphism
$\cl_{p,q}\hat{\otimes}\cl_{p^{\p},q^{\p}}\simeq
\cl_{p+p^{\p},q+q^{\p}}$. Two algebras $\cl_{p,q}$ and
$\cl_{p^{\p},q^{\p}}$ are said to be of the same class if
$p+q^{\p}\equiv p^{\p}+q\pmod{8}$. The graded central simple
Clifford algebras over the field $\F=\R$ form eight similarity
classes, which, as it is easy to see, coincide with the eight types
of the algebras $\cl_{p,q}$. The set of these 8 types (classes)
forms a Brauer-Wall group $BW_{\R}$ \cite{Wal64,Lou81}. It is
obvious that an action of
$BW_{\R}$ has a cyclic structure, which
is formally equivalent to the action of cyclic
group $\dZ_8$. The cyclic structure of $BW_{\R}$ may be represented on the Budinich-Trautman diagram
(spinorial clock) \cite{BT88} (Fig.\,2) by means of a
transition $\cl^+_{p,q}\stackrel{h}{\longrightarrow}\cl_{p,q}$ (the
round on the diagram is realized by an hour-hand). At this point,
the type of the algebra is defined on the diagram by an equality
$q-p=h+8r$, where $h\in\{1,\ldots,8\}$, $r\in\dZ$. It is obvious that a group structure over $\cl_{p,q}$, defined by
$BW_{\R}$, is related with the Atiyah-Bott-Shapiro periodicity
\cite{AtBSh}.

\begin{figure}[ht]
\[
\unitlength=0.5mm
\begin{picture}(100.00,110.00)

\put(97,67){$\C$}\put(105,64){$p-q\equiv 7\!\!\!\!\pmod{8}$}
\put(80,80){1} \put(75,93.3){$\cdot$} \put(75.5,93){$\cdot$}
\put(76,92.7){$\cdot$} \put(76.5,92.4){$\cdot$}
\put(77,92.08){$\cdot$} \put(77.5,91.76){$\cdot$}
\put(78,91.42){$\cdot$} \put(78.5,91.08){$\cdot$}
\put(79,90.73){$\cdot$} \put(79.5,90.37){$\cdot$}
\put(80,90.0){$\cdot$} \put(80.5,89.62){$\cdot$}
\put(81,89.23){$\cdot$} \put(81.5,88.83){$\cdot$}
\put(82,88.42){$\cdot$} \put(82.5,87.99){$\cdot$}
\put(83,87.56){$\cdot$} \put(83.5,87.12){$\cdot$}
\put(84,86.66){$\cdot$} \put(84.5,86.19){$\cdot$}
\put(85,85.70){$\cdot$} \put(85.5,85.21){$\cdot$}
\put(86,84.69){$\cdot$} \put(86.5,84.17){$\cdot$}
\put(87,83.63){$\cdot$} \put(87.5,83.07){$\cdot$}
\put(88,82.49){$\cdot$} \put(88.5,81.9){$\cdot$}
\put(89,81.29){$\cdot$} \put(89.5,80.65){$\cdot$}
\put(90,80){$\cdot$} \put(90.5,79.32){$\cdot$}
\put(91,78.62){$\cdot$} \put(91.5,77.89){$\cdot$}
\put(92,77.13){$\cdot$} \put(92.5,76.34){$\cdot$}
\put(93,75.51){$\cdot$} \put(93.5,74.65){$\cdot$}
\put(94,73.74){$\cdot$} \put(94.5,72.79){$\cdot$}
\put(96.5,73.74){\vector(1,-2){1}}
\put(80,20){3} \put(97,31){$\BH$}\put(105,28){$p-q\equiv
6\!\!\!\!\pmod{8}$} \put(75,6.7){$\cdot$} \put(75.5,7){$\cdot$}
\put(76,7.29){$\cdot$} \put(76.5,7.6){$\cdot$}
\put(77,7.91){$\cdot$} \put(77.5,8.24){$\cdot$}
\put(78,8.57){$\cdot$} \put(78.5,8.91){$\cdot$}
\put(79,9.27){$\cdot$} \put(79.5,9.63){$\cdot$} \put(80,10){$\cdot$}
\put(80.5,10.38){$\cdot$} \put(81,10.77){$\cdot$}
\put(81.5,11.17){$\cdot$} \put(82,11.58){$\cdot$}
\put(82.5,12.00){$\cdot$} \put(83,12.44){$\cdot$}
\put(83.5,12.88){$\cdot$} \put(84,13.34){$\cdot$}
\put(84.5,13.8){$\cdot$} \put(85,14.29){$\cdot$}
\put(85.5,14.79){$\cdot$} \put(86,15.3){$\cdot$}
\put(86.5,15.82){$\cdot$} \put(87,16.37){$\cdot$}
\put(87.5,16.92){$\cdot$} \put(88,17.5){$\cdot$}
\put(88.5,18.09){$\cdot$} \put(89,18.71){$\cdot$}
\put(89.5,19.34){$\cdot$} \put(90,20){$\cdot$}
\put(90.5,20.68){$\cdot$} \put(91,21.38){$\cdot$}
\put(91.5,22.11){$\cdot$} \put(92,22.87){$\cdot$}
\put(92.5,23.66){$\cdot$} \put(93,24.48){$\cdot$}
\put(93.5,25.34){$\cdot$} \put(94,26.25){$\cdot$}
\put(94.5,27.20){$\cdot$} \put(95,28.20){$\cdot$}
\put(20,80){7} \put(25,93.3){$\cdot$} \put(24.5,93){$\cdot$}
\put(24,92.7){$\cdot$} \put(23.5,92.49){$\cdot$}
\put(23,92.08){$\cdot$} \put(22.5,91.75){$\cdot$}
\put(22,91.42){$\cdot$} \put(21.5,91.08){$\cdot$}
\put(21,90.73){$\cdot$} \put(20.5,90.37){$\cdot$}
\put(20,90){$\cdot$} \put(19.5,89.62){$\cdot$}
\put(19,89.23){$\cdot$} \put(18.5,88.83){$\cdot$}
\put(18,88.42){$\cdot$} \put(17.5,87.99){$\cdot$}
\put(17,87.56){$\cdot$} \put(16.5,87.12){$\cdot$}
\put(16,86.66){$\cdot$} \put(15.5,86.19){$\cdot$}
\put(15,85.70){$\cdot$} \put(14.5,85.21){$\cdot$}
\put(14,84.69){$\cdot$} \put(13.5,84.17){$\cdot$}
\put(13,83.63){$\cdot$} \put(12.5,83.07){$\cdot$}
\put(12,82.49){$\cdot$} \put(11.5,81.9){$\cdot$}
\put(11,81.29){$\cdot$} \put(10.5,80.65){$\cdot$}
\put(10,80){$\cdot$} \put(9.5,79.32){$\cdot$} \put(9,78.62){$\cdot$}
\put(8.5,77.89){$\cdot$} \put(8,77.13){$\cdot$}
\put(7.5,76.34){$\cdot$} \put(7,75.51){$\cdot$}
\put(6.5,74.65){$\cdot$} \put(6,73.79){$\cdot$}
\put(5.5,72.79){$\cdot$} \put(5,71.79){$\cdot$}
\put(20,20){5} \put(25,6.7){$\cdot$} \put(24.5,7){$\cdot$}
\put(24,7.29){$\cdot$} \put(23.5,7.6){$\cdot$}
\put(23,7.91){$\cdot$} \put(22.5,8.24){$\cdot$}
\put(22,8.57){$\cdot$} \put(21.5,8.91){$\cdot$}
\put(21,9.27){$\cdot$} \put(20.5,9.63){$\cdot$} \put(20,10){$\cdot$}
\put(19.5,10.38){$\cdot$} \put(19,10.77){$\cdot$}
\put(18.5,11.17){$\cdot$} \put(18,11.58){$\cdot$}
\put(17.5,12){$\cdot$} \put(17,12.44){$\cdot$}
\put(16.5,12.88){$\cdot$} \put(16,13.34){$\cdot$}
\put(15.5,13.8){$\cdot$} \put(15,14.29){$\cdot$}
\put(14.5,14.79){$\cdot$} \put(14,15.3){$\cdot$}
\put(13.5,15.82){$\cdot$} \put(13,16.37){$\cdot$}
\put(12.5,16.92){$\cdot$} \put(12,17.5){$\cdot$}
\put(11.5,18.09){$\cdot$} \put(11,18.71){$\cdot$}
\put(10.5,19.34){$\cdot$} \put(10,20){$\cdot$}
\put(9.5,20.68){$\cdot$} \put(9,21.38){$\cdot$}
\put(8.5,22.11){$\cdot$} \put(8,22.87){$\cdot$}
\put(7.5,23.66){$\cdot$} \put(7,24.48){$\cdot$}
\put(6.5,25.34){$\cdot$} \put(6,26.25){$\cdot$}
\put(5.5,27.20){$\cdot$} \put(5,28.20){$\cdot$}
\put(13,97){$\R\oplus\R$}\put(-55,105){$p-q\equiv
1\!\!\!\!\pmod{8}$} \put(50,93){8} \put(50,100){$\cdot$}
\put(49.5,99.99){$\cdot$} \put(49,99.98){$\cdot$}
\put(48.5,99.97){$\cdot$} \put(48,99.96){$\cdot$}
\put(47.5,99.94){$\cdot$} \put(47,99.91){$\cdot$}
\put(46.5,99.86){$\cdot$} \put(46,99.84){$\cdot$}
\put(45.5,99.8){$\cdot$} \put(45,99.75){$\cdot$}
\put(44.5,99.7){$\cdot$} \put(44,99.64){$\cdot$}
\put(43.5,99.57){$\cdot$} \put(43,99.51){$\cdot$}
\put(42.5,99.43){$\cdot$} \put(42,99.35){$\cdot$}
\put(41.5,99.27){$\cdot$} \put(41,99.18){$\cdot$}
\put(40.5,99.09){$\cdot$} \put(40,98.99){$\cdot$}
\put(39.5,98.88){$\cdot$} \put(39,98.77){$\cdot$}
\put(38.5,98.66){$\cdot$} \put(38,98.54){$\cdot$}
\put(37.5,98.41){$\cdot$} \put(37,98.28){$\cdot$}
\put(50.5,99.99){$\cdot$} \put(51,99.98){$\cdot$}
\put(51.5,99.97){$\cdot$} \put(52,99.96){$\cdot$}
\put(52.5,99.94){$\cdot$} \put(53,99.91){$\cdot$}
\put(53.5,99.86){$\cdot$} \put(54,99.84){$\cdot$}
\put(54.5,99.8){$\cdot$} \put(55,99.75){$\cdot$}
\put(55.5,99.7){$\cdot$} \put(56,99.64){$\cdot$}
\put(56.5,99.57){$\cdot$} \put(57,99.51){$\cdot$}
\put(57.5,99.43){$\cdot$} \put(58,99.35){$\cdot$}
\put(58.5,99.27){$\cdot$} \put(59,99.18){$\cdot$}
\put(59.5,99.09){$\cdot$} \put(60,98.99){$\cdot$}
\put(60.5,98.88){$\cdot$} \put(61,98.77){$\cdot$}
\put(61.5,98.66){$\cdot$} \put(62,98.54){$\cdot$}
\put(62.5,98.41){$\cdot$} \put(63,98.28){$\cdot$}
\put(68,97){$\R$}\put(73,105){$p-q\equiv 0\!\!\!\!\pmod{8}$}
\put(50,7){4} \put(68,2){$\BH\oplus\BH$}\put(90,-4){$p-q\equiv
5\!\!\!\!\pmod{8}$} \put(50,0){$\cdot$} \put(50.5,0){$\cdot$}
\put(51,0.01){$\cdot$} \put(51.5,0.02){$\cdot$}
\put(52,0.04){$\cdot$} \put(52.5,0.06){$\cdot$}
\put(53,0.09){$\cdot$} \put(53.5,0.12){$\cdot$}
\put(54,0.16){$\cdot$} \put(54.5,0.2){$\cdot$}
\put(55,0.25){$\cdot$} \put(55.5,0.3){$\cdot$}
\put(56,0.36){$\cdot$} \put(56.5,0.42){$\cdot$}
\put(57,0.49){$\cdot$} \put(57.5,0.56){$\cdot$}
\put(58,0.64){$\cdot$} \put(58.5,0.73){$\cdot$}
\put(59,0.82){$\cdot$} \put(59.5,0.91){$\cdot$}
\put(60,1.01){$\cdot$} \put(60.5,1.11){$\cdot$}
\put(61,1.22){$\cdot$} \put(61.5,1.34){$\cdot$}
\put(62,1.46){$\cdot$} \put(62.5,1.59){$\cdot$}
\put(63,1.72){$\cdot$} \put(49.5,0){$\cdot$} \put(49,0.01){$\cdot$}
\put(48.5,0.02){$\cdot$} \put(48,0.04){$\cdot$}
\put(47.5,0.06){$\cdot$} \put(47,0.09){$\cdot$}
\put(46.5,0.12){$\cdot$} \put(46,0.16){$\cdot$}
\put(45.5,0.2){$\cdot$} \put(45,0.25){$\cdot$}
\put(44.5,0.3){$\cdot$} \put(44,0.36){$\cdot$}
\put(43.5,0.42){$\cdot$} \put(43,0.49){$\cdot$}
\put(42.5,0.56){$\cdot$} \put(42,0.64){$\cdot$}
\put(41.5,0.73){$\cdot$} \put(41,0.82){$\cdot$}
\put(40.5,0.91){$\cdot$} \put(40,1.01){$\cdot$}
\put(39.5,1.11){$\cdot$} \put(39,1.22){$\cdot$}
\put(38.5,1.34){$\cdot$} \put(38,1.46){$\cdot$}
\put(37.5,1.59){$\cdot$} \put(37,1.72){$\cdot$}
\put(28,3){$\BH$}\put(-40,-4){$p-q\equiv 4\!\!\!\!\pmod{8}$}
\put(93,50){2} \put(98.28,63){$\cdot$} \put(98.41,62.5){$\cdot$}
\put(98.54,62){$\cdot$} \put(98.66,61.5){$\cdot$}
\put(98.77,61){$\cdot$} \put(98.88,60.5){$\cdot$}
\put(98.99,60){$\cdot$} \put(99.09,59.5){$\cdot$}
\put(99.18,59){$\cdot$} \put(99.27,58.5){$\cdot$}
\put(99.35,58){$\cdot$} \put(99.43,57.5){$\cdot$}
\put(99.51,57){$\cdot$} \put(99.57,56.5){$\cdot$}
\put(99.64,56){$\cdot$} \put(99.7,55.5){$\cdot$}
\put(99.75,55){$\cdot$} \put(99.8,54.5){$\cdot$}
\put(99.84,54){$\cdot$} \put(99.86,53.5){$\cdot$}
\put(99.91,53){$\cdot$} \put(99.94,52.5){$\cdot$}
\put(99.96,52){$\cdot$} \put(99.97,51.5){$\cdot$}
\put(99.98,51){$\cdot$} \put(99.99,50.5){$\cdot$}
\put(100,50){$\cdot$} \put(98.28,37){$\cdot$}
\put(98.41,37.5){$\cdot$} \put(98.54,38){$\cdot$}
\put(98.66,38.5){$\cdot$} \put(98.77,39){$\cdot$}
\put(98.88,39.5){$\cdot$} \put(98.99,40){$\cdot$}
\put(99.09,40.5){$\cdot$} \put(99.18,41){$\cdot$}
\put(99.27,41.5){$\cdot$} \put(99.35,42){$\cdot$}
\put(99.43,42.5){$\cdot$} \put(99.51,43){$\cdot$}
\put(99.57,43.5){$\cdot$} \put(99.64,44){$\cdot$}
\put(99.7,44.5){$\cdot$} \put(99.75,45){$\cdot$}
\put(99.8,45.5){$\cdot$} \put(99.84,46){$\cdot$}
\put(99.86,46.5){$\cdot$} \put(99.91,47){$\cdot$}
\put(99.94,47.5){$\cdot$} \put(99.96,48){$\cdot$}
\put(99.97,48.5){$\cdot$} \put(99.98,49){$\cdot$}
\put(99.99,49.5){$\cdot$}
\put(7,50){6} \put(1,32){$\C$}\put(-67,29){$p-q\equiv
3\!\!\!\!\pmod{8}$} \put(1.72,63){$\cdot$} \put(1.59,62.5){$\cdot$}
\put(1.46,62){$\cdot$} \put(1.34,61.5){$\cdot$}
\put(1.22,61){$\cdot$} \put(1.11,60.5){$\cdot$}
\put(1.01,60){$\cdot$} \put(0.99,59.5){$\cdot$}
\put(0.82,59){$\cdot$} \put(0.73,58.5){$\cdot$}
\put(0.64,58){$\cdot$} \put(0.56,57.5){$\cdot$}
\put(0.49,57){$\cdot$} \put(0.42,56.5){$\cdot$}
\put(0.36,56){$\cdot$} \put(0.3,55.5){$\cdot$}
\put(0.25,55){$\cdot$} \put(0.2,54.5){$\cdot$}
\put(0.16,54){$\cdot$} \put(0.12,53.5){$\cdot$}
\put(0.09,53){$\cdot$} \put(0.06,52.5){$\cdot$}
\put(0.04,52){$\cdot$} \put(0.02,51.5){$\cdot$}
\put(0.01,51){$\cdot$} \put(0,50.5){$\cdot$} \put(0,50){$\cdot$}
\put(1.72,37){$\cdot$} \put(1.59,37.5){$\cdot$}
\put(1.46,38){$\cdot$} \put(1.34,38.5){$\cdot$}
\put(1.22,39){$\cdot$} \put(1.11,39.5){$\cdot$}
\put(1.01,40){$\cdot$} \put(0.99,40.5){$\cdot$}
\put(0.82,41){$\cdot$} \put(0.73,41.5){$\cdot$}
\put(0.64,42){$\cdot$} \put(0.56,42.5){$\cdot$}
\put(0.49,43){$\cdot$} \put(0.42,43.5){$\cdot$}
\put(0.36,44){$\cdot$} \put(0.3,44.5){$\cdot$}
\put(0.25,45){$\cdot$} \put(0.2,45.5){$\cdot$}
\put(0.16,46){$\cdot$} \put(0.12,46.5){$\cdot$}
\put(0.09,47){$\cdot$} \put(0.06,47.5){$\cdot$}
\put(0.04,48){$\cdot$} \put(0.02,48.5){$\cdot$}
\put(0.01,49){$\cdot$} \put(0,49.5){$\cdot$}
\put(0.5,67){$\R$}\put(-67,75){$p-q\equiv 2\!\!\!\!\pmod{8}$}
\end{picture}
\]
\vspace{2ex}
\begin{center}
\begin{minipage}{25pc}{\small
{\bf Fig.\,2:} \textbf{The Spinorial Clock}.
The Budinich-Trautman diagram for the Brauer-Wall group
$BW_{\R}\simeq\dZ_8$.}
\end{minipage}
\end{center}
\end{figure}
\medskip

On the other hand, graded central simple Clifford algebras over the field
$\F=\R$ form a \textbf{\emph{graded Brauer group}}
$G(\cl_{p,q},\gamma,\bcirc)$ \cite{Wal64,Var12}, a cyclic
structure of which is described by the Brauer-Wall
group $BW_{\R}\simeq\dZ_8$ \cite{Lou81}. Therefore, a cyclic structure of $G(\cl_{p,q},\gamma,\bcirc)\sim BW_{\R}$ is
defined by a transition
$\cl^+_{p,q}\overset{h}{\longrightarrow}\cl_{p,q}$, where the type
of $\cl_{p,q}$ is defined by the formula $q-p=h+8r$, here
$h\in\{1,\ldots,8\}$, $r\in\dZ$ \cite{BT88}. Let us consider in detail several action cycles of $BW_{\R}\simeq\dZ_8$. In virtue of an isomorphism $\cl^+_{0,1}\simeq\cl_{0,0}$ a transition $\cl^+_{0,1}\overset{1}{\longrightarrow}\cl_{0,1}$ leads to a transition $\cl_{0,0}\overset{1}{\longrightarrow}\cl_{0,1}$, that is, $\R\overset{1}{\longrightarrow}\C$. At this point, $h=1$ and $r=0$ (an original point of the first cycle). Further, in virtue of $\cl^+_{0,2}\simeq\cl_{0,1}$ a transition $\cl^+_{0,2}\overset{2}{\longrightarrow}\cl_{0,2}$ induces a transition $\cl_{0,1}\overset{2}{\longrightarrow}\cl_{0,2}$ ($\C\overset{2}{\longrightarrow}\BH$), at this point, $h=2$ and $r=0$. The following transition $\cl^+_{0,3}\overset{3}{\longrightarrow}\cl_{0,3}$ ($\BH\overset{3}{\longrightarrow}\BH\oplus\BH$) in virtue of $\cl^+_{0,3}\simeq\cl_{0,2}$ leads to a transition $\cl_{0,2}\overset{3}{\longrightarrow}\cl_{0,3}$. At this transition we have $h=3$ and $r=0$. In virtue of an isomorphism $\cl^+_{0,4}\simeq\cl_{0,3}$ a transition $\cl^+_{0,4}\overset{4}{\longrightarrow}\cl_{0,4}$ ($\BH\oplus\BH\overset{4}{\longrightarrow}\BH$) induces $\cl_{0,3}\overset{4}{\longrightarrow}\cl_{0,4}$, at this point, $h=4$ and $r=0$. Further, in virtue of $\cl^+_{0,5}\simeq\cl_{0,4}$ a transition $\cl^+_{0,5}\overset{5}{\longrightarrow}\cl_{0,5}$ ($\BH\overset{5}{\longrightarrow}\C$) induces $\cl_{0,4}\overset{5}{\longrightarrow}\cl_{0,5}$. At this transition we have $h=5$ and $r=0$. The following transition $\cl^+_{0,6}\overset{6}{\longrightarrow}\cl_{0,6}$ ($\C\overset{6}{\longrightarrow}\R$) in virtue of $\cl^+_{0,6}\simeq\cl_{0,5}$ induces $\cl_{0,5}\overset{6}{\longrightarrow}\cl_{0,6}$, here $h=6$ and $r=0$. In its turn, the transition $\cl^+_{0,7}\overset{7}{\longrightarrow}\cl_{0,7}$ ($\R\overset{7}{\longrightarrow}\R\oplus\R$) in virtue of $\cl^+_{0,7}\simeq\cl_{0,6}$ induces $\cl_{0,6}\overset{7}{\longrightarrow}\cl_{0,7}$. At this transition we have $h=7$ and $r=0$. Finally, a transition $\cl^+_{0,8}\overset{8}{\longrightarrow}\cl_{0,8}$ ($\R\oplus\R\overset{8}{\longrightarrow}\R$) finishes the first cycle ($h=8$, $r=0$) and in virtue of $\cl^+_{0,8}\simeq\cl_{0,7}$ induces the following transition $\cl_{0,7}\overset{8}{\longrightarrow}\cl_{0,8}$. The full round of the first cycle is shown on the Fig.\,3. The first cycle generates the first eight squares ($\cl_{0,q}$, $q=0,\ldots,7$) of the spinorial chessboard (see Fig.\,1). The following eight squares ($\cl_{1,q}$, $q=0,\ldots,7$) are generated by the first cycle also (via the rule $\cl_{1,q}\simeq\cl_{1,0}\otimes\cl_{0,q}$, $q=0,\ldots,7$) and so on ($\cl_{2,q}\simeq\cl_{2,0}\otimes\cl_{0,q}$, $\cl_{3,q}\simeq\cl_{3,0}\otimes\cl_{0,q}$, $\ldots$, $\cl_{6,q}\simeq\cl_{6,0}\otimes\cl_{0,q}$, $q=0,\ldots,7$). In like manner all the squares of the spinorial chessboard, shown on the Fig.\,1, are filled.
\begin{figure}[ht]
\[
\unitlength=0.4mm
\begin{picture}(100.00,110.00)
\put(97,67){$\cl_{0,1}$}\put(108,78){$\C$}
\put(80,80){1} \put(75,93.3){$\cdot$} \put(75.5,93){$\cdot$}
\put(76,92.7){$\cdot$} \put(76.5,92.4){$\cdot$}
\put(77,92.08){$\cdot$} \put(77.5,91.76){$\cdot$}
\put(78,91.42){$\cdot$} \put(78.5,91.08){$\cdot$}
\put(79,90.73){$\cdot$} \put(79.5,90.37){$\cdot$}
\put(80,90.0){$\cdot$} \put(80.5,89.62){$\cdot$}
\put(81,89.23){$\cdot$} \put(81.5,88.83){$\cdot$}
\put(82,88.42){$\cdot$} \put(82.5,87.99){$\cdot$}
\put(83,87.56){$\cdot$} \put(83.5,87.12){$\cdot$}
\put(84,86.66){$\cdot$} \put(84.5,86.19){$\cdot$}
\put(85,85.70){$\cdot$} \put(85.5,85.21){$\cdot$}
\put(86,84.69){$\cdot$} \put(86.5,84.17){$\cdot$}
\put(87,83.63){$\cdot$} \put(87.5,83.07){$\cdot$}
\put(88,82.49){$\cdot$} \put(88.5,81.9){$\cdot$}
\put(89,81.29){$\cdot$} \put(89.5,80.65){$\cdot$}
\put(90,80){$\cdot$} \put(90.5,79.32){$\cdot$}
\put(91,78.62){$\cdot$} \put(91.5,77.89){$\cdot$}
\put(92,77.13){$\cdot$} \put(92.5,76.34){$\cdot$}
\put(93,75.51){$\cdot$} \put(93.5,74.65){$\cdot$}
\put(94,73.74){$\cdot$} \put(94.5,72.79){$\cdot$}
\put(96.5,73.74){\vector(1,-2){1}}
\put(80,20){3} \put(97,31){$\cl_{0,2}$}\put(108,42){$\BH$} \put(75,6.7){$\cdot$} \put(75.5,7){$\cdot$}
\put(76,7.29){$\cdot$} \put(76.5,7.6){$\cdot$}
\put(77,7.91){$\cdot$} \put(77.5,8.24){$\cdot$}
\put(78,8.57){$\cdot$} \put(78.5,8.91){$\cdot$}
\put(79,9.27){$\cdot$} \put(79.5,9.63){$\cdot$} \put(80,10){$\cdot$}
\put(80.5,10.38){$\cdot$} \put(81,10.77){$\cdot$}
\put(81.5,11.17){$\cdot$} \put(82,11.58){$\cdot$}
\put(82.5,12.00){$\cdot$} \put(83,12.44){$\cdot$}
\put(83.5,12.88){$\cdot$} \put(84,13.34){$\cdot$}
\put(84.5,13.8){$\cdot$} \put(85,14.29){$\cdot$}
\put(85.5,14.79){$\cdot$} \put(86,15.3){$\cdot$}
\put(86.5,15.82){$\cdot$} \put(87,16.37){$\cdot$}
\put(87.5,16.92){$\cdot$} \put(88,17.5){$\cdot$}
\put(88.5,18.09){$\cdot$} \put(89,18.71){$\cdot$}
\put(89.5,19.34){$\cdot$} \put(90,20){$\cdot$}
\put(90.5,20.68){$\cdot$} \put(91,21.38){$\cdot$}
\put(91.5,22.11){$\cdot$} \put(92,22.87){$\cdot$}
\put(92.5,23.66){$\cdot$} \put(93,24.48){$\cdot$}
\put(93.5,25.34){$\cdot$} \put(94,26.25){$\cdot$}
\put(94.5,27.20){$\cdot$} \put(95,28.20){$\cdot$}
\put(20,80){7} \put(25,93.3){$\cdot$} \put(24.5,93){$\cdot$}
\put(24,92.7){$\cdot$} \put(23.5,92.49){$\cdot$}
\put(23,92.08){$\cdot$} \put(22.5,91.75){$\cdot$}
\put(22,91.42){$\cdot$} \put(21.5,91.08){$\cdot$}
\put(21,90.73){$\cdot$} \put(20.5,90.37){$\cdot$}
\put(20,90){$\cdot$} \put(19.5,89.62){$\cdot$}
\put(19,89.23){$\cdot$} \put(18.5,88.83){$\cdot$}
\put(18,88.42){$\cdot$} \put(17.5,87.99){$\cdot$}
\put(17,87.56){$\cdot$} \put(16.5,87.12){$\cdot$}
\put(16,86.66){$\cdot$} \put(15.5,86.19){$\cdot$}
\put(15,85.70){$\cdot$} \put(14.5,85.21){$\cdot$}
\put(14,84.69){$\cdot$} \put(13.5,84.17){$\cdot$}
\put(13,83.63){$\cdot$} \put(12.5,83.07){$\cdot$}
\put(12,82.49){$\cdot$} \put(11.5,81.9){$\cdot$}
\put(11,81.29){$\cdot$} \put(10.5,80.65){$\cdot$}
\put(10,80){$\cdot$} \put(9.5,79.32){$\cdot$} \put(9,78.62){$\cdot$}
\put(8.5,77.89){$\cdot$} \put(8,77.13){$\cdot$}
\put(7.5,76.34){$\cdot$} \put(7,75.51){$\cdot$}
\put(6.5,74.65){$\cdot$} \put(6,73.79){$\cdot$}
\put(5.5,72.79){$\cdot$} \put(5,71.79){$\cdot$}
\put(20,20){5} \put(25,6.7){$\cdot$} \put(24.5,7){$\cdot$}
\put(24,7.29){$\cdot$} \put(23.5,7.6){$\cdot$}
\put(23,7.91){$\cdot$} \put(22.5,8.24){$\cdot$}
\put(22,8.57){$\cdot$} \put(21.5,8.91){$\cdot$}
\put(21,9.27){$\cdot$} \put(20.5,9.63){$\cdot$} \put(20,10){$\cdot$}
\put(19.5,10.38){$\cdot$} \put(19,10.77){$\cdot$}
\put(18.5,11.17){$\cdot$} \put(18,11.58){$\cdot$}
\put(17.5,12){$\cdot$} \put(17,12.44){$\cdot$}
\put(16.5,12.88){$\cdot$} \put(16,13.34){$\cdot$}
\put(15.5,13.8){$\cdot$} \put(15,14.29){$\cdot$}
\put(14.5,14.79){$\cdot$} \put(14,15.3){$\cdot$}
\put(13.5,15.82){$\cdot$} \put(13,16.37){$\cdot$}
\put(12.5,16.92){$\cdot$} \put(12,17.5){$\cdot$}
\put(11.5,18.09){$\cdot$} \put(11,18.71){$\cdot$}
\put(10.5,19.34){$\cdot$} \put(10,20){$\cdot$}
\put(9.5,20.68){$\cdot$} \put(9,21.38){$\cdot$}
\put(8.5,22.11){$\cdot$} \put(8,22.87){$\cdot$}
\put(7.5,23.66){$\cdot$} \put(7,24.48){$\cdot$}
\put(6.5,25.34){$\cdot$} \put(6,26.25){$\cdot$}
\put(5.5,27.20){$\cdot$} \put(5,28.20){$\cdot$}
\put(19,98){$\cl_{0,7}$}\put(-10,105){$\R\oplus\R$} \put(50,93){8} \put(50,100){$\cdot$}
\put(49.5,99.99){$\cdot$} \put(49,99.98){$\cdot$}
\put(48.5,99.97){$\cdot$} \put(48,99.96){$\cdot$}
\put(47.5,99.94){$\cdot$} \put(47,99.91){$\cdot$}
\put(46.5,99.86){$\cdot$} \put(46,99.84){$\cdot$}
\put(45.5,99.8){$\cdot$} \put(45,99.75){$\cdot$}
\put(44.5,99.7){$\cdot$} \put(44,99.64){$\cdot$}
\put(43.5,99.57){$\cdot$} \put(43,99.51){$\cdot$}
\put(42.5,99.43){$\cdot$} \put(42,99.35){$\cdot$}
\put(41.5,99.27){$\cdot$} \put(41,99.18){$\cdot$}
\put(40.5,99.09){$\cdot$} \put(40,98.99){$\cdot$}
\put(39.5,98.88){$\cdot$} \put(39,98.77){$\cdot$}
\put(38.5,98.66){$\cdot$} \put(38,98.54){$\cdot$}
\put(37.5,98.41){$\cdot$} \put(37,98.28){$\cdot$}
\put(50.5,99.99){$\cdot$} \put(51,99.98){$\cdot$}
\put(51.5,99.97){$\cdot$} \put(52,99.96){$\cdot$}
\put(52.5,99.94){$\cdot$} \put(53,99.91){$\cdot$}
\put(53.5,99.86){$\cdot$} \put(54,99.84){$\cdot$}
\put(54.5,99.8){$\cdot$} \put(55,99.75){$\cdot$}
\put(55.5,99.7){$\cdot$} \put(56,99.64){$\cdot$}
\put(56.5,99.57){$\cdot$} \put(57,99.51){$\cdot$}
\put(57.5,99.43){$\cdot$} \put(58,99.35){$\cdot$}
\put(58.5,99.27){$\cdot$} \put(59,99.18){$\cdot$}
\put(59.5,99.09){$\cdot$} \put(60,98.99){$\cdot$}
\put(60.5,98.88){$\cdot$} \put(61,98.77){$\cdot$}
\put(61.5,98.66){$\cdot$} \put(62,98.54){$\cdot$}
\put(62.5,98.41){$\cdot$} \put(63,98.28){$\cdot$}
\put(67,98){$\cl_{0,0}$}\put(83,105){$\R$}
\put(50,7){4} \put(67,1){$\cl_{0,3}$}\put(90,-4){$\BH\oplus\BH$} \put(50,0){$\cdot$} \put(50.5,0){$\cdot$}
\put(51,0.01){$\cdot$} \put(51.5,0.02){$\cdot$}
\put(52,0.04){$\cdot$} \put(52.5,0.06){$\cdot$}
\put(53,0.09){$\cdot$} \put(53.5,0.12){$\cdot$}
\put(54,0.16){$\cdot$} \put(54.5,0.2){$\cdot$}
\put(55,0.25){$\cdot$} \put(55.5,0.3){$\cdot$}
\put(56,0.36){$\cdot$} \put(56.5,0.42){$\cdot$}
\put(57,0.49){$\cdot$} \put(57.5,0.56){$\cdot$}
\put(58,0.64){$\cdot$} \put(58.5,0.73){$\cdot$}
\put(59,0.82){$\cdot$} \put(59.5,0.91){$\cdot$}
\put(60,1.01){$\cdot$} \put(60.5,1.11){$\cdot$}
\put(61,1.22){$\cdot$} \put(61.5,1.34){$\cdot$}
\put(62,1.46){$\cdot$} \put(62.5,1.59){$\cdot$}
\put(63,1.72){$\cdot$} \put(49.5,0){$\cdot$} \put(49,0.01){$\cdot$}
\put(48.5,0.02){$\cdot$} \put(48,0.04){$\cdot$}
\put(47.5,0.06){$\cdot$} \put(47,0.09){$\cdot$}
\put(46.5,0.12){$\cdot$} \put(46,0.16){$\cdot$}
\put(45.5,0.2){$\cdot$} \put(45,0.25){$\cdot$}
\put(44.5,0.3){$\cdot$} \put(44,0.36){$\cdot$}
\put(43.5,0.42){$\cdot$} \put(43,0.49){$\cdot$}
\put(42.5,0.56){$\cdot$} \put(42,0.64){$\cdot$}
\put(41.5,0.73){$\cdot$} \put(41,0.82){$\cdot$}
\put(40.5,0.91){$\cdot$} \put(40,1.01){$\cdot$}
\put(39.5,1.11){$\cdot$} \put(39,1.22){$\cdot$}
\put(38.5,1.34){$\cdot$} \put(38,1.46){$\cdot$}
\put(37.5,1.59){$\cdot$} \put(37,1.72){$\cdot$}
\put(21,1){$\cl_{0,4}$}\put(10,-4){$\BH$}
\put(93,50){2} \put(98.28,63){$\cdot$} \put(98.41,62.5){$\cdot$}
\put(98.54,62){$\cdot$} \put(98.66,61.5){$\cdot$}
\put(98.77,61){$\cdot$} \put(98.88,60.5){$\cdot$}
\put(98.99,60){$\cdot$} \put(99.09,59.5){$\cdot$}
\put(99.18,59){$\cdot$} \put(99.27,58.5){$\cdot$}
\put(99.35,58){$\cdot$} \put(99.43,57.5){$\cdot$}
\put(99.51,57){$\cdot$} \put(99.57,56.5){$\cdot$}
\put(99.64,56){$\cdot$} \put(99.7,55.5){$\cdot$}
\put(99.75,55){$\cdot$} \put(99.8,54.5){$\cdot$}
\put(99.84,54){$\cdot$} \put(99.86,53.5){$\cdot$}
\put(99.91,53){$\cdot$} \put(99.94,52.5){$\cdot$}
\put(99.96,52){$\cdot$} \put(99.97,51.5){$\cdot$}
\put(99.98,51){$\cdot$} \put(99.99,50.5){$\cdot$}
\put(100,50){$\cdot$} \put(98.28,37){$\cdot$}
\put(98.41,37.5){$\cdot$} \put(98.54,38){$\cdot$}
\put(98.66,38.5){$\cdot$} \put(98.77,39){$\cdot$}
\put(98.88,39.5){$\cdot$} \put(98.99,40){$\cdot$}
\put(99.09,40.5){$\cdot$} \put(99.18,41){$\cdot$}
\put(99.27,41.5){$\cdot$} \put(99.35,42){$\cdot$}
\put(99.43,42.5){$\cdot$} \put(99.51,43){$\cdot$}
\put(99.57,43.5){$\cdot$} \put(99.64,44){$\cdot$}
\put(99.7,44.5){$\cdot$} \put(99.75,45){$\cdot$}
\put(99.8,45.5){$\cdot$} \put(99.84,46){$\cdot$}
\put(99.86,46.5){$\cdot$} \put(99.91,47){$\cdot$}
\put(99.94,47.5){$\cdot$} \put(99.96,48){$\cdot$}
\put(99.97,48.5){$\cdot$} \put(99.98,49){$\cdot$}
\put(99.99,49.5){$\cdot$}
\put(7,50){6} \put(-5,32){$\cl_{0,5}$}\put(-19,29){$\C$} \put(1.72,63){$\cdot$} \put(1.59,62.5){$\cdot$}
\put(1.46,62){$\cdot$} \put(1.34,61.5){$\cdot$}
\put(1.22,61){$\cdot$} \put(1.11,60.5){$\cdot$}
\put(1.01,60){$\cdot$} \put(0.99,59.5){$\cdot$}
\put(0.82,59){$\cdot$} \put(0.73,58.5){$\cdot$}
\put(0.64,58){$\cdot$} \put(0.56,57.5){$\cdot$}
\put(0.49,57){$\cdot$} \put(0.42,56.5){$\cdot$}
\put(0.36,56){$\cdot$} \put(0.3,55.5){$\cdot$}
\put(0.25,55){$\cdot$} \put(0.2,54.5){$\cdot$}
\put(0.16,54){$\cdot$} \put(0.12,53.5){$\cdot$}
\put(0.09,53){$\cdot$} \put(0.06,52.5){$\cdot$}
\put(0.04,52){$\cdot$} \put(0.02,51.5){$\cdot$}
\put(0.01,51){$\cdot$} \put(0,50.5){$\cdot$} \put(0,50){$\cdot$}
\put(1.72,37){$\cdot$} \put(1.59,37.5){$\cdot$}
\put(1.46,38){$\cdot$} \put(1.34,38.5){$\cdot$}
\put(1.22,39){$\cdot$} \put(1.11,39.5){$\cdot$}
\put(1.01,40){$\cdot$} \put(0.99,40.5){$\cdot$}
\put(0.82,41){$\cdot$} \put(0.73,41.5){$\cdot$}
\put(0.64,42){$\cdot$} \put(0.56,42.5){$\cdot$}
\put(0.49,43){$\cdot$} \put(0.42,43.5){$\cdot$}
\put(0.36,44){$\cdot$} \put(0.3,44.5){$\cdot$}
\put(0.25,45){$\cdot$} \put(0.2,45.5){$\cdot$}
\put(0.16,46){$\cdot$} \put(0.12,46.5){$\cdot$}
\put(0.09,47){$\cdot$} \put(0.06,47.5){$\cdot$}
\put(0.04,48){$\cdot$} \put(0.02,48.5){$\cdot$}
\put(0.01,49){$\cdot$} \put(0,49.5){$\cdot$}
\put(-5,67){$\cl_{0,6}$}\put(-19,75){$\R$}
\end{picture}
\]
\vspace{1ex}
\begin{center}
{\small
{\bf Fig.\,3:} The first cycle of
$BW_{\R}\simeq\dZ_8$.}
\end{center}
\end{figure}

The second cycle ($h=1$, $r=1$) is started with a transition $\cl^+_{0,9}\overset{1}{\longrightarrow}\cl_{0,9}$ ($\cl_{0,8}\overset{1}{\longrightarrow}\cl_{0,9}$) and so on:\\
1) $\cl^+_{0,9}\overset{1}{\longrightarrow}\cl_{0,9}$ ($\cl_{0,8}\overset{1}{\longrightarrow}\cl_{0,9}$), $h=1$, $r=1$, $\R\overset{1}{\longrightarrow}\C$;\\
2) $\cl^+_{0,10}\overset{2}{\longrightarrow}\cl_{0,10}$ ($\cl_{0,9}\overset{2}{\longrightarrow}\cl_{0,10}$), $h=2$, $r=1$, $\C\overset{2}{\longrightarrow}\BH$;\\
3) $\cl^+_{0,11}\overset{3}{\longrightarrow}\cl_{0,11}$ ($\cl_{0,10}\overset{3}{\longrightarrow}\cl_{0,11}$), $h=3$, $r=1$, $\BH\overset{3}{\longrightarrow}\BH\oplus\BH$;\\
4) $\cl^+_{0,12}\overset{4}{\longrightarrow}\cl_{0,12}$ ($\cl_{0,11}\overset{4}{\longrightarrow}\cl_{0,12}$), $h=4$, $r=1$, $\BH\oplus\BH\overset{4}{\longrightarrow}\BH$;\\
5) $\cl^+_{0,13}\overset{5}{\longrightarrow}\cl_{0,13}$ ($\cl_{0,12}\overset{5}{\longrightarrow}\cl_{0,13}$), $h=5$, $r=1$, $\BH\overset{5}{\longrightarrow}\C$;\\
6) $\cl^+_{0,14}\overset{6}{\longrightarrow}\cl_{0,14}$ ($\cl_{0,13}\overset{6}{\longrightarrow}\cl_{0,14}$), $h=6$, $r=1$, $\C\overset{6}{\longrightarrow}\R$;\\
7) $\cl^+_{0,15}\overset{7}{\longrightarrow}\cl_{0,15}$ ($\cl_{0,14}\overset{7}{\longrightarrow}\cl_{0,15}$), $h=7$, $r=1$, $\R\overset{7}{\longrightarrow}\R\oplus\R$;\\
8) $\cl^+_{0,16}\overset{8}{\longrightarrow}\cl_{0,16}$ ($\cl_{0,15}\overset{8}{\longrightarrow}\cl_{0,16}$), $h=8$, $r=1$, $\R\oplus\R\overset{8}{\longrightarrow}\R$.\\
The full round of the second cycle is shown on the Fig.\,4.
\begin{figure}[ht]
\[
\unitlength=0.4mm
\begin{picture}(100.00,110.00)
\put(97,67){$\cl_{0,9}$}\put(108,78){$\C$}
\put(80,80){1} \put(75,93.3){$\cdot$} \put(75.5,93){$\cdot$}
\put(76,92.7){$\cdot$} \put(76.5,92.4){$\cdot$}
\put(77,92.08){$\cdot$} \put(77.5,91.76){$\cdot$}
\put(78,91.42){$\cdot$} \put(78.5,91.08){$\cdot$}
\put(79,90.73){$\cdot$} \put(79.5,90.37){$\cdot$}
\put(80,90.0){$\cdot$} \put(80.5,89.62){$\cdot$}
\put(81,89.23){$\cdot$} \put(81.5,88.83){$\cdot$}
\put(82,88.42){$\cdot$} \put(82.5,87.99){$\cdot$}
\put(83,87.56){$\cdot$} \put(83.5,87.12){$\cdot$}
\put(84,86.66){$\cdot$} \put(84.5,86.19){$\cdot$}
\put(85,85.70){$\cdot$} \put(85.5,85.21){$\cdot$}
\put(86,84.69){$\cdot$} \put(86.5,84.17){$\cdot$}
\put(87,83.63){$\cdot$} \put(87.5,83.07){$\cdot$}
\put(88,82.49){$\cdot$} \put(88.5,81.9){$\cdot$}
\put(89,81.29){$\cdot$} \put(89.5,80.65){$\cdot$}
\put(90,80){$\cdot$} \put(90.5,79.32){$\cdot$}
\put(91,78.62){$\cdot$} \put(91.5,77.89){$\cdot$}
\put(92,77.13){$\cdot$} \put(92.5,76.34){$\cdot$}
\put(93,75.51){$\cdot$} \put(93.5,74.65){$\cdot$}
\put(94,73.74){$\cdot$} \put(94.5,72.79){$\cdot$}
\put(96.5,73.74){\vector(1,-2){1}}
\put(80,20){3} \put(97,31){$\cl_{0,10}$}\put(108,42){$\BH$} \put(75,6.7){$\cdot$} \put(75.5,7){$\cdot$}
\put(76,7.29){$\cdot$} \put(76.5,7.6){$\cdot$}
\put(77,7.91){$\cdot$} \put(77.5,8.24){$\cdot$}
\put(78,8.57){$\cdot$} \put(78.5,8.91){$\cdot$}
\put(79,9.27){$\cdot$} \put(79.5,9.63){$\cdot$} \put(80,10){$\cdot$}
\put(80.5,10.38){$\cdot$} \put(81,10.77){$\cdot$}
\put(81.5,11.17){$\cdot$} \put(82,11.58){$\cdot$}
\put(82.5,12.00){$\cdot$} \put(83,12.44){$\cdot$}
\put(83.5,12.88){$\cdot$} \put(84,13.34){$\cdot$}
\put(84.5,13.8){$\cdot$} \put(85,14.29){$\cdot$}
\put(85.5,14.79){$\cdot$} \put(86,15.3){$\cdot$}
\put(86.5,15.82){$\cdot$} \put(87,16.37){$\cdot$}
\put(87.5,16.92){$\cdot$} \put(88,17.5){$\cdot$}
\put(88.5,18.09){$\cdot$} \put(89,18.71){$\cdot$}
\put(89.5,19.34){$\cdot$} \put(90,20){$\cdot$}
\put(90.5,20.68){$\cdot$} \put(91,21.38){$\cdot$}
\put(91.5,22.11){$\cdot$} \put(92,22.87){$\cdot$}
\put(92.5,23.66){$\cdot$} \put(93,24.48){$\cdot$}
\put(93.5,25.34){$\cdot$} \put(94,26.25){$\cdot$}
\put(94.5,27.20){$\cdot$} \put(95,28.20){$\cdot$}
\put(20,80){7} \put(25,93.3){$\cdot$} \put(24.5,93){$\cdot$}
\put(24,92.7){$\cdot$} \put(23.5,92.49){$\cdot$}
\put(23,92.08){$\cdot$} \put(22.5,91.75){$\cdot$}
\put(22,91.42){$\cdot$} \put(21.5,91.08){$\cdot$}
\put(21,90.73){$\cdot$} \put(20.5,90.37){$\cdot$}
\put(20,90){$\cdot$} \put(19.5,89.62){$\cdot$}
\put(19,89.23){$\cdot$} \put(18.5,88.83){$\cdot$}
\put(18,88.42){$\cdot$} \put(17.5,87.99){$\cdot$}
\put(17,87.56){$\cdot$} \put(16.5,87.12){$\cdot$}
\put(16,86.66){$\cdot$} \put(15.5,86.19){$\cdot$}
\put(15,85.70){$\cdot$} \put(14.5,85.21){$\cdot$}
\put(14,84.69){$\cdot$} \put(13.5,84.17){$\cdot$}
\put(13,83.63){$\cdot$} \put(12.5,83.07){$\cdot$}
\put(12,82.49){$\cdot$} \put(11.5,81.9){$\cdot$}
\put(11,81.29){$\cdot$} \put(10.5,80.65){$\cdot$}
\put(10,80){$\cdot$} \put(9.5,79.32){$\cdot$} \put(9,78.62){$\cdot$}
\put(8.5,77.89){$\cdot$} \put(8,77.13){$\cdot$}
\put(7.5,76.34){$\cdot$} \put(7,75.51){$\cdot$}
\put(6.5,74.65){$\cdot$} \put(6,73.79){$\cdot$}
\put(5.5,72.79){$\cdot$} \put(5,71.79){$\cdot$}
\put(20,20){5} \put(25,6.7){$\cdot$} \put(24.5,7){$\cdot$}
\put(24,7.29){$\cdot$} \put(23.5,7.6){$\cdot$}
\put(23,7.91){$\cdot$} \put(22.5,8.24){$\cdot$}
\put(22,8.57){$\cdot$} \put(21.5,8.91){$\cdot$}
\put(21,9.27){$\cdot$} \put(20.5,9.63){$\cdot$} \put(20,10){$\cdot$}
\put(19.5,10.38){$\cdot$} \put(19,10.77){$\cdot$}
\put(18.5,11.17){$\cdot$} \put(18,11.58){$\cdot$}
\put(17.5,12){$\cdot$} \put(17,12.44){$\cdot$}
\put(16.5,12.88){$\cdot$} \put(16,13.34){$\cdot$}
\put(15.5,13.8){$\cdot$} \put(15,14.29){$\cdot$}
\put(14.5,14.79){$\cdot$} \put(14,15.3){$\cdot$}
\put(13.5,15.82){$\cdot$} \put(13,16.37){$\cdot$}
\put(12.5,16.92){$\cdot$} \put(12,17.5){$\cdot$}
\put(11.5,18.09){$\cdot$} \put(11,18.71){$\cdot$}
\put(10.5,19.34){$\cdot$} \put(10,20){$\cdot$}
\put(9.5,20.68){$\cdot$} \put(9,21.38){$\cdot$}
\put(8.5,22.11){$\cdot$} \put(8,22.87){$\cdot$}
\put(7.5,23.66){$\cdot$} \put(7,24.48){$\cdot$}
\put(6.5,25.34){$\cdot$} \put(6,26.25){$\cdot$}
\put(5.5,27.20){$\cdot$} \put(5,28.20){$\cdot$}
\put(17,98){$\cl_{0,15}$}\put(-10,105){$\R\oplus\R$} \put(50,93){8} \put(50,100){$\cdot$}
\put(49.5,99.99){$\cdot$} \put(49,99.98){$\cdot$}
\put(48.5,99.97){$\cdot$} \put(48,99.96){$\cdot$}
\put(47.5,99.94){$\cdot$} \put(47,99.91){$\cdot$}
\put(46.5,99.86){$\cdot$} \put(46,99.84){$\cdot$}
\put(45.5,99.8){$\cdot$} \put(45,99.75){$\cdot$}
\put(44.5,99.7){$\cdot$} \put(44,99.64){$\cdot$}
\put(43.5,99.57){$\cdot$} \put(43,99.51){$\cdot$}
\put(42.5,99.43){$\cdot$} \put(42,99.35){$\cdot$}
\put(41.5,99.27){$\cdot$} \put(41,99.18){$\cdot$}
\put(40.5,99.09){$\cdot$} \put(40,98.99){$\cdot$}
\put(39.5,98.88){$\cdot$} \put(39,98.77){$\cdot$}
\put(38.5,98.66){$\cdot$} \put(38,98.54){$\cdot$}
\put(37.5,98.41){$\cdot$} \put(37,98.28){$\cdot$}
\put(50.5,99.99){$\cdot$} \put(51,99.98){$\cdot$}
\put(51.5,99.97){$\cdot$} \put(52,99.96){$\cdot$}
\put(52.5,99.94){$\cdot$} \put(53,99.91){$\cdot$}
\put(53.5,99.86){$\cdot$} \put(54,99.84){$\cdot$}
\put(54.5,99.8){$\cdot$} \put(55,99.75){$\cdot$}
\put(55.5,99.7){$\cdot$} \put(56,99.64){$\cdot$}
\put(56.5,99.57){$\cdot$} \put(57,99.51){$\cdot$}
\put(57.5,99.43){$\cdot$} \put(58,99.35){$\cdot$}
\put(58.5,99.27){$\cdot$} \put(59,99.18){$\cdot$}
\put(59.5,99.09){$\cdot$} \put(60,98.99){$\cdot$}
\put(60.5,98.88){$\cdot$} \put(61,98.77){$\cdot$}
\put(61.5,98.66){$\cdot$} \put(62,98.54){$\cdot$}
\put(62.5,98.41){$\cdot$} \put(63,98.28){$\cdot$}
\put(67,98){$\cl_{0,8}$}\put(83,105){$\R$}
\put(50,7){4} \put(67,1){$\cl_{0,11}$}\put(90,-4){$\BH\oplus\BH$} \put(50,0){$\cdot$} \put(50.5,0){$\cdot$}
\put(51,0.01){$\cdot$} \put(51.5,0.02){$\cdot$}
\put(52,0.04){$\cdot$} \put(52.5,0.06){$\cdot$}
\put(53,0.09){$\cdot$} \put(53.5,0.12){$\cdot$}
\put(54,0.16){$\cdot$} \put(54.5,0.2){$\cdot$}
\put(55,0.25){$\cdot$} \put(55.5,0.3){$\cdot$}
\put(56,0.36){$\cdot$} \put(56.5,0.42){$\cdot$}
\put(57,0.49){$\cdot$} \put(57.5,0.56){$\cdot$}
\put(58,0.64){$\cdot$} \put(58.5,0.73){$\cdot$}
\put(59,0.82){$\cdot$} \put(59.5,0.91){$\cdot$}
\put(60,1.01){$\cdot$} \put(60.5,1.11){$\cdot$}
\put(61,1.22){$\cdot$} \put(61.5,1.34){$\cdot$}
\put(62,1.46){$\cdot$} \put(62.5,1.59){$\cdot$}
\put(63,1.72){$\cdot$} \put(49.5,0){$\cdot$} \put(49,0.01){$\cdot$}
\put(48.5,0.02){$\cdot$} \put(48,0.04){$\cdot$}
\put(47.5,0.06){$\cdot$} \put(47,0.09){$\cdot$}
\put(46.5,0.12){$\cdot$} \put(46,0.16){$\cdot$}
\put(45.5,0.2){$\cdot$} \put(45,0.25){$\cdot$}
\put(44.5,0.3){$\cdot$} \put(44,0.36){$\cdot$}
\put(43.5,0.42){$\cdot$} \put(43,0.49){$\cdot$}
\put(42.5,0.56){$\cdot$} \put(42,0.64){$\cdot$}
\put(41.5,0.73){$\cdot$} \put(41,0.82){$\cdot$}
\put(40.5,0.91){$\cdot$} \put(40,1.01){$\cdot$}
\put(39.5,1.11){$\cdot$} \put(39,1.22){$\cdot$}
\put(38.5,1.34){$\cdot$} \put(38,1.46){$\cdot$}
\put(37.5,1.59){$\cdot$} \put(37,1.72){$\cdot$}
\put(18,1){$\cl_{0,12}$}\put(10,-4){$\BH$}
\put(93,50){2} \put(98.28,63){$\cdot$} \put(98.41,62.5){$\cdot$}
\put(98.54,62){$\cdot$} \put(98.66,61.5){$\cdot$}
\put(98.77,61){$\cdot$} \put(98.88,60.5){$\cdot$}
\put(98.99,60){$\cdot$} \put(99.09,59.5){$\cdot$}
\put(99.18,59){$\cdot$} \put(99.27,58.5){$\cdot$}
\put(99.35,58){$\cdot$} \put(99.43,57.5){$\cdot$}
\put(99.51,57){$\cdot$} \put(99.57,56.5){$\cdot$}
\put(99.64,56){$\cdot$} \put(99.7,55.5){$\cdot$}
\put(99.75,55){$\cdot$} \put(99.8,54.5){$\cdot$}
\put(99.84,54){$\cdot$} \put(99.86,53.5){$\cdot$}
\put(99.91,53){$\cdot$} \put(99.94,52.5){$\cdot$}
\put(99.96,52){$\cdot$} \put(99.97,51.5){$\cdot$}
\put(99.98,51){$\cdot$} \put(99.99,50.5){$\cdot$}
\put(100,50){$\cdot$} \put(98.28,37){$\cdot$}
\put(98.41,37.5){$\cdot$} \put(98.54,38){$\cdot$}
\put(98.66,38.5){$\cdot$} \put(98.77,39){$\cdot$}
\put(98.88,39.5){$\cdot$} \put(98.99,40){$\cdot$}
\put(99.09,40.5){$\cdot$} \put(99.18,41){$\cdot$}
\put(99.27,41.5){$\cdot$} \put(99.35,42){$\cdot$}
\put(99.43,42.5){$\cdot$} \put(99.51,43){$\cdot$}
\put(99.57,43.5){$\cdot$} \put(99.64,44){$\cdot$}
\put(99.7,44.5){$\cdot$} \put(99.75,45){$\cdot$}
\put(99.8,45.5){$\cdot$} \put(99.84,46){$\cdot$}
\put(99.86,46.5){$\cdot$} \put(99.91,47){$\cdot$}
\put(99.94,47.5){$\cdot$} \put(99.96,48){$\cdot$}
\put(99.97,48.5){$\cdot$} \put(99.98,49){$\cdot$}
\put(99.99,49.5){$\cdot$}
\put(7,50){6} \put(-5,32){$\cl_{0,13}$}\put(-19,29){$\C$} \put(1.72,63){$\cdot$} \put(1.59,62.5){$\cdot$}
\put(1.46,62){$\cdot$} \put(1.34,61.5){$\cdot$}
\put(1.22,61){$\cdot$} \put(1.11,60.5){$\cdot$}
\put(1.01,60){$\cdot$} \put(0.99,59.5){$\cdot$}
\put(0.82,59){$\cdot$} \put(0.73,58.5){$\cdot$}
\put(0.64,58){$\cdot$} \put(0.56,57.5){$\cdot$}
\put(0.49,57){$\cdot$} \put(0.42,56.5){$\cdot$}
\put(0.36,56){$\cdot$} \put(0.3,55.5){$\cdot$}
\put(0.25,55){$\cdot$} \put(0.2,54.5){$\cdot$}
\put(0.16,54){$\cdot$} \put(0.12,53.5){$\cdot$}
\put(0.09,53){$\cdot$} \put(0.06,52.5){$\cdot$}
\put(0.04,52){$\cdot$} \put(0.02,51.5){$\cdot$}
\put(0.01,51){$\cdot$} \put(0,50.5){$\cdot$} \put(0,50){$\cdot$}
\put(1.72,37){$\cdot$} \put(1.59,37.5){$\cdot$}
\put(1.46,38){$\cdot$} \put(1.34,38.5){$\cdot$}
\put(1.22,39){$\cdot$} \put(1.11,39.5){$\cdot$}
\put(1.01,40){$\cdot$} \put(0.99,40.5){$\cdot$}
\put(0.82,41){$\cdot$} \put(0.73,41.5){$\cdot$}
\put(0.64,42){$\cdot$} \put(0.56,42.5){$\cdot$}
\put(0.49,43){$\cdot$} \put(0.42,43.5){$\cdot$}
\put(0.36,44){$\cdot$} \put(0.3,44.5){$\cdot$}
\put(0.25,45){$\cdot$} \put(0.2,45.5){$\cdot$}
\put(0.16,46){$\cdot$} \put(0.12,46.5){$\cdot$}
\put(0.09,47){$\cdot$} \put(0.06,47.5){$\cdot$}
\put(0.04,48){$\cdot$} \put(0.02,48.5){$\cdot$}
\put(0.01,49){$\cdot$} \put(0,49.5){$\cdot$}
\put(-5,67){$\cl_{0,14}$}\put(-19,75){$\R$}
\end{picture}
\]
\vspace{1ex}
\begin{center}
{\small
{\bf Fig.\,4:} The second cycle of
$BW_{\R}\simeq\dZ_8$.}
\end{center}
\end{figure}

Further, the eighth cycle ($r=7$) finishes the construction of eight squares of the new spinorial chessboard (fractal self-similar algebraic structure of the second order):\\
1) $\cl^+_{0,57}\overset{1}{\longrightarrow}\cl_{0,57}$ ($\cl_{0,56}\overset{1}{\longrightarrow}\cl_{0,57}$), $h=1$, $r=7$, $\R\overset{1}{\longrightarrow}\C$;\\
2) $\cl^+_{0,58}\overset{2}{\longrightarrow}\cl_{0,58}$ ($\cl_{0,57}\overset{2}{\longrightarrow}\cl_{0,58}$), $h=2$, $r=7$, $\C\overset{2}{\longrightarrow}\BH$;\\
3) $\cl^+_{0,59}\overset{3}{\longrightarrow}\cl_{0,59}$ ($\cl_{0,58}\overset{3}{\longrightarrow}\cl_{0,59}$), $h=3$, $r=7$, $\BH\overset{3}{\longrightarrow}\BH\oplus\BH$;\\
4) $\cl^+_{0,60}\overset{4}{\longrightarrow}\cl_{0,60}$ ($\cl_{0,59}\overset{4}{\longrightarrow}\cl_{0,60}$), $h=4$, $r=7$, $\BH\oplus\BH\overset{4}{\longrightarrow}\BH$;\\
5) $\cl^+_{0,61}\overset{5}{\longrightarrow}\cl_{0,61}$ ($\cl_{0,60}\overset{5}{\longrightarrow}\cl_{0,61}$), $h=5$, $r=7$, $\BH\overset{5}{\longrightarrow}\C$;\\
6) $\cl^+_{0,62}\overset{6}{\longrightarrow}\cl_{0,62}$ ($\cl_{0,61}\overset{6}{\longrightarrow}\cl_{0,62}$), $h=6$, $r=7$, $\C\overset{6}{\longrightarrow}\R$;\\
7) $\cl^+_{0,63}\overset{7}{\longrightarrow}\cl_{0,63}$ ($\cl_{0,62}\overset{7}{\longrightarrow}\cl_{0,63}$), $h=7$, $r=7$, $\R\overset{7}{\longrightarrow}\R\oplus\R$;\\
8) $\cl^+_{0,64}\overset{8}{\longrightarrow}\cl_{0,64}$ ($\cl_{0,63}\overset{8}{\longrightarrow}\cl_{0,64}$), $h=8$, $r=7$, $\R\oplus\R\overset{8}{\longrightarrow}\R$.\\
It is obvious that a process, which leads us to the fractal self-similar algebraic structure of the second order (Fig.\,5), can be continued to any order (to infinity). Therefore, we come here to a fractal self-similar structure which is analogous to a Sierpi\'{n}ski carpet \cite{Mandel}. The fractal dimension (Besicovitch-Hausdorff dimension) of this structure is equal to $D=\ln 63/\ln 8\approx 1,9924$.
\begin{figure}[ht]
\unitlength=1mm
\begin{center}
\begin{picture}(100,100)
\put(0,0){\circle*{5}}
\put(3,-4){$0$}
\put(2.5,0){\line(1,0){5}}
\put(0,2.5){\line(0,1){5}}
\put(10,0){\circle{5}}
\put(13,-4){$1$}
\put(12.5,0){\line(1,0){5}}
\put(10,2.5){\line(0,1){5}}
\put(20,0){\circle*{5}}
\put(23,-4){$2$}
\put(22.5,0){\line(1,0){5}}
\put(20,2.5){\line(0,1){5}}
\put(30,0){\circle{5}}
\put(33,-4){$3$}
\put(32.5,0){\line(1,0){5}}
\put(30,2.5){\line(0,1){5}}
\put(40,0){\circle*{5}}
\put(43,-4){$4$}
\put(42.5,0){\line(1,0){5}}
\put(40,2.5){\line(0,1){5}}
\put(50,0){\circle{5}}
\put(53,-4){$5$}
\put(52.5,0){\line(1,0){5}}
\put(50,2.5){\line(0,1){5}}
\put(60,0){\circle*{5}}
\put(63,-4){$6$}
\put(62.5,0){\line(1,0){5}}
\put(60,2.5){\line(0,1){5}}
\put(70,0){\circle{5}}
\put(73,-4){$7$}
\put(70,2.5){\line(0,1){5}}
\put(0,10){\circle{5}}
\put(-5,6){$1$}
\put(2.5,10){\line(1,0){5}}
\put(0,12.5){\line(0,1){5}}
\put(10,10){\circle*{5}}
\put(12.5,10){\line(1,0){5}}
\put(10,12.5){\line(0,1){5}}
\put(20,10){\circle{5}}
\put(22.5,10){\line(1,0){5}}
\put(20,12.5){\line(0,1){5}}
\put(30,10){\circle*{5}}
\put(32.5,10){\line(1,0){5}}
\put(30,12.5){\line(0,1){5}}
\put(40,10){\circle{5}}
\put(42.5,10){\line(1,0){5}}
\put(40,12.5){\line(0,1){5}}
\put(50,10){\circle*{5}}
\put(52.5,10){\line(1,0){5}}
\put(50,12.5){\line(0,1){5}}
\put(60,10){\circle{5}}
\put(62.5,10){\line(1,0){5}}
\put(60,12.5){\line(0,1){5}}
\put(70,10){\circle*{5}}
\put(70,12.5){\line(0,1){5}}
\put(0,20){\circle*{5}}
\put(-5,16){$2$}
\put(2.5,20){\line(1,0){5}}
\put(0,22.5){\line(0,1){5}}
\put(10,20){\circle{5}}
\put(12.5,20){\line(1,0){5}}
\put(10,22.5){\line(0,1){5}}
\put(20,20){\circle*{5}}
\put(22.5,20){\line(1,0){5}}
\put(20,22.5){\line(0,1){5}}
\put(30,20){\circle{5}}
\put(32.5,20){\line(1,0){5}}
\put(30,22.5){\line(0,1){5}}
\put(40,20){\circle*{5}}
\put(42.5,20){\line(1,0){5}}
\put(40,22.5){\line(0,1){5}}
\put(50,20){\circle{5}}
\put(52.5,20){\line(1,0){5}}
\put(50,22.5){\line(0,1){5}}
\put(60,20){\circle*{5}}
\put(62.5,20){\line(1,0){5}}
\put(60,22.5){\line(0,1){5}}
\put(70,20){\circle{5}}
\put(70,22.5){\line(0,1){5}}
\put(0,30){\circle{5}}
\put(-5,26){$3$}
\put(2.5,30){\line(1,0){5}}
\put(0,32.5){\line(0,1){5}}
\put(10,30){\circle*{5}}
\put(12.5,30){\line(1,0){5}}
\put(10,32.5){\line(0,1){5}}
\put(20,30){\circle{5}}
\put(22.5,30){\line(1,0){5}}
\put(20,32.5){\line(0,1){5}}
\put(30,30){\circle*{5}}
\put(32.5,30){\line(1,0){5}}
\put(30,32.5){\line(0,1){5}}
\put(40,30){\circle{5}}
\put(42.5,30){\line(1,0){5}}
\put(40,32.5){\line(0,1){5}}
\put(50,30){\circle*{5}}
\put(52.5,30){\line(1,0){5}}
\put(50,32.5){\line(0,1){5}}
\put(60,30){\circle{5}}
\put(62.5,30){\line(1,0){5}}
\put(60,32.5){\line(0,1){5}}
\put(70,30){\circle*{5}}
\put(70,32.5){\line(0,1){5}}
\put(0,40){\circle*{5}}
\put(-5,36){$4$}
\put(2.5,40){\line(1,0){5}}
\put(0,42.5){\line(0,1){5}}
\put(10,40){\circle{5}}
\put(12.5,40){\line(1,0){5}}
\put(10,42.5){\line(0,1){5}}
\put(20,40){\circle*{5}}
\put(22.5,40){\line(1,0){5}}
\put(20,42.5){\line(0,1){5}}
\put(30,40){\circle{5}}
\put(32.5,40){\line(1,0){5}}
\put(30,42.5){\line(0,1){5}}
\put(40,40){\circle*{5}}
\put(42.5,40){\line(1,0){5}}
\put(40,42.5){\line(0,1){5}}
\put(50,40){\circle{5}}
\put(52.5,40){\line(1,0){5}}
\put(50,42.5){\line(0,1){5}}
\put(60,40){\circle*{5}}
\put(62.5,40){\line(1,0){5}}
\put(60,42.5){\line(0,1){5}}
\put(70,40){\circle{5}}
\put(70,42.5){\line(0,1){5}}
\put(0,50){\circle{5}}
\put(-5,46){$5$}
\put(2.5,50){\line(1,0){5}}
\put(0,52.5){\line(0,1){5}}
\put(10,50){\circle*{5}}
\put(12.5,50){\line(1,0){5}}
\put(10,52.5){\line(0,1){5}}
\put(20,50){\circle{5}}
\put(22.5,50){\line(1,0){5}}
\put(20,52.5){\line(0,1){5}}
\put(30,50){\circle*{5}}
\put(32.5,50){\line(1,0){5}}
\put(30,52.5){\line(0,1){5}}
\put(40,50){\circle{5}}
\put(42.5,50){\line(1,0){5}}
\put(40,52.5){\line(0,1){5}}
\put(50,50){\circle*{5}}
\put(52.5,50){\line(1,0){5}}
\put(50,52.5){\line(0,1){5}}
\put(60,50){\circle{5}}
\put(62.5,50){\line(1,0){5}}
\put(60,52.5){\line(0,1){5}}
\put(70,50){\circle*{5}}
\put(70,52.5){\line(0,1){5}}
\put(0,60){\circle*{5}}
\put(-5,56){$6$}
\put(2.5,60){\line(1,0){5}}
\put(0,62.5){\line(0,1){5}}
\put(10,60){\circle{5}}
\put(12.5,60){\line(1,0){5}}
\put(10,62.5){\line(0,1){5}}
\put(20,60){\circle*{5}}
\put(22.5,60){\line(1,0){5}}
\put(20,62.5){\line(0,1){5}}
\put(30,60){\circle{5}}
\put(32.5,60){\line(1,0){5}}
\put(30,62.5){\line(0,1){5}}
\put(40,60){\circle*{5}}
\put(42.5,60){\line(1,0){5}}
\put(40,62.5){\line(0,1){5}}
\put(50,60){\circle{5}}
\put(52.5,60){\line(1,0){5}}
\put(50,62.5){\line(0,1){5}}
\put(60,60){\circle*{5}}
\put(62.5,60){\line(1,0){5}}
\put(60,62.5){\line(0,1){5}}
\put(70,60){\circle{5}}
\put(70,62.5){\line(0,1){5}}
\put(0,70){\circle{5}}
\put(-5,66){$7$}
\put(2.5,70){\line(1,0){5}}
\put(10,70){\circle*{5}}
\put(12.5,70){\line(1,0){5}}
\put(20,70){\circle{5}}
\put(22.5,70){\line(1,0){5}}
\put(30,70){\circle*{5}}
\put(32.5,70){\line(1,0){5}}
\put(40,70){\circle{5}}
\put(42.5,70){\line(1,0){5}}
\put(50,70){\circle*{5}}
\put(52.5,70){\line(1,0){5}}
\put(60,70){\circle{5}}
\put(62.5,70){\line(1,0){5}}
\put(70,70){\circle*{5}}
\put(80,80){\circle*{5}}
\put(15,23){$\bullet$}
\put(16,24){\vector(1,0){38}}
\put(54,24){\vector(1,0){36.5}}
\put(90,23){$\bullet$}
\put(91,24){\vector(0,1){30}}
\put(91,53){\vector(0,1){3}}
\put(91,56){\vector(0,1){42}}
\put(90,98){$\bullet$}
\end{picture}
\end{center}
\vspace{0.3cm}
\begin{center}\begin{minipage}{32pc}{\small {\bf Fig.\,5:} \textbf{The Spinorial Chessboard of the second order}. Black and white circles (squares of the board) present spinorial chessboards of the first order which are shown on the Fig.\,1. These chessboards are distinguished from each other by the cycle number ($r=0,\ldots,7$) of the Brauer-Wall group $BW_{\R}\simeq\dZ_8$.}\end{minipage}\end{center}
\end{figure}

\begin{thm}
Over the field $\F=\R$ the action of the Brauer-Wall group $BW_{\R}\simeq\dZ_8$ induces modulo 4 periodic relations for the idempotent group $T_{p,q}(f)\simeq\left(\dZ_2\right)^{\otimes (k+1)}$:
\[
\left(\dZ_2\right)^{\otimes (1+q-r_{q-p}+4)}\simeq\left(\dZ_2\right)^{\otimes (1+q-r_{q-p})}\otimes\left(\dZ_2\right)^{\otimes 4}.
\]
\end{thm}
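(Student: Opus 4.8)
The plan is to reduce the assertion to two elementary ingredients: first, that one full round of $BW_{\R}\simeq\dZ_8$ acts on the index pair $(p,q)$ of $\cl_{p,q}$ by the shift $q\mapsto q+8$ with $p$ held fixed; and second, that the displayed isomorphism of idempotent groups is then a purely formal consequence of the Radon-Hurwitz recursion $r_{i+8}=r_i+4$ together with the associativity of the tensor product of abelian groups. So the only thing to establish with any care is the first point; the rest is bookkeeping.

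For the first ingredient, recall from the discussion of the spinorial clock that an elementary step of $BW_{\R}$ is the transition $\cl^+_{p,q+1}\overset{h}{\longrightarrow}\cl_{p,q+1}$ which, via the isomorphism $\cl^+_{p,q+1}\simeq\cl_{p,q}$, is realized as $\cl_{p,q}\overset{h}{\longrightarrow}\cl_{p,q+1}$, with the type recorded by $q-p=h+8r$, $h\in\{1,\ldots,8\}$. Hence the eight consecutive steps $h=1,2,\ldots,8$ of one cycle carry $\cl_{p,q}$ to $\cl_{p,q+8}$, advancing the cycle counter $r\mapsto r+1$; in other words a single cycle of $BW_{\R}$ is exactly the substitution $q\mapsto q+8$, under which $q-p$ also increases by $8$. (The same conclusion, and the same net change below, is reached if one lets $p$ vary instead, so the statement is insensitive to this choice.)

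Now apply this to the defining formula $k=q-r_{q-p}$ for the number of idempotent factors. Since $(q+8)-p=(q-p)+8$, the Radon-Hurwitz periodicity gives $r_{(q+8)-p}=r_{q-p}+4$, whence
\[
k(p,q+8)=(q+8)-r_{(q+8)-p}=(q+8)-\bigl(r_{q-p}+4\bigr)=\bigl(q-r_{q-p}\bigr)+4=k(p,q)+4 ,
\]
so the exponent of the idempotent group $T_{p,q}(f)\simeq(\dZ_2)^{\otimes(k+1)}$ is sent from $k+1$ to $(k+1)+4=1+q-r_{q-p}+4$. Because for abelian groups $(\dZ_2)^{\otimes a}\otimes(\dZ_2)^{\otimes b}\simeq(\dZ_2)^{\otimes(a+b)}$, this factorizes as
\[
\left(\dZ_2\right)^{\otimes (1+q-r_{q-p}+4)}\simeq\left(\dZ_2\right)^{\otimes (1+q-r_{q-p})}\otimes\left(\dZ_2\right)^{\otimes 4},
\]
which is the claimed relation; the period factor $\left(\dZ_2\right)^{\otimes 4}$ is independent of $(p,q)$, and this is precisely the asserted modulo $4$ periodicity induced by the modulo $8$ action of $BW_{\R}$.

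There is no deep obstacle here: the whole content is the matching of the period $8$ of the Brauer-Wall group with the period $8$ (and jump $4$) of the Radon-Hurwitz sequence. The one point worth double-checking is that the formula $k=q-r_{q-p}$, hence the factorization above, holds uniformly over all eight types of $\cl_{p,q}$, including the two semi-simple classes $p-q\equiv 1,5\pmod 8$; this is built into the Radon-Hurwitz recursion, which has no exceptional values, so no case distinction is needed.
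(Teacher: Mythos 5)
Your proposal is correct, and it uses the same two ingredients as the paper's own proof: the formula $k=q-r_{q-p}$ for the number of idempotent factors, and the Radon--Hurwitz recursion $r_{i+8}=r_i+4$. The difference is one of directness. The paper proves the theorem by explicitly tabulating $k$ across three consecutive cycles of $BW_{\R}$ restricted to the column $p=0$ (i.e.\ it lists $k$ for $\cl_{0,q}$, $q=0,\ldots,24$, step by step), and from this inspection infers the general relation $k\mapsto k+4$ per cycle. You instead observe at once that one round of $BW_{\R}$ amounts to $q\mapsto q+8$ with $p$ fixed, and then compute $k(p,q+8)=(q+8)-r_{(q-p)+8}=(q+8)-(r_{q-p}+4)=k(p,q)+4$ as a single algebraic consequence of the recursion; the displayed isomorphism then follows from the trivial additivity $(\dZ_2)^{\otimes a}\otimes(\dZ_2)^{\otimes b}\simeq(\dZ_2)^{\otimes(a+b)}$. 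What your version buys is brevity and uniformity: the argument is one line, it holds for arbitrary $(p,q)$ without any column-by-column verification (the paper's explicit check is only for $p=0$, with the other columns covered implicitly via $\cl_{p,q}\simeq\cl_{p,0}\otimes\cl_{0,q}$), and, as you note, the formula $k=q-r_{q-p}$ and the recursion have no exceptional cases, so the two semi-simple types $p-q\equiv 1,5\pmod 8$ require no separate treatment. What the paper's enumeration buys is concrete illustration of the periodicity and continuity with the worked cycle examples given earlier in Section~2, which is appropriate for its expository style but logically redundant once your one-line derivation is in hand.
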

\begin{proof}
Let us consider several cycles of $BW_{\R}\simeq\dZ_8$ on the set of the groups $(\e_{\alpha_1},\e_{\alpha_2},\ldots,\e_{\alpha_k})\simeq\left(\dZ_2\right)^{\otimes
k}$. The first step $\cl_{0,0}\overset{1}{\longrightarrow}\cl_{0,1}$ of the first cycle ($h=1$, $r=1$) induces a transition $\left(\dZ_2\right)^{0}\overset{1}{\longrightarrow}\left(\dZ_2\right)^{0}$, since $k=q-r_{q-p}=0-r_0=0$ ($\cl_{0,0}$) and $k=q-r_{q-p}=1-r_1=0$ ($\cl_{0,1}$). The second step $\cl_{0,1}\overset{2}{\longrightarrow}\cl_{0,2}$ ($h=2$, $r=1$) gives also $\left(\dZ_2\right)^{0}\overset{1}{\longrightarrow}\left(\dZ_2\right)^{0}$, since $k=2-r_2=0$ for the algebra $\cl_{0,2}$. The third step $\cl_{0,2}\overset{3}{\longrightarrow}\cl_{0,3}$ of the first cycle ($h=3$, $r=1$) induces a transition $\left(\dZ_2\right)^{0}\overset{3}{\longrightarrow}\left(\dZ_2\right)^{\otimes 1}$ ($1\overset{3}{\longrightarrow}\dZ_2$), since in this case we have $k=3-r_3=1$ for $\cl_{0,3}$. The following step $\cl_{0,3}\overset{4}{\longrightarrow}\cl_{0,4}$ ($h=4$, $r=0$) of the group $BW_{\R}\simeq\dZ_8$ gives $\left(\dZ_2\right)^{\otimes 1}\overset{4}{\longrightarrow}\left(\dZ_2\right)^{\otimes 1}$ ($\dZ_2\overset{4}{\longrightarrow}\dZ_2$) in virtue of $k=4-r_4=1$ for $\cl_{0,4}$. The fifth step $\cl_{0,4}\overset{5}{\longrightarrow}\cl_{0,5}$ induces a transition $\left(\dZ_2\right)^{\otimes 1}\overset{5}{\longrightarrow}\left(\dZ_2\right)^{\otimes 2}$ ($\dZ_2\overset{5}{\longrightarrow}\dZ_2\otimes\dZ_2$), since $k=5-r_5=2$ for $\cl_{0,5}$. In turn, the sixth and seventh steps $\cl_{0,5}\overset{6}{\longrightarrow}\cl_{0,6}$ and $\cl_{0,6}\overset{7}{\longrightarrow}\cl_{0,7}$ of $BW_{\R}\simeq\dZ_8$ generate transitions $\left(\dZ_2\right)^{\otimes 2}\overset{6}{\longrightarrow}\left(\dZ_2\right)^{\otimes 3}$ ($\dZ_2\otimes\dZ_2\overset{6}{\longrightarrow}\dZ_2\otimes\dZ_2\otimes\dZ_2$) and $\left(\dZ_2\right)^{\otimes 3}\overset{7}{\longrightarrow}\left(\dZ_2\right)^{\otimes 4}$ ($\dZ_2\otimes\dZ_2\otimes\dZ_2\overset{7}{\longrightarrow}\dZ_2\otimes\dZ_2\otimes\dZ_2\otimes\dZ_4$), since $k=6-r_6=3$ ($\cl_{0,6}$) and $k=7-r_7=4$ ($\cl_{0,7}$). Finally, the eighth step $\cl_{0,7}\overset{8}{\longrightarrow}\cl_{0,8}$ induces a transition $\left(\dZ_2\right)^{\otimes 4}\overset{8}{\longrightarrow}\left(\dZ_2\right)^{\otimes 4}$, since in virtue of the modulo 8 periodic relation $r_{i+8}=r_i+4$ for the Radon-Hurwitz numbers we have $k=q-r_{q-p}=8-r_8=4-r_0=4$ ($\cl_{0,8}$).

The second cycle ($r=1$) is started by the step $\cl_{0,8}\overset{1}{\longrightarrow}\cl_{0,9}$ which leads to $\left(\dZ_2\right)^{\otimes 4}\overset{1}{\longrightarrow}\left(\dZ_2\right)^{\otimes 4}$ and so on:\\
1) $h=1$, $r=1$, $\cl_{0,8}\overset{1}{\longrightarrow}\cl_{0,9}\;\mapsto\;$ $\left(\dZ_2\right)^{\otimes 4}\overset{1}{\longrightarrow}\left(\dZ_2\right)^{\otimes 4}$, $k=9-r_9=5-r_1=4$;\\
2) $h=2$, $r=1$, $\cl_{0,9}\overset{2}{\longrightarrow}\cl_{0,10}\;\mapsto\;$ $\left(\dZ_2\right)^{\otimes 4}\overset{2}{\longrightarrow}\left(\dZ_2\right)^{\otimes 4}$, $k=10-r_{10}=6-r_2=4$;\\
3) $h=3$, $r=1$, $\cl_{0,10}\overset{3}{\longrightarrow}\cl_{0,11}\;\mapsto\;$ $\left(\dZ_2\right)^{\otimes 4}\overset{3}{\longrightarrow}\left(\dZ_2\right)^{\otimes 5}$, $k=11-r_{11}=7-r_3=5$;\\
4) $h=4$, $r=1$, $\cl_{0,11}\overset{4}{\longrightarrow}\cl_{0,12}\;\mapsto\;$ $\left(\dZ_2\right)^{\otimes 5}\overset{4}{\longrightarrow}\left(\dZ_2\right)^{\otimes 5}$, $k=12-r_{12}=8-r_4=5$;\\
5) $h=5$, $r=1$, $\cl_{0,12}\overset{5}{\longrightarrow}\cl_{0,13}\;\mapsto\;$ $\left(\dZ_2\right)^{\otimes 5}\overset{5}{\longrightarrow}\left(\dZ_2\right)^{\otimes 6}$, $k=13-r_{13}=9-r_5=6$;\\
6) $h=6$, $r=1$, $\cl_{0,13}\overset{6}{\longrightarrow}\cl_{0,14}\;\mapsto\;$ $\left(\dZ_2\right)^{\otimes 6}\overset{6}{\longrightarrow}\left(\dZ_2\right)^{\otimes 7}$, $k=14-r_{14}=10-r_6=7$;\\
7) $h=7$, $r=1$, $\cl_{0,14}\overset{7}{\longrightarrow}\cl_{0,15}\;\mapsto\;$ $\left(\dZ_2\right)^{\otimes 7}\overset{7}{\longrightarrow}\left(\dZ_2\right)^{\otimes 8}$, $k=15-r_{15}=11-r_7=8$;\\
8) $h=8$, $r=1$, $\cl_{0,15}\overset{8}{\longrightarrow}\cl_{0,16}\;\mapsto\;$ $\left(\dZ_2\right)^{\otimes 8}\overset{8}{\longrightarrow}\left(\dZ_2\right)^{\otimes 8}$, $k=16-r_{16}=8-r_0=8$.

In its turn, the third cycle of $BW_{\R}\simeq\dZ_8$ ($r=2$) is realized on the set of idempotent groups via the following eight steps:\\
1) $h=1$, $r=2$, $\cl_{0,16}\overset{1}{\longrightarrow}\cl_{0,17}\;\mapsto\;$ $\left(\dZ_2\right)^{\otimes 8}\overset{1}{\longrightarrow}\left(\dZ_2\right)^{\otimes 8}$, $k=17-r_{17}=9-r_1=8$;\\
2) $h=2$, $r=2$, $\cl_{0,17}\overset{2}{\longrightarrow}\cl_{0,18}\;\mapsto\;$ $\left(\dZ_2\right)^{\otimes 8}\overset{2}{\longrightarrow}\left(\dZ_2\right)^{\otimes 8}$, $k=18-r_{18}=10-r_2=8$;\\
3) $h=3$, $r=2$, $\cl_{0,18}\overset{3}{\longrightarrow}\cl_{0,19}\;\mapsto\;$ $\left(\dZ_2\right)^{\otimes 8}\overset{3}{\longrightarrow}\left(\dZ_2\right)^{\otimes 9}$, $k=19-r_{19}=11-r_3=9$;\\
4) $h=4$, $r=2$, $\cl_{0,19}\overset{4}{\longrightarrow}\cl_{0,20}\;\mapsto\;$ $\left(\dZ_2\right)^{\otimes 9}\overset{4}{\longrightarrow}\left(\dZ_2\right)^{\otimes 9}$, $k=20-r_{20}=12-r_4=9$;\\
5) $h=5$, $r=2$, $\cl_{0,20}\overset{5}{\longrightarrow}\cl_{0,21}\;\mapsto\;$ $\left(\dZ_2\right)^{\otimes 9}\overset{5}{\longrightarrow}\left(\dZ_2\right)^{\otimes 10}$, $k=21-r_{21}=13-r_5=10$;\\
6) $h=6$, $r=2$, $\cl_{0,21}\overset{6}{\longrightarrow}\cl_{0,22}\;\mapsto\;$ $\left(\dZ_2\right)^{\otimes 10}\overset{6}{\longrightarrow}\left(\dZ_2\right)^{\otimes 11}$, $k=22-r_{22}=14-r_6=11$;\\
7) $h=7$, $r=2$, $\cl_{0,22}\overset{7}{\longrightarrow}\cl_{0,23}\;\mapsto\;$ $\left(\dZ_2\right)^{\otimes 11}\overset{7}{\longrightarrow}\left(\dZ_2\right)^{\otimes 12}$, $k=23-r_{23}=15-r_7=12$;\\
8) $h=8$, $r=2$, $\cl_{0,23}\overset{8}{\longrightarrow}\cl_{0,24}\;\mapsto\;$ $\left(\dZ_2\right)^{\otimes 12}\overset{8}{\longrightarrow}\left(\dZ_2\right)^{\otimes 12}$, $k=24-r_{24}=12-r_0=12$.

Hence it immediately follows that the each cycle of $BW_{\R}\simeq\dZ_8$ generates on the set of $(\e_{\alpha_1},\e_{\alpha_2},\ldots,\e_{\alpha_k})\simeq\left(\dZ_2\right)^{\otimes
k}$ a modulo 4 periodic relation $\left(\dZ_2\right)^{\otimes (k+4)}\simeq\left(\dZ_2\right)^{\otimes k}\otimes\left(\dZ_2\right)^{\otimes 4}$. Therefore, for the idempotent group $T_{p,q}(f)$ we have
\[
\left(\dZ_2\right)^{\otimes (1+q-r_{q-p}+4)}\simeq\left(\dZ_2\right)^{\otimes (1+q-r_{q-p})}\otimes\left(\dZ_2\right)^{\otimes 4}.
\]
\end{proof}
\subsection{Spinor groups and quaternionic factorizations}
A universal covering $\spin_+(p,q)$ of the rotation group
$\SO_0(p,q)$ of the pseudo-Euclidean space $\R^{p,q}$ is described
in terms of even subalgebra $\cl^+_{p,q}$. In its turn, the
subalgebra $\cl^+_{p,q}$ admits the following isomorphisms:
$\cl^+_{p,q}\simeq\cl_{q,p-1}$, $\cl^+_{p,q}\simeq\cl_{p,q-1}$. At this point, for the algebras of general type $\cl_{p,q}$ ($p\neq q$) and $\cl_{p,0}$ we have
\begin{equation}\label{Isom1}
\cl^+_{p,q}\simeq\cl_{q,p-1}.
\end{equation}
The isomorphism
\begin{equation}\label{Isom2}
\cl^+_{p,q}\simeq\cl_{p,q-1}
\end{equation}
takes place for the algebras of type $\cl_{0,q}$ and $\cl_{p,p}$ ($p=q$).
In dependence on dimension of the associated space $\R^{p,q}$, the
subalgebras $\cl_{p,q-1}$ and $\cl_{q,p-1}$ have even dimensions
($p+q-1\equiv 0\pmod{2}$) or odd dimensions ($p+q-1\equiv
1\pmod{2}$).

When $p+q-1\equiv 1\pmod{2}$ we have four types $q-p+1\equiv
1,3,5,7\pmod{8}$ ($p-q+1\equiv 1,3,5,7\pmod{8}$) of the subalgebras
$\cl_{q,p-1}$ ($\cl_{p,q-1}$). In this case a center $\bZ_{q,p-1}$
($\bZ_{p,q-1}$) of $\cl_{q,p-1}$ ($\cl_{p,q-1}$) consists of the
unit $\e_0$ and the volume element
$\omega=\e_1\e_2\cdots\e_{p+q-1}$. When $\omega^2=-1$ ($q-p+1\equiv
3,7\pmod{8}$ or $p-q+1\equiv 3,7\pmod{8}$) we arrive at the
isomorphisms
$\cl_{q,p-1}\simeq\C_{q+p-2}\simeq\C\left(2^{\frac{q+p-2}{2}}\right)$, $\cl_{0,p-1}\simeq\C_{p-2}\simeq\C\left(2^{\frac{p-2}{2}}\right)$ and $\cl_{0,q-1}\simeq\C_{q-2}\simeq\C\left(2^{\frac{q-2}{2}}\right)$.
The transition from $\cl_{q,p-1}\rightarrow\C_{q+p-2}$, $\cl_{0,p-1}\rightarrow\C_{p-2}$ and $\cl_{0,q-1}\rightarrow\C_{q-2}$ can be represented as the transition from the real coordinates in
$\cl_{q,p-1}$, $\cl_{0,p-1}$ and $\cl_{0,q-1}$ to complex coordinates $a+\omega b$ in
$\C_{q+p-2}$, $\C_{p-2}$ and $\C_{q-2}$, where $\omega=\e_1\e_2\cdots\e_{p+q-1}\in\cl_{q,p-1}$, $\omega=\e_1\e_2\cdots\e_{p-1}\in\cl_{0,p-1}$ and $\omega=\e_1\e_2\cdots\e_{q-1}\in\cl_{0,q-1}$. On the other
hand, when $\omega^2=1$ ($q-p+1\equiv 1,5\pmod{8}$ or $p-q+1\equiv
1,5\pmod{8}$) we arrive at the isomorphisms
\begin{eqnarray}
\cl_{q,p-1}&\simeq&\cl_{p-1,q-1}\oplus\cl_{p-1,q-1}\simeq
\R\left(2^{\frac{p+q-2}{2}}\right)\oplus\R\left(2^{\frac{p+q-2}{2}}\right),\nonumber\\
\cl_{0,p-1}&\simeq&\cl_{0,p-2}\oplus\cl_{0,p-2}\simeq
\R\left(2^{\frac{p-2}{2}}\right)\oplus\R\left(2^{\frac{p-2}{2}}\right),\nonumber\\
\cl_{0,q-1}&\simeq&\cl_{0,q-2}\oplus\cl_{0,q-2}\simeq
\R\left(2^{\frac{q-2}{2}}\right)\oplus\R\left(2^{\frac{q-2}{2}}\right)\nonumber
\end{eqnarray}
if $q-p+1\equiv 1\pmod{8}$ (or $p-1\equiv 1\pmod{8}$ and $q-1\equiv 1\pmod{8}$). When $q-p+1\equiv 5\pmod{8}$ (or $p-1\equiv 5\pmod{8}$ and $q-1\equiv 5\pmod{8}$) we have
\begin{eqnarray}
\cl_{q,p-1}&\simeq&\cl_{p-1,q-1}\oplus\cl_{p-1,q-1}\simeq
\BH\left(2^{\frac{p+q-4}{2}}\right)\oplus\BH\left(2^{\frac{p+q-4}{2}}\right),\nonumber\\
\cl_{0,p-1}&\simeq&\cl_{0,p-2}\oplus\cl_{0,p-2}\simeq
\BH\left(2^{\frac{p-4}{2}}\right)\oplus\BH\left(2^{\frac{p-4}{2}}\right),\nonumber\\
\cl_{0,q-1}&\simeq&\cl_{0,q-2}\oplus\cl_{0,q-2}\simeq
\BH\left(2^{\frac{q-4}{2}}\right)\oplus\BH\left(2^{\frac{q-4}{2}}\right).\nonumber
\end{eqnarray}
The transitions $\cl_{q,p-1}\rightarrow{}^2\R\left(2^{\frac{p+q-2}{2}}\right)$, $\cl_{0,p-1}\rightarrow{}^2\R\left(2^{\frac{p-2}{2}}\right)$, $\cl_{0,q-1}\rightarrow{}^2\R\left(2^{\frac{q-2}{2}}\right)$ and $\cl_{q,p-1}\rightarrow{}^2\BH\left(2^{\frac{p+q-4}{2}}\right)$, $\cl_{0,p-1}\rightarrow{}^2\BH\left(2^{\frac{p-4}{2}}\right)$, $\cl_{0,q-1}\rightarrow{}^2\BH\left(2^{\frac{q-4}{2}}\right)$
can be represented as the transition from the real coordinates in
$\cl_{q,p-1}$, $\cl_{0,p-1}$, $\cl_{0,q-1}$ to double coordinates $a+\omega b$ in
${}^2\R\left(2^{\frac{p+q-2}{2}}\right)$, ${}^2\R\left(2^{\frac{p-2}{2}}\right)$, ${}^2\R\left(2^{\frac{q-2}{2}}\right)$ and
${}^2\BH\left(2^{\frac{p+q-4}{2}}\right)$, ${}^2\BH\left(2^{\frac{p-4}{2}}\right)$, ${}^2\BH\left(2^{\frac{q-4}{2}}\right)$.

Further, when $p+q-1\equiv 0\pmod{2}$ we have four types
$q-p+1\equiv 0,2,4,6\pmod{8}$ ($p-q+1\equiv 0,2,4,6\pmod{8}$) of the
subalgebras $\cl_{q,p-1}$ ($\cl_{p,q-1}$). Let $p+q-1\geq 4$ and let
$m=(p+q-1)/4$ be an integer number (this number is always integer,
since $p+q-1\equiv 0\pmod{2}$). Then, at $i\leq 2m$ we have
\begin{eqnarray}
\e_{12\ldots 2m 2m+k}\e_{i}&=&(-1)^{2m+1-i}\sigma(i-l)
\e_{12\ldots i-1 i+1\ldots 2m 2m+k},\nonumber \\
\e_{i}\e_{12\ldots 2m 2m+k}&=&(-1)^{i-1}\sigma(i-l) \e_{12\ldots i-1
i+1\ldots 2m 2m+k}\nonumber
\end{eqnarray}
and, therefore, the condition of commutativity of elements
$\e_{12\ldots 2m 2m+k}$ and $\e_{i}$ is $2m+1-i\equiv i-1\pmod{2}$.
Thus, the elements $\e_{12\ldots 2m 2m+1}$ and $\e_{12\ldots 2m
2m+2}$ commute with all the elements $\e_{i}$ whose indices do not
exceed $2m$. Therefore, a transition from the algebra $\cl_{q,p-1}$
to algebras $\cl_{q+2,p-1}$, $\cl_{q,p+1}$ and
$\cl_{q+1,p}$ can be represented as a transition from the real
coordinates in $\cl_{q,p-1}$ to coordinates of the
form $a+b\phi+c\psi+d\phi\psi$, where $\phi$ and $\psi$ are
additional basis elements $\e_{12\ldots 2m 2m+1}$ and $\e_{12\ldots
2m 2m+2}$. The elements $\e_{i_{1}i_{2}\ldots i_{k}}\phi$ contain
the index $2m+1$ and do not contain the index $2m+2$. In its turn,
the elements $\e_{i_{1}i_{2}\ldots i_{k}}\psi$ contain the index
$2m+2$ and do not contain the index $2m+1$. Correspondingly, the
elements $\e_{i_{1}i_{2}\ldots i_{k}} \phi\psi$ contain both indices
$2m+1$ and $2m+2$. Analogously, we have a transition from the algebra $\cl_{0,p-1}$ to $\cl_{2,p-1}$, $\cl_{0,p+1}$ and $\cl_{1,p}$, from the algebra $\cl_{0,q-1}$ to $\cl_{2,q-1}$, $\cl_{0,q+1}$ and $\cl_{1,q}$, and from the algebra $\cl_{p,p-1}$ to $\cl_{p+2,p-1}$, $\cl_{p,p+1}$, $\cl_{p+1,p}$. Hence it immediately follows that general
elements of these algebras can be represented as
\begin{equation}
\cA_{\cl_{q+2,p-1}}=\cl^0_{q,p-1}\e_0+\cl^1_{q,p-1}\phi+\cl^2_{q,p-1}\psi+
\cl^3_{q,p-1}\phi\psi,\label{El1}
\end{equation}
\begin{equation}
\cA_{\cl_{q,p+1}}=\cl^0_{q,p-1}\e_0+\cl^1_{q,p-1}\phi+\cl^2_{q,p-1}\psi+
\cl^3_{q,p-1}\phi\psi,\label{El2}
\end{equation}
\begin{equation}
\cA_{\cl_{q+1,p}}=\cl^0_{q,p-1}\e_0+\cl^1_{q,p-1}\phi+\cl^2_{q,p-1}\psi+
\cl^3_{q,p-1}\phi\psi,\label{El3}
\end{equation}
\begin{equation}
\cA_{\cl_{2,p-1}}=\cl^0_{0,p-1}\e_0+\cl^1_{0,p-1}\phi+\cl^2_{0,p-1}\psi+
\cl^3_{0,p-1}\phi\psi,\label{El4}
\end{equation}
\begin{equation}
\cA_{\cl_{0,p+1}}=\cl^0_{0,p-1}\e_0+\cl^1_{0,p-1}\phi+\cl^2_{0,p-1}\psi+
\cl^3_{0,p-1}\phi\psi,\label{El5}
\end{equation}
\begin{equation}
\cA_{\cl_{1,p}}=\cl^0_{0,p-1}\e_0+\cl^1_{0,p-1}\phi+\cl^2_{0,p-1}\psi+
\cl^3_{0,p-1}\phi\psi,\label{El6}
\end{equation}
\begin{equation}
\cA_{\cl_{2,q-1}}=\cl^0_{0,q-1}\e_0+\cl^1_{0,q-1}\phi+\cl^2_{0,q-1}\psi+
\cl^3_{0,q-1}\phi\psi,\label{El7}
\end{equation}
\begin{equation}
\cA_{\cl_{0,q+1}}=\cl^0_{0,q-1}\e_0+\cl^1_{0,q-1}\phi+\cl^2_{0,q-1}\psi+
\cl^3_{0,q-1}\phi\psi,\label{El8}
\end{equation}
\begin{equation}
\cA_{\cl_{1,q}}=\cl^0_{0,q-1}\e_0+\cl^1_{0,q-1}\phi+\cl^2_{0,q-1}\psi+
\cl^3_{0,q-1}\phi\psi,\label{El9}
\end{equation}
\begin{equation}
\cA_{\cl_{p+2,p-1}}=\cl^0_{p,p-1}\e_0+\cl^1_{p,p-1}\phi+\cl^2_{p,p-1}\psi+
\cl^3_{p,p-1}\phi\psi,\label{El10}
\end{equation}
\begin{equation}
\cA_{\cl_{p,p+1}}=\cl^0_{p,p-1}\e_0+\cl^1_{p,p-1}\phi+\cl^2_{p,p-1}\psi+
\cl^3_{p,p-1}\phi\psi,\label{El11}
\end{equation}
\begin{equation}
\cA_{\cl_{p+1,p}}=\cl^0_{p,p-1}\e_0+\cl^1_{p,p-1}\phi+\cl^2_{p,p-1}\psi+
\cl^3_{p,p-1}\phi\psi,\label{El12}
\end{equation}
\begin{sloppypar}\noindent
where $\cl^i_{q,p-1}$, $\cl^i_{0,p-1}$, $\cl^i_{0,q-1}$, $\cl^i_{p,p-1}$, are the algebras with a
general element ${\cal A}=\sum^{2m}_{k=0}a^{i_{1}i_{2}\ldots i_{k}}
\e_{i_{1}i_{2}\ldots i_{k}}$. When the elements $\phi=\e_{12\ldots
2m 2m+1}$ and $\psi=\e_{12\ldots 2m 2m+2}$ satisfy the condition
$\phi^{2}=\psi^{2}=-1$ we see that the basis
$\{\e_{0},\,\phi,\,\psi,\, \phi\psi\}$ is isomorphic to a basis of
the quaternion algebra $\cl_{0,2}$, therefore, the elements
(\ref{El2}), (\ref{El5}), (\ref{El8}) and (\ref{El11}) are general elements of quaternionic
algebras. In turn, when the elements $\phi$ and $\psi$ satisfy
the condition $\phi^{2}=-\psi^{2}=1$, the basis
$\{\e_{0},\,\phi,\,\psi,\, \phi\psi\}$ is isomorphic to a basis of
the anti-quaternion algebra $\cl_{1,1}$, and the elements
(\ref{El3}), (\ref{El6}), (\ref{El9}) and (\ref{El12}) are general elements of
anti-quaternionic algebras. Further, when $\phi^{2}=\psi^{2}=1$ we
have a basis of the pseudo-quaternion algebra $\cl_{2,0}$, and the
elements (\ref{El1}), (\ref{El4}), (\ref{El7}) and (\ref{El10}) are general elements of
pseudo-quaternionic algebras.\end{sloppypar}

Let us define matrix representations of the quaternion units $\phi$
and $\psi$ as follows:
\[
\phi\longmapsto\begin{bmatrix} 0 & -1\\ 1 & 0\end{bmatrix},\quad
\psi\longmapsto\begin{bmatrix} 0 & i\\ i & 0\end{bmatrix}.
\]
Using these representations and (\ref{El2}), we obtain
\[
\cl_{p,q+1}\simeq\Mat_2(\cl_{p,q-1})=\begin{bmatrix}
\cl^0_{p,q-1}-i\cl^3_{p,q-1} & -\cl^1_{p,q-1}+i\cl^2_{p,q-1}\\
\cl^1_{p,q-1}+i\cl^2_{p,q-1} & \cl^0_{p,q-1}+i\cl^3_{p,q-1}
\end{bmatrix}.
\]
The analogous expression takes place for the algebra $\cl_{q,p+1}$
with the element (\ref{El5}) and so on.

For example, let us consider the algebra $\cl_{2,4}$ associated with a six-dimensional pseudo-Euclidean space $\R^{2,4}$. A universal covering $\spin_+(2,4)$ of the rotation group $\SO_0(2,4)$ of $\R^{2,4}$ is described in terms of even subalgebra $\cl^+_{2,4}$. The algebra $\cl_{2,4}$ has the type $p-q\equiv 6\pmod{8}$, therefore, from (\ref{Isom1}) we have $\cl^+_{2,4}\simeq\cl_{4,1}$, where $\cl_{4,1}$ is a de Sitter algebra associated with the space $\R^{4,1}$. In its turn, $\cl_{4,1}$ has the type $p-q\equiv 3\pmod{8}$ and, therefore, there is an isomorphism $\cl_{4,1}\simeq\C_4$, where $\C_4$ is a Dirac algebra. The algebra $\C_4$ is a complexification of space-time algebra: $\C_4\simeq\C\otimes\cl_{1,3}$. Further, the space-time algebra $\cl_{1,3}$ admits the following factorization: $\cl_{1,3}\simeq\cl_{1,1}\otimes\cl_{0,2}$. Hence it immediately follows that $\cl_{1,3}\simeq\C\otimes\cl_{1,1}\otimes\cl_{0,2}$. Thus,
\begin{equation}\label{ConfGroup0}
\spin_+(2,4)=\left\{s\in\C\otimes\cl_{1,1}\otimes\cl_{0,2}\;|\;N(s)=1\right\}.
\end{equation}
On the other hand, in virtue of $\cl_{1,3}\simeq\cl_{1,1}\otimes\cl_{0,2}$ from (\ref{El2}) we have
\[
\cA_{\cl_{1,3}}=\cl^0_{1,1}\e_0+\cl^1_{1,1}\phi+\cl^2_{1,1}\psi+\cl^3_{1,1}\phi\psi,
\]
where $\phi=\e_{123}$, $\psi=\e_{124}$. Therefore,
\begin{equation}\label{ConfGroup}{\renewcommand{\arraystretch}{1.2}
\spin_+(2,4)=\left\{s\in\left.\begin{bmatrix} \C\otimes\cl^0_{1,1}-i\C\otimes\cl^3_{1,1} &
-\C\otimes\cl^1_{1,1}+i\C\otimes\cl^2_{1,1}\\
\C\otimes\cl^1_{1,1}+i\C\otimes\cl^2_{1,1} & \C\otimes\cl^0_{1,1}+i\C\otimes\cl^3_{1,1}\end{bmatrix}\right|\;N(s)=1
\right\}.}
\end{equation}

\section{Spinor structure and the group $\spin_+(1,3)$}
\emph{Any} irreducible finite dimensional representation $\boldsymbol{\tau}_{l\dot{l}}$ of the group $\SL(2,\C)\simeq\spin_+(1,3)$ corresponds to a \textbf{\emph{particle of the spin}} $s$, where $s=|l-\dot{l}|$ (see also \cite{Var11,Var1402}). All the values of $s$ are
\[
-s,\;\;-s+1,\;\;-s+2,\;\;\ldots,\;\;s
\]
or
\begin{equation}\label{SValues}
-|l-\dot{l}|,\;\;-|l-\dot{l}|+1,\;\;-|l-\dot{l}|+2,\;\;\ldots,\;\;|l-\dot{l}|.
\end{equation}
Here the numbers $l$ and $\dot{l}$ are
\[
l=\frac{k}{2},\quad\dot{l}=\frac{r}{2},
\]
where $k$ and $r$ are factor quantities in the tensor product
\begin{equation}\label{TenAlg}
\underbrace{\C_2\otimes\C_2\otimes\cdots\otimes\C_2}_{k\;\text{times}}\bigotimes
\underbrace{\overset{\ast}{\C}_2\otimes\overset{\ast}{\C}_2\otimes\cdots\otimes
\overset{\ast}{\C}_2}_{r\;\text{times}}
\end{equation}
associated with the representation $\boldsymbol{\tau}_{k/2,r/2}$ of $\SL(2,\C)$, where $\C_2$ and complex conjugate $\overset{\ast}{\C}_2$ are biquaternion algebras. In turn, a \emph{spinspace} $\dS_{2^{k+r}}$, associated with the tensor product (\ref{TenAlg}), is
\begin{equation}\label{SpinSpace}
\underbrace{\dS_2\otimes\dS_2\otimes\cdots\otimes\dS_2}_{k\;\text{times}}\bigotimes
\underbrace{\dot{\dS}_2\otimes\dot{\dS}_2\otimes\cdots\otimes\dot{\dS}_2}_{r\;\text{times}}.
\end{equation}
Usual definition of the spin we obtain at the restriction $\boldsymbol{\tau}_{l\dot{l}}\rightarrow\boldsymbol{\tau}_{l,0}$ (or $\boldsymbol{\tau}_{l\dot{l}}\rightarrow\boldsymbol{\tau}_{0,\dot{l}}$), that is, at the restriction of $\SL(2,\C)$ to its subgroup $\SU(2)$. In this case the sequence of spin values (\ref{SValues}) is reduced to $-l$, $-l+1$, $-l+2$, $\ldots$, $l$ (or $-\dot{l}$, $-\dot{l}+1$, $-\dot{l}+2$, $\ldots$, $\dot{l}$).

Let
\[
\boldsymbol{S}=\boldsymbol{s}^{\alpha_1\alpha_2\ldots\alpha_k\dot{\alpha}_1\dot{\alpha}_2\ldots
\dot{\alpha}_r}=\sum \boldsymbol{s}^{\alpha_1}\otimes
\boldsymbol{s}^{\alpha_2}\otimes\cdots\otimes
\boldsymbol{s}^{\alpha_k}\otimes
\boldsymbol{s}^{\dot{\alpha}_1}\otimes
\boldsymbol{s}^{\dot{\alpha}_2}\otimes\cdots\otimes
\boldsymbol{s}^{\dot{\alpha}_r}
\]
be a spintensor polynomial, then any pair of substitutions
\[
\alpha=\begin{pmatrix} 1 & 2 & \ldots & k\\
\alpha_1 & \alpha_2 & \ldots & \alpha_k\end{pmatrix},\quad \beta=\begin{pmatrix} 1 & 2 & \ldots & r\\
\dot{\alpha}_1 & \dot{\alpha}_2 & \ldots & \dot{\alpha}_r\end{pmatrix}
\]
defines a transformation $(\alpha,\beta)$ mapping $\boldsymbol{S}$ to the following polynomial:
\[
P_{\alpha\beta}\boldsymbol{S}=\boldsymbol{s}^{\alpha\left(\alpha_1\right)\alpha\left(\alpha_2\right)\ldots
\alpha\left(\alpha_k\right)\beta\left(\dot{\alpha}_1\right)\beta\left(\dot{\alpha}_2\right)\ldots
\beta\left(\dot{\alpha}_r\right)}.
\]
The spintensor $\boldsymbol{S}$ is called a \emph{symmetric spintensor} if at any $\alpha$, $\beta$ the equality
\[
P_{\alpha\beta}\boldsymbol{S}=\boldsymbol{S}
\]
holds. The space $\Sym_{(k,r)}$ of symmetric spintensors has the dimensionality
\begin{equation}\label{Degree}
\dim\Sym_{(k,r)}=(k+1)(r+1).
\end{equation}
The dimensionality of $\Sym_{(k,r)}$ is called a \emph{degree of the representation} $\boldsymbol{\tau}_{l\dot{l}}$ of the group $\SL(2,\C)$. It is easy to see that $\SL(2,\C)$ has representations of \textbf{\emph{any degree}} (in contrast to $\SU(3)$, $\SU(6)$ and other groups of internal symmetries, see \cite{Var1402}).

For the each $A\in\SL(2,\C)$ we define a linear transformation of the spintensor $\boldsymbol{s}$ via the formula
\[
\boldsymbol{s}^{\alpha_1\alpha_2\ldots\alpha_k\dot{\alpha}_1\dot{\alpha}_2\ldots
\dot{\alpha}_r}\longrightarrow\sum_{\left(\beta\right)\left(\dot{\beta}\right)}A^{\alpha_1\beta_1}A^{\alpha_2\beta_2}
\cdots A^{\alpha_k\beta_k}\overline{A}^{\dot{\alpha}_1\dot{\beta}_1}\overline{A}^{\dot{\alpha}_2\dot{\beta}_2}
\cdots\overline{A}^{\dot{\alpha}_r\dot{\beta}_r}\boldsymbol{s}^{\beta_1\beta_2\ldots\beta_k\dot{\beta}_1\dot{\beta}_2\ldots
\dot{\beta}_r},
\]
where the symbols $\left(\beta\right)$ and $\left(\dot{\beta}\right)$ mean $\beta_1$, $\beta_2$, $\ldots$, $\beta_k$ and $\dot{\beta}_1$, $\dot{\beta}_2$, $\ldots$, $\dot{\beta}_r$. This representation of $\SL(2,\C)$ we denote as $\boldsymbol{\tau}_{\frac{k}{2},\frac{r}{2}}=\boldsymbol{\tau}_{l\dot{l}}$. The each \emph{irreducible} finite dimensional representation of $\SL(2,\C)$ is equivalent to one from $\boldsymbol{\tau}_{k/2,r/2}$.

All the representations $\boldsymbol{\tau}_{l\dot{l}}$ can be grouped into spin multiplets in the Hilbert space $\bsH^S_{2s+1}\otimes\bsH_\infty$ (see Fig.\,6). $\bsH^S_{2s+1}\otimes\bsH_\infty$ is a subspace of the more general spin-charge Hilbert space $\bsH^S\otimes\bsH^Q\otimes\bsH_\infty$ \cite{Var1402}. The vertical lines (\emph{spin lines}) correspond to particles of the same spin (but different masses). The horizontal lines (\emph{spin chains} or spin multiplets) correspond to particles of the same mass (but different spins). Along the each spin chain the numbers $l$ and $\dot{l}$ are changed as
\[
l,\;l+\frac{1}{2},\;l+1,\;l+\frac{3}{2},\;\ldots,\;\dot{l},
\]
\[
\dot{l},\;\dot{l}-\frac{1}{2},\;\dot{l}-1,\;\dot{l}-\frac{3}{2},\;\ldots,\;l.
\]
Therefore, along the each spin chain we have the following representations:
\[
\boldsymbol{\tau}_{l\dot{l}},\;\boldsymbol{\tau}_{l+\frac{1}{2},\dot{l}-\frac{1}{2}},\;
\boldsymbol{\tau}_{l+1,\dot{l}-1},\;\boldsymbol{\tau}_{l+\frac{3}{2},\dot{l}-\frac{3}{2}},\;\ldots,\;
\boldsymbol{\tau}_{\dot{l}l},
\]
where the spin $s=l-\dot{l}$ is changed as
\[
l-\dot{l},\;l-\dot{l}+1,\;l-\dot{l}+2,\;l-\dot{l}+3,\,\ldots,\;\dot{l}-l.
\]
\begin{figure}[ht]
\[
\dgARROWPARTS=28
\dgARROWLENGTH=0.5em
\dgHORIZPAD=1.2em 
\dgVERTPAD=6.2ex 
\begin{diagram}
\node{\overset{(0,\frac{7}{2})}{\bullet}}\arrow[2]{e,..,-}\arrow[14]{s,l,20,..,-}{-\tfrac{7}{2}}
\node[2]{\overset{(\frac{1}{2},3)}{\bullet}}\arrow[2]{e,..,-}\arrow[2]{s,..,-}
\node[2]{\overset{(1,\frac{5}{2})}{\bullet}}\arrow[2]{e,..,-}\arrow[2]{s,..,-}
\node[2]{\overset{(\frac{3}{2},2)}{\bullet}}\arrow[2]{e,..,-}\arrow[2]{s,..,-}
\node[2]{\overset{(2,\frac{3}{2})}{\bullet}}\arrow[2]{e,..,-}\arrow[2]{s,..,-}
\node[2]{\overset{(\frac{5}{2},1)}{\bullet}}\arrow[2]{e,..,-}\arrow[2]{s,..,-}
\node[2]{\overset{(3,\frac{1}{2})}{\bullet}}\arrow[2]{e,..,-}\arrow[2]{s,..,-}
\node[2]{\overset{(\frac{7}{2},0)}{\bullet}}\arrow[14]{s,r,..,-}{\tfrac{7}{2}}\\
\node[2]{\overset{(0,3)}{\bullet}}\arrow[2]{e,..,-}\arrow[12]{s,l,..,-}{-3}\arrow[2]{n,..,-}
\node[2]{\overset{(\frac{1}{2},\frac{5}{2})}{\bullet}}\arrow[2]{e,..,-}\arrow[2]{s,..,-}\arrow[2]{n,..,-}
\node[2]{\overset{(1,2)}{\bullet}}\arrow[2]{e,..,-}\arrow[2]{s,..,-}\arrow[2]{n,..,-}
\node[2]{\overset{(\frac{3}{2},\frac{3}{2})}{\bullet}}\arrow[2]{s,-}\arrow[2]{e,..,-}\arrow[2]{n,-}
\node[2]{\overset{(2,1)}{\bullet}}\arrow[2]{e,..,-}\arrow[2]{s,..,-}\arrow[2]{n,..,-}
\node[2]{\overset{(\frac{5}{2},\frac{1}{2})}{\bullet}}\arrow[2]{e,..,-}\arrow[2]{s,..,-}\arrow[2]{n,..,-}
\node[2]{\overset{(3,0)}{\bullet}}\arrow[2]{n,..,-}\arrow[12]{s,r,..,-}{3}\\
\node[3]{\overset{(0,\frac{5}{2})}{\bullet}}\arrow[2]{e,..,-}\arrow[10]{s,l,..,-}{-\tfrac{5}{2}}
\node[2]{\overset{(\frac{1}{2},2)}{\bullet}}\arrow[2]{e,..,-}\arrow[2]{s,..,-}
\node[2]{\overset{(1,\frac{3}{2})}{\bullet}}\arrow[2]{e,..,-}\arrow[2]{s,..,-}
\node[2]{\overset{(\frac{3}{2},1)}{\bullet}}\arrow[2]{e,..,-}\arrow[2]{s,..,-}
\node[2]{\overset{(2,\frac{1}{2})}{\bullet}}\arrow[2]{e,..,-}\arrow[2]{s,..,-}
\node[2]{\overset{(\frac{5}{2},0)}{\bullet}}\arrow[10]{s,r,..,-}{\tfrac{5}{2}}\\
\node[4]{\overset{(0,2)}{\bullet}}\arrow[2]{e,..,-}\arrow[8]{s,l,..,-}{-2}
\node[2]{\overset{(\frac{1}{2},\frac{3}{2})}{\bullet}}\arrow[2]{e,..,-}\arrow[2]{s,..,-}
\node[2]{\overset{(1,1)}{\bullet}}\arrow[2]{s,-}\arrow[2]{e,..,-}
\node[2]{\overset{(\frac{3}{2},\frac{1}{2})}{\bullet}}\arrow[2]{e,..,-}\arrow[2]{s,..,-}
\node[2]{\overset{(2,0)}{\bullet}}\arrow[8]{s,r,..,-}{2}\\
\node[5]{\overset{(0,\frac{3}{2})}{\bullet}}\arrow[2]{e,..,-}\arrow[6]{s,l,..,-}{-\tfrac{3}{2}}
\node[2]{\overset{(\frac{1}{2},1)}{\bullet}}\arrow[2]{e,..,-}\arrow[2]{s,..,-}
\node[2]{\overset{(1,\frac{1}{2})}{\bullet}}\arrow[2]{e,..,-}\arrow[2]{s,..,-}
\node[2]{\overset{(\frac{3}{2},0)}{\bullet}}\arrow[6]{s,r,..,-}{\tfrac{3}{2}}\\
\node[6]{\overset{(0,1)}{\bullet}}\arrow[2]{e,..,-}\arrow[4]{s,l,..,-}{-1}
\node[2]{\overset{(\frac{1}{2},\frac{1}{2})}{\bullet}}\arrow[2]{s,-}\arrow[2]{e,..,-}
\node[2]{\overset{(1,0)}{\bullet}}\arrow[4]{s,r,..,-}{1}\\
\node[7]{\overset{(0,\frac{1}{2})}{\bullet}}\arrow[2]{e,..,-}\arrow[2]{s,l,..,-}{-\tfrac{1}{2}}
\node[2]{\overset{(\frac{1}{2},0)}{\bullet}}\arrow[2]{s,r,..,-}{\tfrac{1}{2}}\\
\node[8]{\overset{(0,0)}{\bullet}}\arrow[2]{s,-}\arrow[9]{w,-}\arrow[9]{e}\\
\node[7]{\overset{(0,\frac{1}{2})}{\bullet}}\arrow[2]{e,..,-}\arrow[2]{s,..,-}
\node[2]{\overset{(\frac{1}{2},0)}{\bullet}}\arrow[2]{s,..,-}\\
\node[6]{\overset{(0,1)}{\bullet}}\arrow[2]{e,..,-}\arrow[2]{s,..,-}
\node[2]{\overset{(\frac{1}{2},\frac{1}{2})}{\bullet}}\arrow[2]{s,-}\arrow[2]{e,..,-}
\node[2]{\overset{(1,0)}{\bullet}}\arrow[2]{s,..,-}\\
\node[5]{\overset{(0,\frac{3}{2})}{\bullet}}\arrow[2]{e,..,-}\arrow[2]{s,..,-}
\node[2]{\overset{(\frac{1}{2},1)}{\bullet}}\arrow[2]{e,..,-}\arrow[2]{s,..,-}
\node[2]{\overset{(1,\frac{1}{2})}{\bullet}}\arrow[2]{e,..,-}\arrow[2]{s,..,-}
\node[2]{\overset{(\frac{3}{2},0)}{\bullet}}\arrow[2]{s,..,-}\\
\node[4]{\overset{(0,2)}{\bullet}}\arrow[2]{e,..,-}\arrow[2]{s,..,-}
\node[2]{\overset{(\frac{1}{2},\frac{3}{2})}{\bullet}}\arrow[2]{e,..,-}\arrow[2]{s,..,-}
\node[2]{\overset{(1,1)}{\bullet}}\arrow[2]{s,-}\arrow[2]{e,..,-}
\node[2]{\overset{(\frac{3}{2},\frac{1}{2})}{\bullet}}\arrow[2]{e,..,-}\arrow[2]{s,..,-}
\node[2]{\overset{(2,0)}{\bullet}}\arrow[2]{s,..,-}\\
\node[3]{\overset{(0,\frac{5}{2})}{\bullet}}\arrow[2]{e,..,-}\arrow[2]{s,..,-}
\node[2]{\overset{(\frac{1}{2},2)}{\bullet}}\arrow[2]{e,..,-}\arrow[2]{s,..,-}
\node[2]{\overset{(1,\frac{3}{2})}{\bullet}}\arrow[2]{e,..,-}\arrow[2]{s,..,-}
\node[2]{\overset{(\frac{3}{2},1)}{\bullet}}\arrow[2]{e,..,-}\arrow[2]{s,..,-}
\node[2]{\overset{(2,\frac{1}{2})}{\bullet}}\arrow[2]{e,..,-}\arrow[2]{s,..,-}
\node[2]{\overset{(\frac{5}{2},0)}{\bullet}}\arrow[2]{s,..,-}\\
\node[2]{\overset{(0,3)}{\bullet}}\arrow[2]{e,..,-}\arrow[2]{s,..,-}
\node[2]{\overset{(\frac{1}{2},\frac{5}{2})}{\bullet}}\arrow[2]{e,..,-}\arrow[2]{s,..,-}
\node[2]{\overset{(1,2)}{\bullet}}\arrow[2]{e,..,-}\arrow[2]{s,..,-}
\node[2]{\overset{(\frac{3}{2},\frac{3}{2})}{\bullet}}\arrow[2]{s,-}\arrow[2]{e,..,-}
\node[2]{\overset{(2,1)}{\bullet}}\arrow[2]{e,..,-}\arrow[2]{s,..,-}
\node[2]{\overset{(\frac{5}{2},\frac{1}{2})}{\bullet}}\arrow[2]{e,..,-}\arrow[2]{s,..,-}
\node[2]{\overset{(3,0)}{\bullet}}\arrow[2]{s,..,-}\\
\node{\overset{(0,\frac{7}{2})}{\bullet}}\arrow[2]{e,..,-}
\node[2]{\overset{(\frac{1}{2},3)}{\bullet}}\arrow[2]{e,..,-}
\node[2]{\overset{(1,\frac{5}{2})}{\bullet}}\arrow[2]{e,..,-}
\node[2]{\overset{(\frac{3}{2},2)}{\bullet}}\arrow[2]{e,..,-}
\node[2]{\overset{(2,\frac{3}{2})}{\bullet}}\arrow[2]{e,..,-}
\node[2]{\overset{(\frac{5}{2},1)}{\bullet}}\arrow[2]{e,..,-}
\node[2]{\overset{(3,\frac{1}{2})}{\bullet}}\arrow[2]{e,..,-}
\node[2]{\overset{(\frac{7}{2},0)}{\bullet}}
\end{diagram}
\]
\begin{center}{\small {\bf Fig.\,6:} Matter and antimatter spin multiplets in $\bsH^S_{2s+1}\otimes\bsH_\infty$.}\end{center}
\end{figure}

For example, let us consider the following spin chain (7-plet):
\[
\begin{diagram}
\node{\overset{(0,3)}{\bullet}}\arrow{e,..,-}\node{\overset{(\frac{1}{2},\frac{5}{2})}{\bullet}}\arrow{e,..,-}
\node{\overset{(1,2)}{\bullet}}\arrow{e,..,-}\node{\overset{(\frac{3}{2},\frac{3}{2})}{\bullet}}\arrow{e,..,-}
\node{\overset{(2,1)}{\bullet}}\arrow{e,..,-}\node{\overset{(\frac{5}{2},\frac{1}{2})}{\bullet}}
\arrow{e,..,-}\node{\overset{(3,0)}{\bullet}}\\
\node{-3}\arrow{e,-}\node{-2}\arrow{e,-}\node{-1}\arrow{e,-}\node{0}\arrow{e,-}
\node{1}\arrow{e,-}\node{2}\arrow{e,-}\node{3}
\end{diagram}
\]
In the underlying spinor structure we have the following sequence of algebras associated with this 7-plet:
\begin{multline}
\overset{\ast}{\C}_2\otimes\overset{\ast}{\C}_2\otimes
\overset{\ast}{\C}_2\otimes\overset{\ast}{\C}_2\otimes\overset{\ast}{\C}_2
\otimes\overset{\ast}{\C}_2\;\longleftrightarrow\;\C_2\bigotimes\overset{\ast}{\C}_2\otimes\overset{\ast}{\C}_2\otimes
\overset{\ast}{\C}_2\otimes\overset{\ast}{\C}_2\otimes\overset{\ast}{\C}_2\;\longleftrightarrow\\
\C_2\otimes\C_2\bigotimes\overset{\ast}{\C}_2\otimes
\overset{\ast}{\C}_2\otimes\overset{\ast}{\C}_2\otimes\overset{\ast}{\C}_2\;\longleftrightarrow\;
\C_2\otimes\C_3\otimes\C_2\bigotimes\overset{\ast}{\C}_2\otimes\overset{\ast}{\C}_2\otimes\overset{\ast}{\C}_2
\;\longleftrightarrow\\
\C_2\otimes\C_2\otimes\C_2\otimes\C_2\bigotimes\overset{\ast}{\C}_2\otimes\overset{\ast}{\C}_2
\;\longleftrightarrow\;\C_2\otimes\C_2\otimes\C_2\otimes\C_2\otimes\C_2\bigotimes\overset{\ast}{\C}_2
\longleftrightarrow\;\C_2\otimes\C_2\otimes\C_2\otimes\C_2\otimes\C_2\otimes\C_2.\nonumber
\end{multline}
Wave equations for the fields of type $(l,0)\oplus(0,\dot{l})$ and their solutions in the form of series in hyperspherical functions were given in \cite{Var03a}-\cite{Var07}. It should be noted that $(l,0)\oplus(0,\dot{l})$ type wave equations correspond to the usual definition of the spin. In turn, wave equations for the fields of type $(l,\dot{l})\oplus(\dot{l},l)$ (arbitrary spin chains) and their solutions in the form of series in generalized hyperspherical functions were studied in \cite{Var07b}. Wave equations for arbitrary spin chains (spin multiplets) correspond to the generalized spin $s=|l-\dot{l}|$.
\subsection{Spinor and twistor structures}
The products (\ref{TenAlg}) and (\ref{SpinSpace}) define an \emph{algebraic (spinor) structure} associated with the representation $\boldsymbol{\tau}_{k/2,r/2}$ of the group $\SL(2,\C)$. Usually, spinor structures are understood as double (universal) coverings of the orthogonal groups $\SO(p,q)$. For that reason it seems that the spinor structure presents itself a derivative construction. However, in accordance with Penrose twistor programme \cite{Pen77,Pen68,Pen} the spinor (twistor) structure presents a more fundamental level of reality rather then  space-time continuum. Moreover, space-time continuum is generated by the twistor structure. This is a natural consequence of the well known fact of the van der Waerden 2-spinor formalism \cite{Wae32}, in which any vector of the Minkowski space-time can be constructed via the pair of mutually conjugated 2-spinors. For that reason it is more adequate to consider spinors as the \emph{underlying structure}. We choose $\spin_+(1,3)$ as a \emph{generating kernel} of the underlying spinor structure. In this context space-time discrete symmetries $P$, $T$ and their combination $PT$ should be considered as projections of the fundamental automorphisms belonging to the background spinor structure \cite{Var01,Var05c,Var05}. However, the group $\spin_+(2,4)\simeq\SU(2,2)$ (a universal covering of the conformal group $\SO_0(2,4)$) can be chosen as such a kernel. The choice $\spin_+(2,4)\simeq\SU(2,2)$ takes place in the Penrose twistor programme \cite{Pen} and also in the Paneitz-Segal approach \cite{PS1}--\cite{Lev11}.
\subsubsection{Spinors}
Let us consider in brief the basic facts concerning the theory of spinor
representations of the Lorentz group. The initial point of this theory
is a correspondence between transformations of the proper Lorentz group
and complex matrices of the second order. Indeed, following to
\cite{GMS} let us compare the Hermitian matrix of the second order
\begin{equation}\label{6.1}
X=\ar\begin{bmatrix}
x_0+x_3 & x_1-ix_2\\
x_1+ix_2 & x_0-x_3
\end{bmatrix}
\end{equation}
to the vector $v$ of the Minkowski space-time $\R^{1,3}$ with coordinates
$x_0,x_1,x_2,x_3$. At this point, $\det X=x^2_0-x^2_1-x^2_2-x^2_3=S^2(x)$.
The correspondence between matrices $X$ and vectors $v$ is one-to-one and
linear. Any linear transformation $X^\prime=aXa^\ast$ in a space of the
matrices $X$ may be considered as a linear transformation $g_a$ in
$\R^{1,3}$, where $a$ is a complex matrix of the second order with
$\det a=1$. The correspondence $a\sim g_a$ possesses following properties:
1) $\ar\begin{bmatrix}
1 & 0 \\ 0 & 1\end{bmatrix}\sim e$ (identity element); 2)
$g_{a_1}g_{a_2}=g_{a_1a_2}$ (composition); 3) two different matrices
$a_1$ and $a_2$ correspond to one and the same transformation $g_{a_1}=g_{a_2}$
only in the case $a_1=-a_2$. Since the each complex matrix is defined by
eight real numbers, then from the requirement $\det a=1$ it follow two
conditions $\re\det a=1$ and $\im\det a=0$. These conditions leave six
independent parameters, that coincides with parameter number of the
proper Lorentz group.

Further, a set of all complex matrices of the second order forms a full
matrix algebra $\Mat_2(\C)$ that is isomorphic to a biquaternion algebra
$\C_2$. In turn, Pauli matrices
\begin{equation}\label{6.2}
\sigma_0=\ar\begin{bmatrix}
1 & 0 \\
0 & 1
\end{bmatrix},\quad
\sigma_1=\begin{bmatrix}
0 & 1\\
1 & 0
\end{bmatrix},\quad
\sigma_2=\begin{bmatrix}
0 & -i\\
i & 0
\end{bmatrix},\quad
\sigma_3=\begin{bmatrix}
1 & 0\\
0 &-1
\end{bmatrix}
\end{equation}
form a spinbasis of the algebra $\C_2$
(by this reason in physics the algebra $\C_2\simeq\cl^+_{1,3}\simeq\cl_{3,0}$
is called Pauli algebra). Using the basis (\ref{6.2}), we can write the
matrix (\ref{6.1}) in the form
\begin{equation}\label{6.3}
X=x^\mu\sigma_\mu.
\end{equation}
The Hermitian matrix (\ref{6.3}) corresponds to a spintensor $X^{\lambda\dot{\nu}}$ with the following coordinates:
\begin{eqnarray}
x^0=+(1/\sqrt{2})(\xi^1\xi^{\dot{1}}+\xi^2\xi^{\dot{2}}),&&
x^1=+(1/\sqrt{2})(\xi^1\xi^{\dot{2}}+\xi^2\xi^{\dot{1}}),\nonumber\\
x^2=-(i/\sqrt{2})(\xi^1\xi^{\dot{2}}-\xi^2\xi^{\dot{1}}),&&
x^3=+(1/\sqrt{2})(\xi^1\xi^{\dot{1}}-\xi^2\xi^{\dot{2}}),\label{6.4}
\end{eqnarray}
where $\xi^\mu$ and $\xi^{\dot{\mu}}$ are correspondingly coordinates of
spinors and cospinors of spinspaces $\dS_2$ and $\dot{\dS}_2$. Linear
transformations of `vectors' (spinors and cospinors) of the spinspaces
$\dS_2$ and $\dot{\dS}_2$ have the form
\begin{equation}\label{6.5}\ar
\begin{array}{ccc}
\begin{array}{ccc}
{}^\prime\xi^1&=&\alpha\xi^1+\beta\xi^2,\\
{}^\prime\xi^2&=&\gamma\xi^1+\delta\xi^2,
\end{array} & \phantom{ccc} &
\begin{array}{ccc}
{}^\prime\xi^{\dot{1}}&=&\dot{\alpha}\xi^{\dot{1}}+\dot{\beta}\xi^{\dot{2}},\\
{}^\prime\xi^{\dot{2}}&=&\dot{\gamma}\xi^{\dot{1}}+\dot{\delta}\xi^{\dot{2}},
\end{array}\\
\sigma=\ar\begin{bmatrix}
\alpha & \beta\\
\gamma & \delta
\end{bmatrix} & \phantom{ccc} &
\dot{\sigma}=\begin{bmatrix}
\dot{\alpha} & \dot{\beta}\\
\dot{\gamma} & \dot{\delta}
\end{bmatrix}.
\end{array}
\end{equation}
Transformations (\ref{6.5}) form the group $\SL(2,\C)$, since
$\sigma\in\Mat_2(\C)$ and
\[
\SL(2,\C)=\left\{\ar\begin{bmatrix} \alpha & \beta\\ \gamma &\delta\end{bmatrix}
\in\C_2:\;\det\begin{bmatrix} \alpha & \beta \\ \gamma & \delta\end{bmatrix}=1
\right\}\simeq\spin_+(1,3).
\]
The expressions (\ref{6.4}) and (\ref{6.5}) compose a base of the 2-spinor
van der Waerden formalism \cite{Wae29,Rum36}, in which the spaces
$\dS_2$ and $\dot{\dS}_2$ are called correspondingly spaces of
{\it undotted and dotted spinors}. The each of the spaces
$\dS_2$ and $\dot{\dS}_2$ is homeomorphic to an extended complex plane
$\C\cup\infty$ representing an absolute (a set of infinitely distant points)
of a Lobatchevskii space $S^{1,2}$. At this point, a group of fractional
linear transformations of the plane $\C\cup\infty$ is isomorphic to a motion
group of $S^{1,2}$ \cite{Roz55}. Besides, the Lobatchevskii space $S^{1,2}$ is an absolute of the
Minkowski world $\R^{1,3}$ and, therefore, the group of fractional linear
transformations of the plane $\C\cup\infty$ (motion group of
$S^{1,2}$) twice covers a `rotation group' of the space-time $\R^{1,3}$,
that is, the proper Lorentz group.

\subsubsection{Twistors}
The main idea of the Penrose twistor programme lies in the understanding of classical space-time as a some secondary construction which should be derived from the more primary notions. In capacity of the more primary notions we have here 2-component (complex) spinors, moreover, the pairs of 2-component spinors. In Penrose programme they called \emph{twistors}. It is interesting to note that twistor theory gives a mathematical description of physics which based totally on the complex structure. At this point, space-time geometry and quantum mechanical superposition principle arise as closely related aspects of this complex twistor structure.

Twistor $\bsZ^\alpha$ is constructed by the pair of 2-component quantities: spinor $\boldsymbol{\omega}^s$ and covariant spinor $\boldsymbol{\pi}_{\dot{s}}$ from conjugated space, that is, $\bsZ^\alpha=\left(\boldsymbol{\omega}^s,\boldsymbol{\pi}_{\dot{s}}\right)$ (or $\bsZ^\alpha=\left(\boldsymbol{\xi}^\mu,\boldsymbol{\xi}_{\dot{\mu}}\right)$). In twistor theory  momentum ($\vec{\omega}$) and impulse ($\vec{\pi}$) of the particle are constructed from the quantities $\boldsymbol{\omega}^s$ and $\boldsymbol{\pi}_{\dot{s}}$. One of the most important moments of this theory is \emph{a transition from twistors to coordinate space-time}. Penrose described this transition with the help of so-called \emph{basic relation of twistor theory}
\begin{equation}\label{Twistor}
\boldsymbol{\omega}^s=ix^{s\dot{r}}\boldsymbol{\pi}_{\dot{s}},
\end{equation}
where $x^{s\dot{r}}$ is a mixed spintensor of the second rank (see (\ref{6.1})). In more details we have
\[
\begin{bmatrix}
\omega_1\\
\omega_2
\end{bmatrix}=\frac{i}{\sqrt{2}}\begin{bmatrix}
x^0+x^3 & x^1+ix^2\\
x^1+ix^2 & x^0-x^3
\end{bmatrix}\begin{bmatrix}
\pi_{\dot{1}}\\
\pi_{\dot{2}}
\end{bmatrix}.
\]
From the basic relation (\ref{Twistor}) it immediately follows that space-time points are re-established over twistor space (they correspond to definite linear subspaces), but these points are secondary notion with respect to twistors.

In fact, twistors can be considered as `reduced spinors'\footnote{These reduced spinors are understood as follows. General spinors are elements of the minimal left ideal of the conformal algebra $\cl_{2,4}$,
\[
I_{2,4}=\cl_{2,4}f_{24}=\cl_{2,4}\frac{1}{2}(1+\e_{15})\frac{1}{2}(1+\e_{26}).
\]
The reduced spinors (twistors) are formulated within the even subalgebra $\cl^+_{2,4}\simeq\cl_{4,1}$ (the de Sitter algebra). The minimal left ideal of $\cl_{4,1}\simeq\C_4$ is (see also (\ref{IdealS}))
\[
I_{4,1}=\cl_{4,1}f_{4,1}=\cl_{4,1}\frac{1}{2}(1+\e_0)\frac{1}{2}
(1+i\e_{12}).
\]
Therefore, after reduction $I_{2,4}\rightarrow I_{4,1}$, generated by the isomorphism $\cl^+_{2,4}\simeq\cl_{4,1}$, we see that twistors $\bsZ^\alpha$ are elements of the ideal $I_{4,1}$ which leads to $\SU(2,2)\simeq\spin_+(2,4)\in\cl^+_{2,4}$ (see (\ref{ConfGroup0}) and (\ref{ConfGroup})). Moreover, from (\ref{2e1}) we have a relation between twistors and Dirac-Hestenes spinors.} for a pseudo-unitary group $\SO_0(2,4)$ which acts in six-dimensional space. This group is isomorphic locally to a 15-parameter conformal group of the Minkowski space $\R^{1,3}$ (the group of point-to-point mappings of $\R^{1,3}$ onto itself with preservation of the conformal structure of this space). Such mappings induce linear transformations of the twistor space which preserve the form $\bsZ^\alpha\overline{\bsZ}_\alpha$. The signature of $\bsZ^\alpha\overline{\bsZ}_\alpha$ has the form $(+,+,-,-)$, it means that the corresponding group in the twistor space is $\SU(2,2)$ (the group of pseudo-unitary $(+,+,-,-)$ unimodular $4\times 4$ matrices, see also (\ref{ConfGroup})):
\[
\SU(2,2)=\left\{\ar\begin{bmatrix} A & B\\ C & D\end{bmatrix}
\in\C_4:\;\det\begin{bmatrix} A & B \\ C & D\end{bmatrix}=1
\right\}\simeq\spin_+(2,4).
\]
\subsubsection{Qubits}
As is known, the qubit is a state vector of the two-level system. Thus, the qubit is a minimally possible (elementary) state vector. Any state vector can be represented as a set of such elementary vectors, for that reason the qubit is an original `building block' for the all other state vectors of any dimension. The vector (qubit) of the two-level system can be written in the form
\begin{equation}\label{1}
\left|\boldsymbol{\psi}\right\rangle=a\left|\boldsymbol{0}\right\rangle+b\left|\boldsymbol{1}\right\rangle,
\end{equation}
where $a,b\in\C$. A space of the two states, when the system can transits from one state to another (two-level system), is a simplest Hilbert space. The quantum state of $N$ qubits can be expressed as a vector in a space of dimension $2^N$. It is obvious that this space coincides with the spinspace $\dS_{2^N}$. We can choose as an orthonormal basis for this space the states in which each qubit has a definite value, either $\left|\boldsymbol{0}\right\rangle$ or $\left|\boldsymbol{1}\right\rangle$. These can be labeled by binary strings such as
\[
\left|\boldsymbol{0}\boldsymbol{1}\boldsymbol{1}\boldsymbol{1}\boldsymbol{0}\boldsymbol{0}\boldsymbol{1}
\boldsymbol{0}\,\cdots\,\boldsymbol{1}\boldsymbol{0}\boldsymbol{0}\boldsymbol{1}\right\rangle.
\]
A general normalized vector can be expressed in this basis as $\sum^{2^N-1}_{x=0}a_x\left|x\right\rangle$, where $a_x$ are complex numbers satisfying $\sum_x|a_x|^2=1$. Here we have a deep analogy between qubits and 2-component spinors. Just like the qubits, 2-component spinors are `building blocks' of the underlying spinor structure (via the tensor products of $\C_2$ and $\overset{\ast}{\C}_2$, see (\ref{TenAlg}) and (\ref{SpinSpace})). Moreover, vectors of the Hilbert space $\bsH^S\otimes\bsH^Q\otimes\bsH_\infty$ are constructed via the same way \cite{Var1402}, see also \cite{Wang}.

The density matrix of the qubit has $2\times 2$ size and for pure state (\ref{1}) can be written as
\begin{equation}\label{2}
\boldsymbol{r}=\left|\boldsymbol{\psi}\right\rangle\left\langle\boldsymbol{\psi}\right|=\begin{bmatrix} |a|^2 & ab^\ast\\
ba^\ast & |b|^2
\end{bmatrix}.
\end{equation}

There exists a more general expression for the density matrix of the qubit which includes both pure and mixed states:
\begin{equation}\label{3}
\boldsymbol{r}=\frac{1}{2}(\sigma_0+\boldsymbol{P}\cdot\boldsymbol{\sigma})=\frac{1}{2}\begin{bmatrix} 1+P_3 & P_1-iP_2\\
P_1+iP_2 & 1-P_3
\end{bmatrix},
\end{equation}
where $\boldsymbol{P}=(P_1,P_2,P_3)$ is a Bloch vector (polarization vector). Components of the Bloch vector are defined as average values of the Pauli matrices via the rule $P_j=\langle\sigma_j\rangle=\Tr(P_j\sigma_j)$, $j=1,2,3$. In accordance with (\ref{3}), three projections $P_1$, $P_2$, $P_3$ of the polarization vector define the density matrix of the qubit. In the case of pure state the length of $\boldsymbol{P}$ is equal to 1 ($|\boldsymbol{P}|^2=1$) and this vector describes a sphere of the unit radius which called a \emph{Bloch sphere} (see Fig.\,7)
\begin{figure}[ht]
\[
\unitlength=0.3mm
\begin{picture}(100.00,100.00)(0,25)
\put(50,50){\vector(0,1){51}}
\put(50,50){\line(0,-1){2}}
\put(50,46){\line(0,-1){2}}
\put(50,42){\line(0,-1){2}}
\put(50,38){\line(0,-1){2}}
\put(50,34){\line(0,-1){2}}
\put(50,30){\line(0,-1){2}}
\put(50,26){\line(0,-1){2}}
\put(50,22){\line(0,-1){2}}
\put(50,18){\line(0,-1){2}}
\put(50,14){\line(0,-1){2}}
\put(50,10){\line(0,-1){2}}
\put(50,6){\vector(0,-1){2}}
\put(50,51.5){\vector(1,0){70}}
\put(50,51.5){\vector(-2,-1){65}}
\put(50,51.5){\vector(2,1){38}}
\put(-30,30){$\frac{\left|\boldsymbol{0}\right\rangle+\left|\boldsymbol{1}\right\rangle}{\sqrt{2}}$}
\put(50,100){\vector(0,1){20}}
\put(53,110){$\left|\boldsymbol{0}\right\rangle$}
\put(53,85){$\left|\boldsymbol{\Psi}\right\rangle$}
\put(50,2){\vector(0,-1){20}}
\put(53,-10){$\left|\boldsymbol{1}\right\rangle$}
\put(46.5,49.5){$\bullet$}
\put(76,72){$\boldsymbol{P}$}
\put(86,67.75){$\bullet$}
\put(50,34){$\cdot$}
\put(49.5,34){$\cdot$}
\put(49,34){$\cdot$}
\put(48.5,34){$\cdot$}
\put(48,34.01){$\cdot$}
\put(47.5,34.02){$\cdot$}
\put(47,34.03){$\cdot$}
\put(46.5,34.04){$\cdot$}
\put(46,34.05){$\cdot$}
\put(45.5,34.06){$\cdot$}
\put(45,34.08){$\cdot$}
\put(44.5,34.10){$\cdot$}
\put(44,34.11){$\cdot$}
\put(43.5,34.13){$\cdot$}
\put(43,34.16){$\cdot$}
\put(42.5,34.18){$\cdot$}
\put(42,34.20){$\cdot$}
\put(41.5,34.23){$\cdot$}
\put(41,34.26){$\cdot$}
\put(40.5,34.29){$\cdot$}
\put(40,34.32){$\cdot$}
\put(39.5,34.36){$\cdot$}
\put(39,34.39){$\cdot$}
\put(38.5,34.43){$\cdot$}
\put(38,34.47){$\cdot$}
\put(37.5,34.51){$\cdot$}
\put(37,34.55){$\cdot$}
\put(36.5,34.59){$\cdot$}
\put(36,34.64){$\cdot$}
\put(35.5,34.69){$\cdot$}
\put(35,34.74){$\cdot$}
\put(34.5,34.79){$\cdot$}
\put(34,34.84){$\cdot$}
\put(33.5,34.90){$\cdot$}
\put(33,34.95){$\cdot$}
\put(32.5,35.01){$\cdot$}
\put(32,35.07){$\cdot$}
\put(31.5,35.13){$\cdot$}
\put(31,35.20){$\cdot$}
\put(30.5,35.27){$\cdot$}
\put(30,35.33){$\cdot$}
\put(29.5,35.40){$\cdot$}
\put(29,35.48){$\cdot$}
\put(28.5,35.55){$\cdot$}
\put(28,35.63){$\cdot$}
\put(27.5,35.71){$\cdot$}
\put(27,35.79){$\cdot$}
\put(26.5,35.88){$\cdot$}
\put(26,35.96){$\cdot$}
\put(25.5,36.05){$\cdot$}
\put(25,36.14){$\cdot$}
\put(24.5,36.24){$\cdot$}
\put(24,36.33){$\cdot$}
\put(23.5,36.43){$\cdot$}
\put(23,36.53){$\cdot$}
\put(22.5,36.64){$\cdot$}
\put(22,36.74){$\cdot$}
\put(21.5,36.85){$\cdot$}
\put(21,36.97){$\cdot$}
\put(20.5,37.08){$\cdot$}
\put(20,37.20){$\cdot$}
\put(19.5,37.32){$\cdot$}
\put(19,37.45){$\cdot$}
\put(18.5,37.57){$\cdot$}
\put(18,37.70){$\cdot$}
\put(17.5,37.84){$\cdot$}
\put(17,37.98){$\cdot$}
\put(16.5,38.12){$\cdot$}
\put(16,38.27){$\cdot$}
\put(15.5,38.42){$\cdot$}
\put(15,38.57){$\cdot$}
\put(14.5,38.73){$\cdot$}
\put(14,38.90){$\cdot$}
\put(13.5,39.06){$\cdot$}
\put(13,39.24){$\cdot$}
\put(12.5,39.42){$\cdot$}
\put(12,39.60){$\cdot$}
\put(11.5,39.79){$\cdot$}
\put(11,39.99){$\cdot$}
\put(10.5,40.19){$\cdot$}
\put(10,40.40){$\cdot$}
\put(9.5,40.62){$\cdot$}
\put(9,40.84){$\cdot$}
\put(8.5,41.07){$\cdot$}
\put(8,41.32){$\cdot$}
\put(7.5,41.57){$\cdot$}
\put(7,41.83){$\cdot$}
\put(6.5,42.11){$\cdot$}
\put(6,42.40){$\cdot$}
\put(5.5,42.70){$\cdot$}
\put(5,43.02){$\cdot$}
\put(4.5,43.36){$\cdot$}
\put(4,43.73){$\cdot$}
\put(3.5,44.12){$\cdot$}
\put(3,44.54){$\cdot$}
\put(2.5,45.00){$\cdot$}
\put(2,45.52){$\cdot$}
\put(1.5,46.11){$\cdot$}
\put(1,46.82){$\cdot$}
\put(0.5,47.74){$\cdot$}
\put(0,50){$\cdot$}
\put(50,34){$\cdot$}
\put(50.5,34){$\cdot$}
\put(51,34){$\cdot$}
\put(51.5,34){$\cdot$}
\put(52,34.01){$\cdot$}
\put(52.5,34.02){$\cdot$}
\put(53,34.03){$\cdot$}
\put(53.5,34.04){$\cdot$}
\put(54,34.05){$\cdot$}
\put(54.5,34.06){$\cdot$}
\put(55,34.08){$\cdot$}
\put(55.5,34.10){$\cdot$}
\put(56,34.11){$\cdot$}
\put(56.5,34.13){$\cdot$}
\put(57,34.16){$\cdot$}
\put(57.5,34.18){$\cdot$}
\put(58,34.20){$\cdot$}
\put(58.5,34.23){$\cdot$}
\put(59,34.26){$\cdot$}
\put(59.5,34.29){$\cdot$}
\put(60,34.32){$\cdot$}
\put(60.5,34.36){$\cdot$}
\put(61,34.39){$\cdot$}
\put(61.5,34.43){$\cdot$}
\put(62,34.47){$\cdot$}
\put(62.5,34.51){$\cdot$}
\put(63,34.55){$\cdot$}
\put(63.5,34.59){$\cdot$}
\put(64,34.64){$\cdot$}
\put(64.5,34.69){$\cdot$}
\put(65,34.74){$\cdot$}
\put(65.5,34.79){$\cdot$}
\put(66,34.84){$\cdot$}
\put(66.5,34.90){$\cdot$}
\put(67,34.95){$\cdot$}
\put(67.5,35.01){$\cdot$}
\put(68,35.07){$\cdot$}
\put(68.5,35.13){$\cdot$}
\put(69,35.20){$\cdot$}
\put(69.5,35.27){$\cdot$}
\put(70,35.33){$\cdot$}
\put(70.5,35.40){$\cdot$}
\put(71,35.48){$\cdot$}
\put(71.5,35.55){$\cdot$}
\put(72,35.63){$\cdot$}
\put(72.5,35.71){$\cdot$}
\put(73,35.79){$\cdot$}
\put(73.5,35.88){$\cdot$}
\put(74,35.96){$\cdot$}
\put(74.5,36.05){$\cdot$}
\put(75,36.14){$\cdot$}
\put(75.5,36.24){$\cdot$}
\put(76,36.33){$\cdot$}
\put(76.5,36.43){$\cdot$}
\put(77,36.53){$\cdot$}
\put(77.5,36.64){$\cdot$}
\put(78,36.74){$\cdot$}
\put(78.5,36.85){$\cdot$}
\put(79,36.97){$\cdot$}
\put(79.5,37.08){$\cdot$}
\put(80,37.20){$\cdot$}
\put(80.5,37.32){$\cdot$}
\put(81,37.45){$\cdot$}
\put(81.5,37.57){$\cdot$}
\put(82,37.70){$\cdot$}
\put(82.5,37.84){$\cdot$}
\put(83,37.98){$\cdot$}
\put(83.5,38.12){$\cdot$}
\put(84,38.27){$\cdot$}
\put(84.5,38.42){$\cdot$}
\put(85,38.57){$\cdot$}
\put(85.5,38.73){$\cdot$}
\put(86,38.90){$\cdot$}
\put(86.5,39.06){$\cdot$}
\put(87,39.24){$\cdot$}
\put(87.5,39.42){$\cdot$}
\put(88,39.60){$\cdot$}
\put(88.5,39.79){$\cdot$}
\put(89,39.99){$\cdot$}
\put(89.5,40.19){$\cdot$}
\put(90,40.40){$\cdot$}
\put(90.5,40.62){$\cdot$}
\put(91,40.84){$\cdot$}
\put(91.5,41.07){$\cdot$}
\put(92,41.32){$\cdot$}
\put(92.5,41.57){$\cdot$}
\put(93,41.83){$\cdot$}
\put(93.5,42.11){$\cdot$}
\put(94,42.40){$\cdot$}
\put(94.5,42.70){$\cdot$}
\put(95,43.02){$\cdot$}
\put(95.5,43.36){$\cdot$}
\put(96,43.73){$\cdot$}
\put(96.5,44.12){$\cdot$}
\put(97,44.54){$\cdot$}
\put(97.5,45.00){$\cdot$}
\put(98,45.52){$\cdot$}
\put(98.5,46.11){$\cdot$}
\put(99,46.82){$\cdot$}
\put(99.5,47.74){$\cdot$}
\put(100,50){$\cdot$}

\put(50,66){$\cdot$}
\put(49.5,65.99){$\cdot$}
\put(49,65.99){$\cdot$}
\put(48.5,65.99){$\cdot$}
\put(48,65.98){$\cdot$}
\put(45,65.91){$\cdot$}
\put(44.5,65.90){$\cdot$}
\put(44,65.88){$\cdot$}
\put(43.5,65.86){$\cdot$}
\put(43,65.84){$\cdot$}
\put(40,65.68){$\cdot$}
\put(39.5,65.64){$\cdot$}
\put(39,65.60){$\cdot$}
\put(38.5,65.57){$\cdot$}
\put(38,65.53){$\cdot$}
\put(35,65.26){$\cdot$}
\put(34.5,65.21){$\cdot$}
\put(34,65.16){$\cdot$}
\put(33.5,65.10){$\cdot$}
\put(33,65.04){$\cdot$}
\put(30,64.66){$\cdot$}
\put(29.5,64.59){$\cdot$}
\put(29,64.52){$\cdot$}
\put(28.5,64.44){$\cdot$}
\put(28,64.37){$\cdot$}
\put(25,63.86){$\cdot$}
\put(24.5,63.76){$\cdot$}
\put(24,63.66){$\cdot$}
\put(23.5,63.57){$\cdot$}
\put(23,63.47){$\cdot$}
\put(20,62.80){$\cdot$}
\put(19.5,62.68){$\cdot$}
\put(19,62.55){$\cdot$}
\put(18.5,62.42){$\cdot$}
\put(18,62.29){$\cdot$}
\put(15,61.43){$\cdot$}
\put(14.5,61.27){$\cdot$}
\put(14,61.10){$\cdot$}
\put(13.5,60.93){$\cdot$}
\put(13,60.76){$\cdot$}
\put(10,59.60){$\cdot$}
\put(9.5,59.38){$\cdot$}
\put(9,59.16){$\cdot$}
\put(8.5,58.92){$\cdot$}
\put(8,58.68){$\cdot$}
\put(5,56.97){$\cdot$}
\put(4.5,56.63){$\cdot$}
\put(4,56.27){$\cdot$}
\put(3.5,55.88){$\cdot$}
\put(3,55.46){$\cdot$}
\put(53,65.98){$\cdot$}
\put(53.5,65.97){$\cdot$}
\put(54,65.96){$\cdot$}
\put(54.5,65.95){$\cdot$}
\put(55,65.93){$\cdot$}
\put(58,65.82){$\cdot$}
\put(58.5,65.79){$\cdot$}
\put(59,65.77){$\cdot$}
\put(59.5,65.74){$\cdot$}
\put(60,65.70){$\cdot$}
\put(63,65.49){$\cdot$}
\put(63.5,65.45){$\cdot$}
\put(64,65.40){$\cdot$}
\put(64.5,65.36){$\cdot$}
\put(65,65.31){$\cdot$}
\put(68,64.99){$\cdot$}
\put(68.5,64.93){$\cdot$}
\put(69,64.86){$\cdot$}
\put(69.5,64.79){$\cdot$}
\put(70,64.73){$\cdot$}
\put(73,64.29){$\cdot$}
\put(73.5,64.20){$\cdot$}
\put(74,64.12){$\cdot$}
\put(74.5,64.04){$\cdot$}
\put(75,63.95){$\cdot$}
\put(78,63.36){$\cdot$}
\put(78.5,63.25){$\cdot$}
\put(79,63.15){$\cdot$}
\put(79.5,63.03){$\cdot$}
\put(80,62.92){$\cdot$}
\put(83,62.16){$\cdot$}
\put(83.5,62.02){$\cdot$}
\put(84,61.88){$\cdot$}
\put(84.5,61.73){$\cdot$}
\put(85,61.58){$\cdot$}
\put(88,60.58){$\cdot$}
\put(88.5,60.40){$\cdot$}
\put(89,60.20){$\cdot$}
\put(89.5,60.01){$\cdot$}
\put(90,59.81){$\cdot$}
\put(93,58.43){$\cdot$}
\put(93.5,58.16){$\cdot$}
\put(94,57.88){$\cdot$}
\put(94.5,57.59){$\cdot$}
\put(95,57.29){$\cdot$}
\put(98,54.99){$\cdot$}
\put(98.5,54.48){$\cdot$}
\put(99,53.89){$\cdot$}
\put(99.5,53.18){$\cdot$}
\put(100,52.26){$\cdot$}

\put(50,100){$\cdot$}
\put(49.5,99.99){$\cdot$}
\put(49,99.98){$\cdot$}
\put(48.5,99.97){$\cdot$}
\put(48,99.96){$\cdot$}
\put(47.5,99.94){$\cdot$}
\put(47,99.90){$\cdot$}
\put(46.5,99.88){$\cdot$}
\put(46,99.84){$\cdot$}
\put(45.5,99.80){$\cdot$}
\put(45,99.75){$\cdot$}
\put(44.5,99.70){$\cdot$}
\put(44,99.64){$\cdot$}
\put(43.5,99.58){$\cdot$}
\put(43,99.51){$\cdot$}
\put(42.5,99.43){$\cdot$}
\put(42,99.35){$\cdot$}
\put(41.5,99.27){$\cdot$}
\put(41,99.18){$\cdot$}
\put(40.5,99.09){$\cdot$}
\put(40,98.99){$\cdot$}
\put(39.5,98.88){$\cdot$}
\put(39,98.77){$\cdot$}
\put(38.5,98.66){$\cdot$}
\put(38,98.54){$\cdot$}
\put(37.5,98.41){$\cdot$}
\put(37,98.28){$\cdot$}
\put(36.5,98.14){$\cdot$}
\put(36,98.00){$\cdot$}
\put(35.5,97.85){$\cdot$}
\put(35,97.70){$\cdot$}
\put(34.5,97.54){$\cdot$}
\put(34,97.37){$\cdot$}
\put(33.5,97.20){$\cdot$}
\put(33,97.02){$\cdot$}
\put(32.5,96.84){$\cdot$}
\put(32,96.65){$\cdot$}
\put(31.5,96.45){$\cdot$}
\put(31,96.25){$\cdot$}
\put(30.5,96.04){$\cdot$}
\put(30,95.82){$\cdot$}
\put(29.5,95.60){$\cdot$}
\put(29,95.38){$\cdot$}
\put(28.5,95.14){$\cdot$}
\put(28,94.90){$\cdot$}
\put(27.5,94.65){$\cdot$}
\put(27,94.39){$\cdot$}
\put(26.5,94.13){$\cdot$}
\put(26,93.86){$\cdot$}
\put(25.5,93.59){$\cdot$}
\put(25,93.30){$\cdot$}
\put(24.5,93.01){$\cdot$}
\put(24,92.71){$\cdot$}
\put(23.5,92.40){$\cdot$}
\put(23,92.08){$\cdot$}
\put(22.5,91.76){$\cdot$}
\put(22,91.42){$\cdot$}
\put(21.5,91.08){$\cdot$}
\put(21,90.73){$\cdot$}
\put(20.5,90.37){$\cdot$}
\put(20,90.00){$\cdot$}
\put(19.5,89.62){$\cdot$}
\put(19,89.23){$\cdot$}
\put(18.5,88.83){$\cdot$}
\put(18,88.42){$\cdot$}
\put(17.5,87.99){$\cdot$}
\put(17,87.56){$\cdot$}
\put(16.5,87.12){$\cdot$}
\put(16,86.66){$\cdot$}
\put(15.5,86.19){$\cdot$}
\put(15,85.70){$\cdot$}
\put(14.5,85.21){$\cdot$}
\put(14,84.70){$\cdot$}
\put(13.5,84.17){$\cdot$}
\put(13,83.63){$\cdot$}
\put(12.5,83.07){$\cdot$}
\put(12,82.49){$\cdot$}
\put(11.5,81.90){$\cdot$}
\put(11,81.29){$\cdot$}
\put(10.5,80.65){$\cdot$}
\put(10,80.00){$\cdot$}
\put(9.5,79.32){$\cdot$}
\put(9,78.62){$\cdot$}
\put(8.5,77.88){$\cdot$}
\put(8,77.13){$\cdot$}
\put(7.5,76.34){$\cdot$}
\put(7,75.51){$\cdot$}
\put(6.5,74.65){$\cdot$}
\put(6,73.75){$\cdot$}
\put(5.5,72.80){$\cdot$}
\put(5,71.79){$\cdot$}
\put(4.5,70.73){$\cdot$}
\put(4,69.59){$\cdot$}
\put(3.5,68.38){$\cdot$}
\put(3,67.06){$\cdot$}
\put(2.5,65.61){$\cdot$}
\put(2,64.00){$\cdot$}
\put(1.5,62.15){$\cdot$}
\put(1,59.95){$\cdot$}
\put(0.5,57.05){$\cdot$}
\put(0,50){$\cdot$}
\put(50,100){$\cdot$}
\put(50.5,99.99){$\cdot$}
\put(51,99.98){$\cdot$}
\put(51.5,99.97){$\cdot$}
\put(52,99.96){$\cdot$}
\put(52.5,99.94){$\cdot$}
\put(53,99.90){$\cdot$}
\put(53.5,99.88){$\cdot$}
\put(54,99.84){$\cdot$}
\put(54.5,99.80){$\cdot$}
\put(55,99.75){$\cdot$}
\put(55.5,99.70){$\cdot$}
\put(56,99.64){$\cdot$}
\put(56.5,99.58){$\cdot$}
\put(57,99.51){$\cdot$}
\put(57.5,99.43){$\cdot$}
\put(58,99.35){$\cdot$}
\put(58.5,99.27){$\cdot$}
\put(59,99.18){$\cdot$}
\put(59.5,99.09){$\cdot$}
\put(60,98.99){$\cdot$}
\put(60.5,98.88){$\cdot$}
\put(61,98.77){$\cdot$}
\put(61.5,98.66){$\cdot$}
\put(62,98.54){$\cdot$}
\put(62.5,98.41){$\cdot$}
\put(63,98.28){$\cdot$}
\put(63.5,98.14){$\cdot$}
\put(64,98.00){$\cdot$}
\put(64.5,97.85){$\cdot$}
\put(65,97.70){$\cdot$}
\put(65.5,97.54){$\cdot$}
\put(66,97.37){$\cdot$}
\put(66.5,97.20){$\cdot$}
\put(67,97.02){$\cdot$}
\put(67.5,96.84){$\cdot$}
\put(68,96.65){$\cdot$}
\put(68.5,96.45){$\cdot$}
\put(69,96.25){$\cdot$}
\put(69.5,96.04){$\cdot$}
\put(70,95.82){$\cdot$}
\put(70.5,95.60){$\cdot$}
\put(71,95.38){$\cdot$}
\put(71.5,95.14){$\cdot$}
\put(72,94.90){$\cdot$}
\put(72.5,94.65){$\cdot$}
\put(73,94.39){$\cdot$}
\put(73.5,94.13){$\cdot$}
\put(74,93.86){$\cdot$}
\put(74.5,93.59){$\cdot$}
\put(75,93.30){$\cdot$}
\put(75.5,93.01){$\cdot$}
\put(76,92.71){$\cdot$}
\put(76.5,92.40){$\cdot$}
\put(77,92.08){$\cdot$}
\put(77.5,91.76){$\cdot$}
\put(78,91.42){$\cdot$}
\put(78.5,91.08){$\cdot$}
\put(79,90.73){$\cdot$}
\put(79.5,90.37){$\cdot$}
\put(80,90.00){$\cdot$}
\put(80.5,89.62){$\cdot$}
\put(81,89.23){$\cdot$}
\put(81.5,88.83){$\cdot$}
\put(82,88.42){$\cdot$}
\put(82.5,87.99){$\cdot$}
\put(83,87.56){$\cdot$}
\put(83.5,87.12){$\cdot$}
\put(84,86.66){$\cdot$}
\put(84.5,86.19){$\cdot$}
\put(85,85.70){$\cdot$}
\put(85.5,85.21){$\cdot$}
\put(86,84.70){$\cdot$}
\put(86.5,84.17){$\cdot$}
\put(87,83.63){$\cdot$}
\put(87.5,83.07){$\cdot$}
\put(88,82.49){$\cdot$}
\put(88.5,81.90){$\cdot$}
\put(89,81.29){$\cdot$}
\put(89.5,80.65){$\cdot$}
\put(90,80.00){$\cdot$}
\put(90.5,79.32){$\cdot$}
\put(91,78.62){$\cdot$}
\put(91.5,77.88){$\cdot$}
\put(92,77.13){$\cdot$}
\put(92.5,76.34){$\cdot$}
\put(93,75.51){$\cdot$}
\put(93.5,74.65){$\cdot$}
\put(94,73.75){$\cdot$}
\put(94.5,72.80){$\cdot$}
\put(95,71.79){$\cdot$}
\put(95.5,70.73){$\cdot$}
\put(96,69.59){$\cdot$}
\put(96.5,68.38){$\cdot$}
\put(97,67.06){$\cdot$}
\put(97.5,65.61){$\cdot$}
\put(98,64.00){$\cdot$}
\put(98.5,62.15){$\cdot$}
\put(99,59.95){$\cdot$}
\put(99.5,57.05){$\cdot$}
\put(100,50){$\cdot$}

\put(50,0){$\cdot$}
\put(49.5,0){$\cdot$}
\put(49,0.01){$\cdot$}
\put(48.5,0.02){$\cdot$}
\put(48,0.04){$\cdot$}
\put(47.5,0.06){$\cdot$}
\put(47,0.09){$\cdot$}
\put(46.5,0.12){$\cdot$}
\put(46,0.16){$\cdot$}
\put(45.5,0.2){$\cdot$}
\put(45,0.25){$\cdot$}
\put(44.5,0.3){$\cdot$}
\put(44,0.36){$\cdot$}
\put(43.5,0.42){$\cdot$}
\put(43,0.49){$\cdot$}
\put(42.5,0.56){$\cdot$}
\put(42,0.64){$\cdot$}
\put(41.5,0.73){$\cdot$}
\put(41,0.82){$\cdot$}
\put(40.5,0.91){$\cdot$}
\put(40,1.01){$\cdot$}
\put(39.5,1.11){$\cdot$}
\put(39,1.22){$\cdot$}
\put(38.5,1.34){$\cdot$}
\put(38,1.46){$\cdot$}
\put(37.5,1.59){$\cdot$}
\put(37,1.72){$\cdot$}
\put(36.5,1.86){$\cdot$}
\put(36,2.0){$\cdot$}
\put(35.5,2.15){$\cdot$}
\put(35,2.3){$\cdot$}
\put(34.5,2.46){$\cdot$}
\put(34,2.63){$\cdot$}
\put(33.5,2.8){$\cdot$}
\put(33,2.98){$\cdot$}
\put(32.5,3.16){$\cdot$}
\put(32,3.35){$\cdot$}
\put(31.5,3.55){$\cdot$}
\put(31,3.75){$\cdot$}
\put(30.5,3.96){$\cdot$}
\put(30,4.17){$\cdot$}
\put(29.5,4.39){$\cdot$}
\put(29,4.62){$\cdot$}
\put(28.5,4.86){$\cdot$}
\put(28,5.1){$\cdot$}
\put(27.5,5.35){$\cdot$}
\put(27,5.6){$\cdot$}
\put(26.5,5.87){$\cdot$}
\put(26,6.14){$\cdot$}
\put(25.5,6.41){$\cdot$}
\put(25,6.7){$\cdot$}
\put(24.5,6.99){$\cdot$}
\put(24,7.29){$\cdot$}
\put(23.5,7.6){$\cdot$}
\put(23,7.9){$\cdot$}
\put(22.5,8.24){$\cdot$}
\put(22,8.57){$\cdot$}
\put(21.5,8.92){$\cdot$}
\put(21,9.27){$\cdot$}
\put(20.5,9.63){$\cdot$}
\put(20,10.0){$\cdot$}
\put(19.5,10.38){$\cdot$}
\put(19,10.77){$\cdot$}
\put(18.5,11.17){$\cdot$}
\put(18,11.58){$\cdot$}
\put(17.5,12.0){$\cdot$}
\put(17,12.44){$\cdot$}
\put(16.5,12.88){$\cdot$}
\put(16,13.34){$\cdot$}
\put(15.5,13.81){$\cdot$}
\put(15,14.29){$\cdot$}
\put(14.5,14.79){$\cdot$}
\put(14,15.3){$\cdot$}
\put(13.5,15.83){$\cdot$}
\put(13,16.37){$\cdot$}
\put(12.5,16.93){$\cdot$}
\put(12,17.5){$\cdot$}
\put(11.5,18.1){$\cdot$}
\put(11,18.71){$\cdot$}
\put(10.5,19.34){$\cdot$}
\put(10,20.00){$\cdot$}
\put(9.5,20.68){$\cdot$}
\put(9,21.38){$\cdot$}
\put(8.5,22.11){$\cdot$}
\put(8,22.87){$\cdot$}
\put(7.5,23.66){$\cdot$}
\put(7,24.48){$\cdot$}
\put(6.5,25.35){$\cdot$}
\put(6,26.25){$\cdot$}
\put(5.5,27.2){$\cdot$}
\put(5,28.2){$\cdot$}
\put(4.5,29.27){$\cdot$}
\put(4,30.4){$\cdot$}
\put(3.5,31.62){$\cdot$}
\put(3,32.94){$\cdot$}
\put(2.5,34.39){$\cdot$}
\put(2,36.00){$\cdot$}
\put(1.5,37.84){$\cdot$}
\put(1,40.05){$\cdot$}
\put(0.5,42.95){$\cdot$}
\put(0,50){$\cdot$}
\put(50,0){$\cdot$}
\put(50.5,0){$\cdot$}
\put(51,0.01){$\cdot$}
\put(51.5,0.02){$\cdot$}
\put(52,0.04){$\cdot$}
\put(52.5,0.06){$\cdot$}
\put(53,0.09){$\cdot$}
\put(53.5,0.12){$\cdot$}
\put(54,0.16){$\cdot$}
\put(54.5,0.2){$\cdot$}
\put(55,0.25){$\cdot$}
\put(55.5,0.3){$\cdot$}
\put(56,0.36){$\cdot$}
\put(56.5,0.42){$\cdot$}
\put(57,0.49){$\cdot$}
\put(57.5,0.56){$\cdot$}
\put(58,0.64){$\cdot$}
\put(58.5,0.73){$\cdot$}
\put(59,0.82){$\cdot$}
\put(59.5,0.91){$\cdot$}
\put(60,1.01){$\cdot$}
\put(60.5,1.11){$\cdot$}
\put(61,1.22){$\cdot$}
\put(61.5,1.34){$\cdot$}
\put(62,1.46){$\cdot$}
\put(62.5,1.59){$\cdot$}
\put(63,1.72){$\cdot$}
\put(63.5,1.86){$\cdot$}
\put(64,2.0){$\cdot$}
\put(64.5,2.15){$\cdot$}
\put(65,2.3){$\cdot$}
\put(65.5,2.46){$\cdot$}
\put(66,2.63){$\cdot$}
\put(66.5,2.8){$\cdot$}
\put(67,2.98){$\cdot$}
\put(67.5,3.16){$\cdot$}
\put(68,3.35){$\cdot$}
\put(68.5,3.55){$\cdot$}
\put(69,3.75){$\cdot$}
\put(69.5,3.96){$\cdot$}
\put(70,4.17){$\cdot$}
\put(70.5,4.39){$\cdot$}
\put(71,4.62){$\cdot$}
\put(71.5,4.86){$\cdot$}
\put(72,5.1){$\cdot$}
\put(72.5,5.35){$\cdot$}
\put(73,5.6){$\cdot$}
\put(73.5,5.87){$\cdot$}
\put(74,6.14){$\cdot$}
\put(74.5,6.41){$\cdot$}
\put(75,6.7){$\cdot$}
\put(75.5,6.99){$\cdot$}
\put(76,7.29){$\cdot$}
\put(76.5,7.6){$\cdot$}
\put(77,7.9){$\cdot$}
\put(77.5,8.24){$\cdot$}
\put(78,8.57){$\cdot$}
\put(78.5,8.92){$\cdot$}
\put(79,9.27){$\cdot$}
\put(79.5,9.63){$\cdot$}
\put(80,10.0){$\cdot$}
\put(80.5,10.38){$\cdot$}
\put(81,10.77){$\cdot$}
\put(81.5,11.17){$\cdot$}
\put(82,11.58){$\cdot$}
\put(82.5,12.0){$\cdot$}
\put(83,12.44){$\cdot$}
\put(83.5,12.88){$\cdot$}
\put(84,13.34){$\cdot$}
\put(84.5,13.81){$\cdot$}
\put(85,14.29){$\cdot$}
\put(85.5,14.79){$\cdot$}
\put(86,15.3){$\cdot$}
\put(86.5,15.83){$\cdot$}
\put(87,16.37){$\cdot$}
\put(87.5,16.93){$\cdot$}
\put(88,17.5){$\cdot$}
\put(88.5,18.1){$\cdot$}
\put(89,18.71){$\cdot$}
\put(89.5,19.34){$\cdot$}
\put(90,20.00){$\cdot$}
\put(90.5,20.68){$\cdot$}
\put(91,21.38){$\cdot$}
\put(91.5,22.11){$\cdot$}
\put(92,22.87){$\cdot$}
\put(92.5,23.66){$\cdot$}
\put(93,24.48){$\cdot$}
\put(93.5,25.35){$\cdot$}
\put(94,26.25){$\cdot$}
\put(94.5,27.2){$\cdot$}
\put(95,28.2){$\cdot$}
\put(95.5,29.27){$\cdot$}
\put(96,30.4){$\cdot$}
\put(96.5,31.62){$\cdot$}
\put(97,32.94){$\cdot$}
\put(97.5,34.39){$\cdot$}
\put(98,36.00){$\cdot$}
\put(98.5,37.84){$\cdot$}
\put(99,40.05){$\cdot$}
\put(99.5,42.95){$\cdot$}
\put(100,50){$\cdot$}

\put(0,50){$\cdot$}
\put(0.01,50.5){$\cdot$}
\put(0.01,51){$\cdot$}
\put(0.02,51.5){$\cdot$}
\put(0.04,52){$\cdot$}
\put(0.06,52.5){$\cdot$}
\put(0.09,53){$\cdot$}
\put(0.12,53.5){$\cdot$}
\put(0.16,54){$\cdot$}
\put(0.2,54.5){$\cdot$}
\put(0.25,55){$\cdot$}
\put(0.3,55.5){$\cdot$}
\put(0.36,56){$\cdot$}
\put(0.42,56.5){$\cdot$}
\put(0.49,57){$\cdot$}
\put(0.56,57.5){$\cdot$}
\put(0.64,58){$\cdot$}
\put(0.78,58.5){$\cdot$}
\put(0.82,59){$\cdot$}
\put(0.91,59.5){$\cdot$}
\put(1.01,60){$\cdot$}
\put(1.11,60.5){$\cdot$}
\put(1.22,61){$\cdot$}
\put(1.34,61.5){$\cdot$}
\put(1.46,62){$\cdot$}
\put(1.59,62.5){$\cdot$}
\put(1.72,63){$\cdot$}
\put(1.86,63.5){$\cdot$}
\put(2,64){$\cdot$}
\put(2.15,64.5){$\cdot$}
\put(2.3,65){$\cdot$}
\put(2.46,65.5){$\cdot$}
\put(2.63,66){$\cdot$}
\put(2.6,66.5){$\cdot$}
\put(2.98,67){$\cdot$}
\put(3.16,67.5){$\cdot$}
\put(3.35,68){$\cdot$}
\put(3.54,68.5){$\cdot$}
\put(3.75,69){$\cdot$}
\put(3.96,69.5){$\cdot$}
\put(4.17,70){$\cdot$}
\put(4.39,70.5){$\cdot$}
\put(4.62,71){$\cdot$}
\put(4.86,71.5){$\cdot$}
\put(5.1,72){$\cdot$}
\put(5.35,72.5){$\cdot$}
\put(5.6,73){$\cdot$}
\put(5.87,73.5){$\cdot$}
\put(6.14,74){$\cdot$}
\put(6.41,74.5){$\cdot$}
\put(6.7,75){$\cdot$}

\put(0,50){$\cdot$}
\put(0.01,49.5){$\cdot$}
\put(0.01,49){$\cdot$}
\put(0.02,48.5){$\cdot$}
\put(0.04,48){$\cdot$}
\put(0.06,47.5){$\cdot$}
\put(0.09,47){$\cdot$}
\put(0.12,46.5){$\cdot$}
\put(0.16,46){$\cdot$}
\put(0.2,45.5){$\cdot$}
\put(0.25,45){$\cdot$}
\put(0.3,44.5){$\cdot$}
\put(0.36,44){$\cdot$}
\put(0.42,43.5){$\cdot$}
\put(0.49,43){$\cdot$}
\put(0.56,42.5){$\cdot$}
\put(0.64,42){$\cdot$}
\put(0.78,41.5){$\cdot$}
\put(0.82,41){$\cdot$}
\put(0.91,40.5){$\cdot$}
\put(1.01,40){$\cdot$}
\put(1.11,39.5){$\cdot$}
\put(1.22,39){$\cdot$}
\put(1.34,38.5){$\cdot$}
\put(1.46,38){$\cdot$}
\put(1.59,37.5){$\cdot$}
\put(1.72,37){$\cdot$}
\put(1.86,36.5){$\cdot$}
\put(2,36){$\cdot$}
\put(2.15,35.5){$\cdot$}
\put(2.3,35){$\cdot$}
\put(2.46,34.5){$\cdot$}
\put(2.63,34){$\cdot$}
\put(2.6,33.5){$\cdot$}
\put(2.98,33){$\cdot$}
\put(3.16,32.5){$\cdot$}
\put(3.35,32){$\cdot$}
\put(3.54,31.5){$\cdot$}
\put(3.75,31){$\cdot$}
\put(3.96,30.5){$\cdot$}
\put(4.17,30){$\cdot$}
\put(4.39,29.5){$\cdot$}
\put(4.62,29){$\cdot$}
\put(4.86,28.5){$\cdot$}
\put(5.1,28){$\cdot$}
\put(5.35,27.5){$\cdot$}
\put(5.6,27){$\cdot$}
\put(5.87,26.5){$\cdot$}
\put(6.14,26){$\cdot$}
\put(6.41,25.5){$\cdot$}
\put(6.7,25){$\cdot$}

\put(100,50){$\cdot$}
\put(99.99,50.5){$\cdot$}
\put(99.99,51){$\cdot$}
\put(99.98,51.5){$\cdot$}
\put(99.96,52){$\cdot$}
\put(99.94,52.5){$\cdot$}
\put(99.91,53){$\cdot$}
\put(99.88,53.5){$\cdot$}
\put(99.84,54){$\cdot$}
\put(99.8,54.5){$\cdot$}
\put(99.75,55){$\cdot$}
\put(99.7,55.5){$\cdot$}
\put(99.64,56){$\cdot$}
\put(99.58,56.5){$\cdot$}
\put(99.51,57){$\cdot$}
\put(99.44,57.5){$\cdot$}
\put(99.36,58){$\cdot$}
\put(99.22,58.5){$\cdot$}
\put(99.18,59){$\cdot$}
\put(99.09,59.5){$\cdot$}
\put(98.99,60){$\cdot$}
\put(98.89,60.5){$\cdot$}
\put(98.78,61){$\cdot$}
\put(98.66,61.5){$\cdot$}
\put(98.54,62){$\cdot$}
\put(98.41,62.5){$\cdot$}
\put(98.28,63){$\cdot$}
\put(98.14,63.5){$\cdot$}
\put(98,64){$\cdot$}
\put(97.85,64.5){$\cdot$}
\put(97.7,65){$\cdot$}
\put(97.54,65.5){$\cdot$}
\put(97.37,66){$\cdot$}
\put(97.4,66.5){$\cdot$}
\put(97.02,67){$\cdot$}
\put(96.84,67.5){$\cdot$}
\put(96.65,68){$\cdot$}
\put(96.46,68.5){$\cdot$}
\put(96.25,69){$\cdot$}
\put(96.04,69.5){$\cdot$}
\put(95.83,70){$\cdot$}
\put(95.61,70.5){$\cdot$}
\put(95.38,71){$\cdot$}
\put(95.14,71.5){$\cdot$}
\put(94.9,72){$\cdot$}
\put(94.65,72.5){$\cdot$}
\put(94.4,73){$\cdot$}
\put(94.13,73.5){$\cdot$}
\put(93.86,74){$\cdot$}
\put(93.59,74.5){$\cdot$}
\put(93.3,75){$\cdot$}

\put(100,50){$\cdot$}
\put(99.99,49.5){$\cdot$}
\put(99.99,49){$\cdot$}
\put(99.98,48.5){$\cdot$}
\put(99.96,48){$\cdot$}
\put(99.94,47.5){$\cdot$}
\put(99.91,47){$\cdot$}
\put(99.88,46.5){$\cdot$}
\put(99.84,46){$\cdot$}
\put(99.8,45.5){$\cdot$}
\put(99.75,45){$\cdot$}
\put(99.7,44.5){$\cdot$}
\put(99.64,44){$\cdot$}
\put(99.58,43.5){$\cdot$}
\put(99.51,43){$\cdot$}
\put(99.44,42.5){$\cdot$}
\put(99.36,42){$\cdot$}
\put(99.22,41.5){$\cdot$}
\put(99.18,41){$\cdot$}
\put(99.09,40.5){$\cdot$}
\put(98.99,40){$\cdot$}
\put(98.89,39.5){$\cdot$}
\put(98.78,39){$\cdot$}
\put(98.66,38.5){$\cdot$}
\put(98.54,38){$\cdot$}
\put(98.41,37.5){$\cdot$}
\put(98.28,37){$\cdot$}
\put(98.14,36.5){$\cdot$}
\put(98,36){$\cdot$}
\put(97.85,35.5){$\cdot$}
\put(97.7,35){$\cdot$}
\put(97.54,34.5){$\cdot$}
\put(97.37,34){$\cdot$}
\put(97.4,33.5){$\cdot$}
\put(97.02,33){$\cdot$}
\put(96.84,32.5){$\cdot$}
\put(96.65,32){$\cdot$}
\put(96.46,31.5){$\cdot$}
\put(96.25,31){$\cdot$}
\put(96.04,30.5){$\cdot$}
\put(95.83,30){$\cdot$}
\put(95.61,29.5){$\cdot$}
\put(95.38,29){$\cdot$}
\put(95.14,28.5){$\cdot$}
\put(94.9,28){$\cdot$}
\put(94.65,27.5){$\cdot$}
\put(94.4,27){$\cdot$}
\put(94.13,26.5){$\cdot$}
\put(93.86,26){$\cdot$}
\put(93.59,25.5){$\cdot$}
\put(93.3,25){$\cdot$}
\end{picture}
\]
\vspace{0.3cm}
\begin{center}
\begin{minipage}{25pc}
{\small {\bf Fig.\,7:} \textbf{Bloch sphere}. The point $P$ on the surface presents a pure state, and the points inside the ball present mixed states.}
\end{minipage}
\end{center}
\end{figure}
In the case of mixed state for the length of the polarization vector we have $0<|\boldsymbol{P}|^2<1$. Therefore, the density matrix of the qubit can be represented by a point in three-dimensional space. That is, there exists one-to-one correspondence between the density matrix and the points of the unit ball. Pure states correspond to the points on the surface of this ball, and mixed states are described by the points inside the ball. Bloch sphere allows one to illustrate as a \emph{classical domain} arises from a \emph{quantum domain} in the process of decoherence \cite{Zur03}. When the system takes two possible values (positions) `up' and `down' along $Z$-axis, then in the limits of sphere the points on this axis present a totality of classical states, which can be appeared in the result of decoherence.

Carrying on the analogy between qubits and spinors (twistors), we see that there is a deep relationship between spinor structure (twistor programme) and theory of quantum information on the one hand and decoherence theory on the other hand. The underlying spinor structure presents by itself the mathematics that works on the level of nonlocal quantum substrate.

\section{Modulo 8 periodicity and particle representations of $\spin_+(1,3)$}
\begin{thm}
The action of the group $BW_{\R}\simeq\dZ_8$ induces modulo 2 periodic relations on the system of real representations of the group $\spin_+(1,3)\simeq\SL(2,\C)$.
\end{thm}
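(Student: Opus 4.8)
The plan is to carry out, for the family $\boldsymbol{\tau}_{l\dot l}$, the same cycle-by-cycle analysis just used for the idempotent groups. Recall that $\boldsymbol{\tau}_{l\dot l}=\boldsymbol{\tau}_{k/2,r/2}$ carries the spinor structure $\C_2^{\otimes k}\otimes\overset{\ast}{\C}_2^{\otimes r}$; by (\ref{TenProd}), each conjugate factor $\overset{\ast}{\C}_2$ entering the product exactly as $\C_2$ does, this is isomorphic to $\C_{2n}$ with $n=k+r$. Over $\F=\R$ the algebra $\C_{2n}$ is the complexification of a real Clifford algebra $\cl_{p,q}$ with $p+q=2n$, and since $\C_2\simeq\cl^+_{1,3}\simeq\cl_{3,0}$, adjoining one biquaternionic factor to the structure of $\boldsymbol{\tau}_{l\dot l}$ is the operation on the real counterpart $\cl_{p,q}$ whose type is read off the Budinich--Trautman diagram through $\cl^+_{p,q}\overset{h}{\longrightarrow}\cl_{p,q}$, $q-p=h+8r$. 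Thus the list of $\boldsymbol{\tau}_{l\dot l}$, ordered by increasing rank $n$, inherits the action of $BW_\R\simeq\dZ_8$, and the task reduces to describing this action on the real and quaternionic members of the list (the $(l,\dot l)\oplus(\dot l,l)$--type objects, for which the real counterpart $\cl_{p,q}$ has division ring $\R$ or $\BH$, i.e. sits on a central simple, non-complex and non-double node of the spinorial clock).

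Concretely I would take $\boldsymbol{\tau}_{00}$ ($\C_0\simeq\R$) as the origin of the first cycle and step through the eight transitions $\cl_{0,q}\overset{h}{\longrightarrow}\cl_{0,q+1}$, $q=0,\ldots,7$, as in the proof of the modulo $4$ theorem: each step adjoins one $\C_2$ (raising $k$ or $r$, hence $n$, by one), while the division ring of the real counterpart runs through $\R\to\C\to\BH\to\BH\oplus\BH\to\BH\to\C\to\R\to\R\oplus\R\to\R$. Discarding the complex and double nodes, the real and quaternionic representations are seen to occur at every other position of the orbit, and consecutive ones are linked by the isomorphism
\[
\C_2^{\otimes(k+1)}\otimes\overset{\ast}{\C}_2^{\otimes(r+1)}\;\simeq\;\bigl(\C_2^{\otimes k}\otimes\overset{\ast}{\C}_2^{\otimes r}\bigr)\otimes\bigl(\C_2\otimes\overset{\ast}{\C}_2\bigr),
\]
i.e. the modulo $2$ relation $\boldsymbol{\tau}_{l+\frac12,\dot l+\frac12}\longleftrightarrow\boldsymbol{\tau}_{l\dot l}\otimes\boldsymbol{\tau}_{\frac12\frac12}$ at the level of spinor structures ($\C_{2(n+2)}\simeq\C_{2n}\otimes\C_4$), several applications of which make up one full cycle of $BW_\R$. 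This is the exact analogue of $(\dZ_2)^{\otimes(k+4)}\simeq(\dZ_2)^{\otimes k}\otimes(\dZ_2)^{\otimes4}$ from the previous theorem. Running the argument on the second, third, $\ldots$ cycles --- generated from the first exactly as the higher-order spinorial chessboards were built, via $\cl_{1,q}\simeq\cl_{1,0}\otimes\cl_{0,q}$, $\cl_{2,q}\simeq\cl_{2,0}\otimes\cl_{0,q}$, $\ldots$ --- shows the relation is stable, which is the claim.

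The step I expect to be the main obstacle is making the restriction to the \emph{real} representations precise and checking that $BW_\R$ preserves it. Since the real counterpart of $\C_{2n}$ depends not only on $n=k+r$ but also on $k-r$ (the spinorial chessboard being two-dimensional in $(p,q)$), one must verify that the real form cut out by the $k$ undotted and $r$ dotted biquaternionic factors indeed lands on the node $q-p=h+8r$ predicted by the spinorial clock, and that passing to the even subalgebra $\cl^+_{p,q}$ --- where $\spin_+(1,3)$ and the carrier spaces $\Sym_{(k,r)}$ actually live, through $\cl^+_{1,3}\simeq\cl_{3,0}\simeq\C_2$ --- does not disturb the period-$2$ pattern. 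Once the dictionary ``the $\dZ_8$-cycle of $BW_\R$, cut down to its real and quaternionic members, collapses to a period-$2$ recurrence carried by the relation $\boldsymbol{\tau}_{l+\frac12,\dot l+\frac12}\longleftrightarrow\boldsymbol{\tau}_{l\dot l}\otimes\boldsymbol{\tau}_{\frac12\frac12}$'' has been set up, the rest is the same step-by-step bookkeeping through the cycles as in the proof of the preceding theorem.
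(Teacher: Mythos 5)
The proposal does not reproduce the content of the theorem as the paper understands it, and it omits the mechanism that actually generates the modulo~2 structure. In the paper, the period~2 is not the complex Bott-type relation $\C_{2(n+2)}\simeq\C_{2n}\otimes\C_4$, i.e.\ it is \emph{not} the pairing $\boldsymbol{\tau}_{l+\frac12,\dot l+\frac12}\leftrightarrow\boldsymbol{\tau}_{l\dot l}\otimes\boldsymbol{\tau}_{\frac12\frac12}$ you propose. Rather it is the alternation, as one moves one step at a time along the first row $\cl_{0,0}\to\cl_{0,1}\to\cl_{0,2}\to\cdots$ of the spinorial chessboard, between a representation $\boldsymbol{\tau}^{r}_{l\dot l}$ or $\boldsymbol{\tau}^{q}_{l\dot l}$ carried by an even-dimensional algebra $\cl_{0,2m}$ and a \emph{quotient representation} ${}^\epsilon\boldsymbol{\tau}^{r}_{l\dot l}$ or ${}^\epsilon\boldsymbol{\tau}^{q}_{l\dot l}$ carried by the adjacent odd-dimensional algebra $\cl_{0,2m+1}$. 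The essential ingredients you are missing are: the decomposition $\cl_{0,q}\simeq\cl^+_{0,q}\oplus\cl^+_{0,q}\simeq\cl_{0,q-1}\oplus\cl_{0,q-1}$ for $q$ odd, realized by the central idempotents $\lambda^\pm=\tfrac{1}{2}(1\pm\e_1\cdots\e_q)$; the homomorphism $\epsilon:\cl_{0,q}\to{}^\epsilon\cl_{0,q-1}\simeq\cl_{0,q}/\Ker\epsilon$ with $\Ker\epsilon=\{\cA^1-\omega\cA^1\}$; and the resulting notion of quotient representation ${}^\epsilon\boldsymbol{\tau}$. It is precisely this $\epsilon$--construction that makes the period~2 appear: each cycle of $BW_\R$ produces, step by step, the eight objects $\boldsymbol{\tau}^r_{0,0},\ {}^\epsilon\boldsymbol{\tau}^r_{0,0},\ \boldsymbol{\tau}^q_{0,\frac12},\ {}^\epsilon\boldsymbol{\tau}^q_{0,\frac12},\ \boldsymbol{\tau}^q_{0,1},\ {}^\epsilon\boldsymbol{\tau}^q_{0,1},\ \boldsymbol{\tau}^r_{0,\frac32},\ {}^\epsilon\boldsymbol{\tau}^r_{0,\frac32}$, with each pair $(\boldsymbol{\tau},{}^\epsilon\boldsymbol{\tau})$ occupying a single spin line. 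Without the idempotent splitting and the $\epsilon$ map the claim in the theorem is not established.

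Your worry about the $(k,r)$-dependence and the two-dimensional chessboard is legitimate but is in fact resolved in the paper much more simply than you anticipate: the proof restricts to the single column $\cl_{0,q}$ (giving the spin line $\boldsymbol{\tau}_{0,\dot l}$), and extends outward to the whole block by the same rule $\cl_{p,q}\simeq\cl_{p,0}\otimes\cl_{0,q}$ that was already used to fill the spinorial chessboard. Your concluding ``relation is stable'' step would need exactly this, but with the quotient representation mechanism replacing the $\C_{2n}\otimes\C_4$ relation. As it stands, what you have sketched is a version of the complex two-fold periodicity of $\C_n$ rather than the real, division-ring-sensitive $\dZ_2$ pattern between representations and their quotients that the theorem asserts.
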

\begin{proof}
First of all, for the algebras of type $\cl_{0,q}$ ($q\equiv 1\pmod{2}$) there exists a decomposition $\cl_{0,q}\simeq\cl^+_{0,q}\oplus\cl^+_{0,q}$, where $\cl^+_{0,q}$ is an even subalgebra of $\cl_{0,q}$. In virtue of an isomorphism $\cl^+_{0,q}\simeq\cl_{0,q-1}$ we have
$\cl_{0,q}\simeq\cl_{0,q-1}\oplus\cl_{0,q-1}$. This decomposition
can be represented by the following scheme:
\[
\unitlength=0.5mm
\begin{picture}(70,50)
\put(35,40){\vector(2,-3){15}} \put(35,40){\vector(-2,-3){15}}
\put(28.25,42){$\cl_{0,q}$} \put(16,28){$\lambda_{+}$}
\put(49.5,28){$\lambda_{-}$} \put(9,9.20){$\cl_{0,q-1}$}
\put(47,9){$\cl_{0,q-1}$} \put(32.5,10){$\oplus$}
\end{picture}
\]
Here \emph{central idempotents}
\[
\lambda^+=\frac{1+\e_1\e_2\cdots\e_{q}}{2},\quad
\lambda^-=\frac{1-\e_1\e_2\cdots\e_{q}}{2}
\]
satisfy the relations $(\lambda^+)^2=\lambda^+$,
$(\lambda^-)^2=\lambda^-$, $\lambda^+\lambda^-=0$.
Further, there is a homomorphic mapping
\begin{equation}\label{Hmap}
\epsilon:\;\cl_{0,q}\longrightarrow{}^\epsilon\cl_{0,q-1},
\end{equation}
where
\[
{}^\epsilon\cl_{0,q-1}\simeq\cl_{0,q}/\Ker\,\epsilon
\]
is a quotient algebra, $\Ker\,\epsilon=\{\cA^1-\omega\cA^1\}$ is a
kernel of the homomorphism $\epsilon$, $\cA^1\in\cl_{0,q-1}$ is an arbitrary element of $\cl_{0,q-1}$, and
$\omega=\e_1\e_2\cdots\e_q\in\cl_{0,q}$ is a volume element of $\cl_{0,q}$.
Therefore, in virtue of the homomorphic mapping (\ref{Hmap}) we can replace the double representations of $\spin_+(1,3)$ by quotient representations ${}^\epsilon\boldsymbol{\tau}^r_{l\dot{l}}$ and ${}^\epsilon\boldsymbol{\tau}^q_{l\dot{l}}$, where ${}^\epsilon\boldsymbol{\tau}^r_{l\dot{l}}$ is a real quotient representation, and ${}^\epsilon\boldsymbol{\tau}^q_{l\dot{l}}$ is a quaternionic quotient representation. About detailed structure of the quotient representations of $\spin_+(1,3)$ see \cite{Var04,Var1402}.

On the underlying spinor structure the first step $\cl^+_{0,1}\overset{1}{\longrightarrow}\cl_{0,1}$ ($\cl_{0,0}\overset{1}{\longrightarrow}\cl_{0,1}$) of the Brauer-Wall group $BW_{\R}\simeq\dZ_8$ generates a transition $\boldsymbol{\tau}^r_{0,0}\overset{1}{\longrightarrow}{}^\epsilon\boldsymbol{\tau}^r_{0,0}$, where $\boldsymbol{\tau}^r_{0,0}$ is a real representation of $\spin_+(1,3)$ associated with the algebra $\cl_{0,0}$ ($p-q\equiv 0\pmod{8}$, $\K\simeq\R$), ${}^\epsilon\boldsymbol{\tau}^r_{0,0}$ is a real quotient representation of $\spin_+(1,3)$ associated with the quotient algebra ${}^\epsilon\cl_{0,0}\simeq\cl_{0,1}/\Ker\,\epsilon$, since in virtue of $\cl^+_{0,1}\simeq\cl_{0,0}$ we have $\cl_{0,1}\simeq\cl_{0,0}\oplus i\cl_{0,0}$. The second step $\cl^+_{0,2}\overset{2}{\longrightarrow}\cl_{0,2}$ ($\cl_{0,1}\overset{2}{\longrightarrow}\cl_{0,2}$) generates ${}^\epsilon\boldsymbol{\tau}^r_{0,0}\overset{2}{\longrightarrow}\boldsymbol{\tau}^q_{0,\frac{1}{2}}$, where $\boldsymbol{\tau}^q_{0,\frac{1}{2}}$ is a quaternionic representation of $\spin_+(1,3)$ associated with the algebra $\cl_{0,2}$ ($p-q\equiv 6\pmod{8}$, $\K\simeq\BH$). The third step $\cl_{0,2}\overset{3}{\longrightarrow}\cl_{0,3}$ of $BW_{\R}\simeq\dZ_8$ induces a transition $\boldsymbol{\tau}^q_{0,\frac{1}{2}}\overset{3}{\longrightarrow}{}^\epsilon\boldsymbol{\tau}^q_{0,\frac{1}{2}}$, where ${}^\epsilon\boldsymbol{\tau}^q_{0,\frac{1}{2}}$ is a quaternionic quotient representation associated with the quotient algebra ${}^\epsilon\cl_{0,2}\simeq\cl_{0,3}/\Ker\,\epsilon$. The following step $\cl_{0,3}\overset{4}{\longrightarrow}\cl_{0,4}$ generates ${}^\epsilon\boldsymbol{\tau}^q_{0,\frac{1}{2}}\overset{4}{\longrightarrow}\boldsymbol{\tau}^q_{0,1}$, where $\boldsymbol{\tau}^q_{0,1}$ is a quaternionic representation associated with $\cl_{0,4}$ ($p-q\equiv 4\pmod{8}$, $\K\simeq\BH$) in the underlying spinor structure. The fifth step $\cl_{0,4}\overset{5}{\longrightarrow}\cl_{0,5}$ of the first cycle of $BW_{\R}\simeq\dZ_8$ leads to $\boldsymbol{\tau}^q_{0,1}\overset{5}{\longrightarrow}{}^\epsilon\boldsymbol{\tau}^q_{0,1}$, where ${}^\epsilon\boldsymbol{\tau}^q_{0,1}$ is a quaternionic quotient representation associated with ${}^\epsilon\cl_{0,4}\simeq\cl_{0,5}/\Ker\,\epsilon$. In turn, the sixth $\cl_{0,5}\overset{6}{\longrightarrow}\cl_{0,6}$ and seventh $\cl_{0,6}\overset{7}{\longrightarrow}\cl_{0,7}$ steps generate transitions ${}^\epsilon\boldsymbol{\tau}^q_{0,1}\overset{6}{\longrightarrow}\boldsymbol{\tau}^r_{0,\frac{3}{2}}$ and $\boldsymbol{\tau}^r_{0,\frac{3}{2}}\overset{7}{\longrightarrow}{}^\epsilon\boldsymbol{\tau}^r_{0,\frac{3}{2}}$, where $\boldsymbol{\tau}^r_{0,\frac{3}{2}}$ is a real representation of $\spin_+(1,3)$ associated with the algebra $\cl_{0,6}$ ($p-q\equiv 2\pmod{8}$, $\K\simeq\R$) and ${}^\epsilon\boldsymbol{\tau}^r_{0,\frac{3}{2}}$ is a real quotient representation associated with ${}^\epsilon\cl_{0,6}\simeq\cl_{0,7}/\Ker\,\epsilon$. The eighth step $\cl_{0,7}\overset{8}{\longrightarrow}\cl_{0,8}$ finishes the first cycle ($r=0$) of $BW_{\R}\simeq\dZ_8$ and induces a transition ${}^\epsilon\boldsymbol{\tau}^r_{0,\frac{3}{2}}\overset{8}{\longrightarrow}\boldsymbol{\tau}^r_{0,2}$, where $\boldsymbol{\tau}^r_{0,2}$ is a real representation associated with the algebra $\cl_{0,8}$. The first cycle generates the first eight representations ($\boldsymbol{\tau}^r_{0,0}$, ${}^\epsilon\boldsymbol{\tau}^r_{0,0}$, $\boldsymbol{\tau}^q_{0,\frac{1}{2}}$, ${}^\epsilon\boldsymbol{\tau}^q_{0,\frac{1}{2}}$, $\boldsymbol{\tau}^q_{0,1}$, ${}^\epsilon\boldsymbol{\tau}^q_{0,1}$, $\boldsymbol{\tau}^r_{0,\frac{3}{2}}$,  ${}^\epsilon\boldsymbol{\tau}^r_{0,\frac{3}{2}}$) associated with the first eight squares ($\cl_{0,q}$, $q=0,\ldots,7$) of the spinorial chessboard (see Fig.\,1). However, the pairs ($\boldsymbol{\tau}^r_{0,0}$, ${}^\epsilon\boldsymbol{\tau}^r_{0,0}$), ($\boldsymbol{\tau}^q_{0,\frac{1}{2}}$, ${}^\epsilon\boldsymbol{\tau}^q_{0,\frac{1}{2}}$), ($\boldsymbol{\tau}^q_{0,1}$, ${}^\epsilon\boldsymbol{\tau}^q_{0,1}$), ($\boldsymbol{\tau}^r_{0,\frac{3}{2}}$,  ${}^\epsilon\boldsymbol{\tau}^r_{0,\frac{3}{2}}$) present particles of the same spin $s$, respectively, $s=0,\frac{1}{2},1,\frac{3}{2}$. Particles within the pair exist in a quantum superposition. Therefore, the pair ($\boldsymbol{\tau}^r_{0,0}$, ${}^\epsilon\boldsymbol{\tau}^r_{0,0}$) belongs to spin-0 line, the pair ($\boldsymbol{\tau}^q_{0,\frac{1}{2}}$, ${}^\epsilon\boldsymbol{\tau}^q_{0,\frac{1}{2}}$) belongs to the dual spin-1/2 line and so on. The following eight representations $\boldsymbol{\tau}^r_{\frac{1}{2},0}$, ${}^\epsilon\boldsymbol{\tau}^r_{\frac{1}{2},0}$, $\boldsymbol{\tau}^r_{\frac{1}{2},\frac{1}{2}}$, ${}^\epsilon\boldsymbol{\tau}^r_{\frac{1}{2},\frac{1}{2}}$, $\boldsymbol{\tau}^q_{\frac{1}{2},1}$, ${}^\epsilon\boldsymbol{\tau}^q_{\frac{1}{2},1}$, $\boldsymbol{\tau}^q_{\frac{1}{2},\frac{3}{2}}$,  ${}^\epsilon\boldsymbol{\tau}^q_{\frac{1}{2},\frac{3}{2}}$ are generated by the first cycle also via the rule $\cl_{2,q}\simeq\cl_{2,0}\otimes\cl_{0,q}$, $q=0,\ldots,7$, here $\cl_{0,2}\rightsquigarrow\boldsymbol{\tau}^r_{\frac{1}{2},0}$. In like manner (via the action of the first cycle of $BW_{\R}\simeq\dZ_8$) we obtain the first representation block of $\spin_+(1,3)$ (see Fig.\,8) associated with the spinorial chessboard.
\begin{figure}[ht]
\[
\dgARROWPARTS=28
\dgARROWLENGTH=0.5em
\dgHORIZPAD=1.9em 
\dgVERTPAD=2.3ex 
\begin{diagram}
\node[5]{\vdots}\\
\node[5]{\overset{\boldsymbol{\tau}^r_{2,2}}{\bullet}}\\
\node[4]{\overset{\boldsymbol{\tau}^q_{\frac{3}{2},2}}{\bullet}}
\arrow{n,..,-}
\node[2]{\overset{\boldsymbol{\tau}^r_{2,\frac{3}{2}}}{\bullet}}
\arrow{n,..,-}\\
\node[3]{\overset{\boldsymbol{\tau}^q_{1,2}}{\bullet}}
\arrow[2]{n,..,-}
\node[2]{\overset{\boldsymbol{\tau}^r_{\frac{3}{2},\frac{3}{2}}}{\bullet}}
\node[2]{\overset{\boldsymbol{\tau}^q_{2,1}}{\bullet}}
\arrow[2]{n,..,-}\\
\node[2]{\overset{\boldsymbol{\tau}^r_{\frac{1}{2},2}}{\bullet}}
\arrow[3]{n,..,-}
\node[2]{\overset{\boldsymbol{\tau}^q_{1,\frac{3}{2}}}{\bullet}}
\node[2]{\overset{\boldsymbol{\tau}^r_{\frac{3}{2},1}}{\bullet}}
\node[2]{\overset{\boldsymbol{\tau}^q_{2,\frac{1}{2}}}{\bullet}}
\arrow[3]{n,..,-}\\
\node{\overset{\boldsymbol{\tau}^r_{0,2}}{\bullet}}
\arrow[4]{n,..,-}\arrow[4]{s,..,-}
\node[2]{\overset{\boldsymbol{\tau}^q_{\frac{1}{2},\frac{3}{2}}}{\bullet}}
\node[2]{\overset{\boldsymbol{\tau}^r_{1,1}}{\bullet}}
\node[2]{\overset{\boldsymbol{\tau}^q_{\frac{3}{2},\frac{1}{2}}}{\bullet}}
\node[2]{\overset{\boldsymbol{\tau}^r_{2,0}}{\bullet}}
\arrow[4]{n,..,-}\arrow[4]{s,..,-}\\
\node[2]{\overset{\boldsymbol{\tau}^r_{0,\frac{3}{2}}}{\bullet}}
\arrow[3]{s,..,-}
\node[2]{\overset{\boldsymbol{\tau}^q_{\frac{1}{2},1}}{\bullet}}
\node[2]{\overset{\boldsymbol{\tau}^r_{1,\frac{1}{2}}}{\bullet}}
\node[2]{\overset{\boldsymbol{\tau}^q_{\frac{3}{2},0}}{\bullet}}
\arrow[3]{s,..,-}\\
\node[3]{\overset{\boldsymbol{\tau}^q_{0,1}}{\bullet}}
\arrow[2]{s,..,-}
\node[2]{\overset{\boldsymbol{\tau}^r_{\frac{1}{2},\frac{1}{2}}}{\bullet}}
\node[2]{\overset{\boldsymbol{\tau}^q_{1,0}}{\bullet}}
\arrow[2]{s,..,-}\\
\node[4]{\overset{\boldsymbol{\tau}^q_{0,\frac{1}{2}}}{\bullet}}
\arrow{s,..,-}
\node[2]{\overset{\boldsymbol{\tau}^r_{\frac{1}{2},0}}{\bullet}}
\arrow{s,..,-}\\
\node[5]{\overset{\boldsymbol{\tau}^r_{0,0}}{\bullet}}
\arrow[5]{e}
\arrow[5]{w,-}
\end{diagram}
\]
\begin{center}\begin{minipage}{32pc}{\small {\bf Fig.\,8:} The first representation block of the group $\spin_+(1,3)$ associated with the spinorial chessboard of the first order (see Fig.\,1). This block is generated by the first cycle of the group $BW_{\R}\simeq\dZ_8$.}\end{minipage}\end{center}
\end{figure}
From Fig.\,8 it follows that spin-lines within the block are divided into real and quaternionic spin-lines.

The second cycle ($r=1$) of the group $BW_{\R}\simeq\dZ_8$ consists of the following eight steps:\\
1) $h=1$, $r=1$, $\cl_{0,8}\overset{1}{\longrightarrow}\cl_{0,9}\;\rightsquigarrow\;$ $\boldsymbol{\tau}^r_{0,2}\overset{1}{\longrightarrow}{}^\epsilon\boldsymbol{\tau}^r_{0,2}$;\\
2) $h=2$, $r=1$, $\cl_{0,9}\overset{2}{\longrightarrow}\cl_{0,10}\;\rightsquigarrow\;$ ${}^\epsilon\boldsymbol{\tau}^r_{0,2}\overset{2}{\longrightarrow}\boldsymbol{\tau}^q_{0,\frac{5}{2}}$;\\
3) $h=3$, $r=1$, $\cl_{0,10}\overset{3}{\longrightarrow}\cl_{0,11}\;\rightsquigarrow\;$ $\boldsymbol{\tau}^q_{0,\frac{5}{2}}\overset{3}{\longrightarrow}{}^\epsilon\boldsymbol{\tau}^q_{0,\frac{5}{2}}$;\\
4) $h=4$, $r=1$, $\cl_{0,11}\overset{4}{\longrightarrow}\cl_{0,12}\;\rightsquigarrow\;$ ${}^\epsilon\boldsymbol{\tau}^q_{0,\frac{5}{2}}\overset{4}{\longrightarrow}\boldsymbol{\tau}^q_{0,3}$;\\
5) $h=5$, $r=1$, $\cl_{0,12}\overset{5}{\longrightarrow}\cl_{0,13}\;\rightsquigarrow\;$ $\boldsymbol{\tau}^q_{0,3}\overset{5}{\longrightarrow}{}^\epsilon\boldsymbol{\tau}^q_{0,3}$;\\
6) $h=6$, $r=1$, $\cl_{0,13}\overset{6}{\longrightarrow}\cl_{0,14}\;\rightsquigarrow\;$ ${}^\epsilon\boldsymbol{\tau}^q_{0,3}\overset{6}{\longrightarrow}\boldsymbol{\tau}^r_{0,\frac{7}{2}}$;\\
7) $h=7$, $r=1$, $\cl_{0,14}\overset{7}{\longrightarrow}\cl_{0,15}\;\rightsquigarrow\;$ $\boldsymbol{\tau}^r_{0,\frac{7}{2}}\overset{7}{\longrightarrow}{}^\epsilon\boldsymbol{\tau}^r_{0,\frac{7}{2}}$;\\
8) $h=8$, $r=1$, $\cl_{0,15}\overset{8}{\longrightarrow}\cl_{0,16}\;\rightsquigarrow\;$ ${}^\epsilon\boldsymbol{\tau}^r_{0,\frac{7}{2}}\overset{8}{\longrightarrow}\boldsymbol{\tau}^r_{0,4}$.

Further, after the eighth cycle ($r=7$), generated by the steps\\
1) $h=1$, $r=7$, $\cl_{0,56}\overset{1}{\longrightarrow}\cl_{0,57}\;\rightsquigarrow\;$ $\boldsymbol{\tau}^r_{0,14}\overset{1}{\longrightarrow}{}^\epsilon\boldsymbol{\tau}^r_{0,14}$;\\
2) $h=2$, $r=7$, $\cl_{0,57}\overset{2}{\longrightarrow}\cl_{0,58}\;\rightsquigarrow\;$ ${}^\epsilon\boldsymbol{\tau}^r_{0,14}\overset{2}{\longrightarrow}\boldsymbol{\tau}^q_{0,\frac{29}{2}}$;\\
3) $h=3$, $r=7$, $\cl_{0,58}\overset{3}{\longrightarrow}\cl_{0,59}\;\rightsquigarrow\;$ $\boldsymbol{\tau}^q_{0,\frac{29}{2}}\overset{3}{\longrightarrow}{}^\epsilon\boldsymbol{\tau}^q_{0,\frac{29}{2}}$;\\
4) $h=4$, $r=7$, $\cl_{0,59}\overset{4}{\longrightarrow}\cl_{0,60}\;\rightsquigarrow\;$ ${}^\epsilon\boldsymbol{\tau}^q_{0,\frac{29}{2}}\overset{4}{\longrightarrow}\boldsymbol{\tau}^q_{0,15}$;\\
5) $h=5$, $r=7$, $\cl_{0,60}\overset{5}{\longrightarrow}\cl_{0,61}\;\rightsquigarrow\;$ $\boldsymbol{\tau}^q_{0,15}\overset{5}{\longrightarrow}{}^\epsilon\boldsymbol{\tau}^q_{0,15}$;\\
6) $h=6$, $r=7$, $\cl_{0,61}\overset{6}{\longrightarrow}\cl_{0,62}\;\rightsquigarrow\;$ ${}^\epsilon\boldsymbol{\tau}^q_{0,15}\overset{6}{\longrightarrow}\boldsymbol{\tau}^r_{0,\frac{31}{2}}$;\\
7) $h=7$, $r=7$, $\cl_{0,62}\overset{7}{\longrightarrow}\cl_{0,63}\;\rightsquigarrow\;$ $\boldsymbol{\tau}^r_{0,\frac{31}{2}}\overset{7}{\longrightarrow}{}^\epsilon\boldsymbol{\tau}^r_{0,\frac{31}{2}}$;\\
8) $h=8$, $r=7$, $\cl_{0,63}\overset{8}{\longrightarrow}\cl_{0,64}\;\rightsquigarrow\;$ ${}^\epsilon\boldsymbol{\tau}^r_{0,\frac{31}{2}}\overset{8}{\longrightarrow}\boldsymbol{\tau}^r_{0,16}$,\\
we come to a fractal self-similar algebraic structure of the second order which induces on the system of real representations of $\spin_+(1,3)$ modulo 2 periodic structure shown on the Fig.\,9.
\begin{figure}[ht]
\[
\dgARROWPARTS=28
\dgARROWLENGTH=0.5em
\dgHORIZPAD=1.9em 
\dgVERTPAD=2.3ex 
\begin{diagram}
\node[5]{\vdots}\\
\node[5]{\overset{\boldsymbol{\tau}^r_{16,16}}{\bullet}}\\
\node[4]{\overset{\boldsymbol{\tau}^q_{\frac{31}{2},16}}{\bullet}}
\arrow[2]{n,..,-}
\node[2]{\overset{\boldsymbol{\tau}^r_{16,\frac{31}{2}}}{\bullet}}
\arrow[2]{n,..,-}\\
\node[3]{\overset{\boldsymbol{\tau}^q_{15,16}}{\bullet}}
\arrow[3]{n,..,-}
\node[2]{\overset{\boldsymbol{\tau}^r_{\frac{31}{2},\frac{31}{2}}}{\bullet}}
\node[2]{\overset{\boldsymbol{\tau}^q_{16,15}}{\bullet}}
\arrow[3]{n,..,-}\\
\node[2]{\overset{\boldsymbol{\tau}^r_{\frac{29}{2},16}}{\bullet}}
\arrow[4]{n,..,-}
\node[2]{\overset{\boldsymbol{\tau}^q_{15,\frac{31}{2}}}{\bullet}}
\node[2]{\overset{\boldsymbol{\tau}^r_{\frac{31}{2},15}}{\bullet}}
\node[2]{\overset{\boldsymbol{\tau}^q_{16,\frac{29}{2}}}{\bullet}}
\arrow[4]{n,..,-}\\
\node{\overset{\boldsymbol{\tau}^r_{14,16}}{\bullet}}
\arrow[5]{n,..,-}\arrow[5]{s,..,-}
\node[2]{\overset{\boldsymbol{\tau}^q_{\frac{29}{2},\frac{31}{2}}}{\bullet}}
\node[2]{\overset{\boldsymbol{\tau}^r_{15,15}}{\bullet}}
\node[2]{\overset{\boldsymbol{\tau}^q_{\frac{31}{2},\frac{29}{2}}}{\bullet}}
\node[2]{\overset{\boldsymbol{\tau}^r_{16,14}}{\bullet}}
\arrow[5]{n,..,-}\arrow[5]{s,..,-}\\
\node[2]{\overset{\boldsymbol{\tau}^r_{14,\frac{31}{2}}}{\bullet}}
\arrow[4]{s,..,-}
\node[2]{\overset{\boldsymbol{\tau}^q_{\frac{29}{2},15}}{\bullet}}
\node[2]{\overset{\boldsymbol{\tau}^r_{15,\frac{29}{2}}}{\bullet}}
\node[2]{\overset{\boldsymbol{\tau}^q_{\frac{31}{2},14}}{\bullet}}
\arrow[4]{s,..,-}\\
\node[3]{\overset{\boldsymbol{\tau}^q_{14,15}}{\bullet}}
\arrow[3]{s,..,-}
\node[2]{\overset{\boldsymbol{\tau}^r_{\frac{29}{2},\frac{29}{2}}}{\bullet}}
\node[2]{\overset{\boldsymbol{\tau}^q_{15,14}}{\bullet}}
\arrow[3]{s,..,-}\\
\node[4]{\overset{\boldsymbol{\tau}^q_{14,\frac{29}{2}}}{\bullet}}
\arrow[2]{s,..,-}
\node[2]{\overset{\boldsymbol{\tau}^r_{\frac{29}{2},14}}{\bullet}}
\arrow[2]{s,..,-}\\
\node[5]{\overset{\boldsymbol{\tau}^r_{14,14}}{\bullet}}\\
\node[5]{\vdots}\\
\node[5]{\overset{\boldsymbol{\tau}^r_{4,4}}{\bullet}}\\
\node[4]{\overset{\boldsymbol{\tau}^q_{\frac{7}{2},4}}{\bullet}}
\arrow[2]{n,..,-}
\node[2]{\overset{\boldsymbol{\tau}^r_{4,\frac{7}{2}}}{\bullet}}
\arrow[2]{n,..,-}\\
\node[3]{\overset{\boldsymbol{\tau}^q_{3,4}}{\bullet}}
\arrow[3]{n,..,-}
\node[2]{\overset{\boldsymbol{\tau}^r_{\frac{7}{2},\frac{7}{2}}}{\bullet}}
\node[2]{\overset{\boldsymbol{\tau}^q_{4,3}}{\bullet}}
\arrow[3]{n,..,-}\\
\node[2]{\overset{\boldsymbol{\tau}^r_{\frac{5}{2},4}}{\bullet}}
\arrow[4]{n,..,-}
\node[2]{\overset{\boldsymbol{\tau}^q_{3,\frac{7}{2}}}{\bullet}}
\node[2]{\overset{\boldsymbol{\tau}^r_{\frac{7}{2},3}}{\bullet}}
\node[2]{\overset{\boldsymbol{\tau}^q_{4,\frac{5}{2}}}{\bullet}}
\arrow[4]{n,..,-}\\
\node{\overset{\boldsymbol{\tau}^r_{2,4}}{\bullet}}
\arrow[5]{n,..,-}\arrow[4]{s,..,-}
\node[2]{\overset{\boldsymbol{\tau}^q_{\frac{5}{2},\frac{7}{2}}}{\bullet}}
\node[2]{\overset{\boldsymbol{\tau}^r_{3,3}}{\bullet}}
\node[2]{\overset{\boldsymbol{\tau}^q_{\frac{7}{2},\frac{5}{2}}}{\bullet}}
\node[2]{\overset{\boldsymbol{\tau}^r_{4,2}}{\bullet}}
\arrow[5]{n,..,-}\arrow[4]{s,..,-}\\
\node[2]{\overset{\boldsymbol{\tau}^r_{2,\frac{7}{2}}}{\bullet}}
\arrow[3]{s,..,-}
\node[2]{\overset{\boldsymbol{\tau}^q_{\frac{5}{2},3}}{\bullet}}
\node[2]{\overset{\boldsymbol{\tau}^r_{3,\frac{5}{2}}}{\bullet}}
\node[2]{\overset{\boldsymbol{\tau}^q_{\frac{7}{2},2}}{\bullet}}
\arrow[3]{s,..,-}\\
\node[3]{\overset{\boldsymbol{\tau}^q_{2,3}}{\bullet}}
\arrow[2]{s,..,-}
\node[2]{\overset{\boldsymbol{\tau}^r_{\frac{5}{2},\frac{5}{2}}}{\bullet}}
\node[2]{\overset{\boldsymbol{\tau}^q_{3,2}}{\bullet}}
\arrow[2]{s,..,-}\\
\node[4]{\overset{\boldsymbol{\tau}^q_{2,\frac{5}{2}}}{\bullet}}
\arrow{s,..,-}
\node[2]{\overset{\boldsymbol{\tau}^r_{\frac{5}{2},2}}{\bullet}}
\arrow{s,..,-}\\
\node[5]{\overset{\boldsymbol{\tau}^r_{2,2}}{\bullet}}\\
\node[4]{\overset{\boldsymbol{\tau}^q_{\frac{3}{2},2}}{\bullet}}
\arrow{n,..,-}
\node[2]{\overset{\boldsymbol{\tau}^r_{2,\frac{3}{2}}}{\bullet}}
\arrow{n,..,-}\\
\node[3]{\overset{\boldsymbol{\tau}^q_{1,2}}{\bullet}}
\arrow[2]{n,..,-}
\node[2]{\overset{\boldsymbol{\tau}^r_{\frac{3}{2},\frac{3}{2}}}{\bullet}}
\node[2]{\overset{\boldsymbol{\tau}^q_{2,1}}{\bullet}}
\arrow[2]{n,..,-}\\
\node[2]{\overset{\boldsymbol{\tau}^r_{\frac{1}{2},2}}{\bullet}}
\arrow[3]{n,..,-}
\node[2]{\overset{\boldsymbol{\tau}^q_{1,\frac{3}{2}}}{\bullet}}
\node[2]{\overset{\boldsymbol{\tau}^r_{\frac{3}{2},1}}{\bullet}}
\node[2]{\overset{\boldsymbol{\tau}^q_{2,\frac{1}{2}}}{\bullet}}
\arrow[3]{n,..,-}\\
\node{\overset{\boldsymbol{\tau}^r_{0,2}}{\bullet}}
\arrow[4]{n,..,-}\arrow[4]{s,..,-}
\node[2]{\overset{\boldsymbol{\tau}^q_{\frac{1}{2},\frac{3}{2}}}{\bullet}}
\node[2]{\overset{\boldsymbol{\tau}^r_{1,1}}{\bullet}}
\node[2]{\overset{\boldsymbol{\tau}^q_{\frac{3}{2},\frac{1}{2}}}{\bullet}}
\node[2]{\overset{\boldsymbol{\tau}^r_{2,0}}{\bullet}}
\arrow[4]{n,..,-}\arrow[4]{s,..,-}\\
\node[2]{\overset{\boldsymbol{\tau}^r_{0,\frac{3}{2}}}{\bullet}}
\arrow[3]{s,..,-}
\node[2]{\overset{\boldsymbol{\tau}^q_{\frac{1}{2},1}}{\bullet}}
\node[2]{\overset{\boldsymbol{\tau}^r_{1,\frac{1}{2}}}{\bullet}}
\node[2]{\overset{\boldsymbol{\tau}^q_{\frac{3}{2},0}}{\bullet}}
\arrow[3]{s,..,-}\\
\node[3]{\overset{\boldsymbol{\tau}^q_{0,1}}{\bullet}}
\arrow[2]{s,..,-}
\node[2]{\overset{\boldsymbol{\tau}^r_{\frac{1}{2},\frac{1}{2}}}{\bullet}}
\node[2]{\overset{\boldsymbol{\tau}^q_{1,0}}{\bullet}}
\arrow[2]{s,..,-}\\
\node[4]{\overset{\boldsymbol{\tau}^q_{0,\frac{1}{2}}}{\bullet}}
\arrow{s,..,-}
\node[2]{\overset{\boldsymbol{\tau}^r_{\frac{1}{2},0}}{\bullet}}
\arrow{s,..,-}\\
\node[5]{\overset{\boldsymbol{\tau}^r_{0,0}}{\bullet}}
\arrow[5]{e}
\arrow[5]{w,-}
\end{diagram}
\]
\begin{center}\begin{minipage}{32pc}{\small {\bf Fig.\,9:} The representation block of the second order of the group $\spin_+(1,3)$ (the main diagonal of this block). This block is generated by the eight cycles of the group $BW_{\R}\simeq\dZ_8$.}\end{minipage}\end{center}
\end{figure}
Thus, we have here the representation block of the second order, which, obviously, can be extended to infinity (to the blocks of any order) via the consecutive cycles of $BW_{\R}\simeq\dZ_8$.
\end{proof}

\end{document}